\documentclass[fleqn,11pt,twoside]{article}

\usepackage{amsthm,amsthm,amssymb, color, xcolor,epsfig, graphics, subfigure}

\usepackage{amsmath, graphicx, latexsym, lscape }

\usepackage[all]{xy}
\usepackage{hyperref}

\makeatletter
\newcommand{\copyrightnote}[2]{{\renewcommand{\thefootnote}{}
 \footnotetext{\small\it
\begin{flushleft}
 \copyright \ #1   #2  
\end{flushleft}}}}

\newcommand{\Name}[1]{\begin{flushleft}
                       \LARGE \bf #1
                       \end{flushleft}\vspace{-3mm}}

\newcommand{\Author}[1]{\begin{flushleft}
                       \it #1 \end{flushleft}}

\newcommand{\Address}[1]{\begin{flushleft}
                       \it #1 \end{flushleft}}

\newcommand{\Date}[1]{\begin{flushleft}
                      \small  \it #1 \end{flushleft}}

%
\newcommand{\evenhead}{Author \ name}
\newcommand{\oddhead}{Article \ name}

%
\renewcommand{\@evenhead}{
\hspace*{-3pt}\raisebox{-15pt}[\headheight][0pt]{\vbox{\hbox to \textwidth
{\thepage \hfil \evenhead}\vskip4pt \hrule}}}
\renewcommand{\@oddhead}{
\hspace*{-3pt}\raisebox{-15pt}[\headheight][0pt]{\vbox{\hbox to \textwidth
{\oddhead \hfil \thepage}\vskip4pt\hrule}}}
\renewcommand{\@evenfoot}{}
\renewcommand{\@oddfoot}{}

%
\setlength{\textwidth}{150.0mm}
\setlength{\textheight}{220.0mm}
\setlength{\oddsidemargin}{0in}
\setlength{\evensidemargin}{0in}
\setlength{\topmargin}{-1cm}
\setlength{\parindent}{5.0mm}

%
\long\def\@makecaption#1#2{%
  \vskip\abovecaptionskip
  \sbox\@tempboxa{\small \textbf{#1.}\ \ #2}%
  \ifdim \wd\@tempboxa >\hsize
    {\small \textbf{#1.}\ \ #2}\par
  \else
    \global \@minipagefalse
    \hb@xt@\hsize{\hfil\box\@tempboxa\hfil}%
  \fi
  \vskip\belowcaptionskip}

%
\newcommand{\JNMPnumberwithin}[3][\arabic]{%
  \@ifundefined{c@#2}{\@nocounterr{#2}}{%
    \@ifundefined{c@#3}{\@nocnterr{#3}}{%
      \@addtoreset{#2}{#3}%
      \@xp\xdef\csname the#2\endcsname{%
        \@xp\@nx\csname the#3\endcsname .\@nx#1{#2}}}}%
}

%
\renewenvironment{proof}[1][\proofname]{\par
  \normalfont
  \topsep6\p@\@plus6\p@ \trivlist
  \item[\hskip\labelsep\textbf{%
    #1\@addpunct{.}}]\ignorespaces
}{%
  \qed\endtrivlist
}

%
\newcommand{\resetfootnoterule} {
  \renewcommand\footnoterule{%
  \kern-3\p@
  \hrule\@width.4\columnwidth
  \kern2.6\p@}
}

%

\renewcommand{\footnoterule}{}

\makeatother

\theoremstyle{definition}
\newtheorem{theorem}{Theorem}[section]
\newtheorem{proposition}[theorem]{Proposition}
\newtheorem{lemma}[theorem]{Lemma}
\newtheorem{corollary}[theorem]{Corollary}

\newtheorem{definition}[theorem]{Definition}
\newtheorem{example}[theorem]{Example}
\newtheorem{remark}[theorem]{Remark}
\newtheorem{notation}[theorem]{Notation}
\newtheorem{assumption}[theorem]{Assumption}

\makeatletter
\@addtoreset{equation}{section}
\makeatother

\newcommand{\newstep}[1]{
	\refstepcounter{#1}
	\paragraph{\textbf{Step \arabic{#1}.}}
}

\setcounter{page}{79}


\begin{document}

\renewcommand{\evenhead}{ {\LARGE\textcolor{blue!10!black!40!green}{{\sf \ \ \ ]ocnmp[}}}\strut\hfill T Mase}
\renewcommand{\oddhead}{ {\LARGE\textcolor{blue!10!black!40!green}{{\sf ]ocnmp[}}}\ \ \ \ \   Domain independence of the Laurent property, and more}

\thispagestyle{empty}
\newcommand{\FistPageHead}[3]{
\begin{flushleft}
\raisebox{8mm}[0pt][0pt]
{\footnotesize \sf
\parbox{150mm}{{Open Communications in Nonlinear Mathematical Physics}\ \  \ {\LARGE\textcolor{blue!10!black!40!green}{]ocnmp[}}
\ \ Vol.4 (2024) pp
#2\hfill {\sc #3}}}\vspace{-13mm}
\end{flushleft}}

\FistPageHead{1}{\pageref{firstpage}--\pageref{lastpage}}{ \ \ Article}

\strut\hfill

\strut\hfill

\copyrightnote{The author(s). Distributed under a Creative Commons Attribution 4.0 International License}

\Name{A study on the domain independence of the Laurent property, the irreducibility and the coprimeness in lattice equations}

\Author{Takafumi Mase}

\Address{Graduate School of Mathematical Sciences, 
the University of Tokyo, 3-8-1 Komaba, Meguro-ku, Tokyo 153-8914, Japan.}

\Date{Received July 20, 2024; Accepted August 2, 2024}

\setcounter{equation}{0}

\begin{abstract}
\noindent 
We study the Laurent property, the irreducibility and the coprimeness for lattice equations (partial difference equations), mainly focusing on how the choice of initial value problem (the choice of domain) affects these properties.
We show that these properties do not depend on the choice of domain as long as they are considered together.
In other words, these properties are inherent to a difference equation.
Applying our result, we discuss the reductions of lattice equations.
We show that any reduction of a Laurent system, even if the lattices have torsion elements, preserves the Laurent property.
\end{abstract}

\label{firstpage}


\section{Introduction}\label{section:introduction}

A difference equation is said to have the Laurent property if every iterate is expressed as a Laurent polynomial in the initial values \cite{Laurent_phenomenon}.
Such a difference equation is often called a Laurent system \cite{Laurent_recurrences, QRT_Laurent, R_systems}.

This property, also known as the Laurent phenomenon, is closely related to the theory of cluster algebras.
It is well known that each cluster variable is expressed as a Laurent polynomial in the initial cluster \cite{cluster1}.
Therefore, one can generate a Laurent system using a specific sequence of mutations for a cluster algebra \cite{Fordy_Marsh, Fordy_Hone}.
LP (Laurent phenomenon) algebras were proposed to generalize cluster algebras \cite{LP_algebra}.
The Laurent property for cluster algebras and LP algebras is shown using the Caterpillar Lemma.

A basic strategy to prove the Laurent property is to show that several iterates are mutually coprime as Laurent polynomials.
For example, this idea is used in Hickerson's method \cite{Gale, Robinson} and the proof of the Caterpillar Lemma \cite{Laurent_phenomenon}.
On the other hand, the coprimeness for a Laurent system is a global property and focuses on the coprimeness of every pair of iterates \cite{coprimeness}.
We will review the definition of the coprimeness for Laurent systems in Definition~\ref{definition:irreducibility_coprimeness}.

We say that a Laurent system satisfies the irreducibility if every iterate is irreducible (or a unit) as a Laurent polynomial in the initial variables (Definition~\ref{definition:irreducibility_coprimeness}).
This property holds for many Laurent systems and often helps us show the Laurent property and the coprimeness.
However, some Laurent systems are known to satisfy the coprimeness but do not have the irreducibility \cite{factorize}.

Some non-Laurent systems can be transformed into Laurent systems by changing dependent variables, and this procedure is called Laurentification \cite{Laurent_superintegrable, coprimeness, QRT_Laurent}.
The transformation of dependent variables corresponds to the singularity patterns of the original equation.
Therefore, Laurentification, in conjunction with the coprimeness, reveals the (global) singularity structure for non-Laurent systems, leading to the concept of the coprimeness for non-Laurent systems \cite{coprimeness}.
In addition, this procedure allows us to compute the degree growth for the original equation \cite{Laurent_superintegrable, extended_hv, lattice_degree}.
When calculating the degree growth, we often use the tropical dynamics of the Laurentified equation.

When considering the Laurent property for lattice equations (partial difference equations), the choice of initial value problem, i.e., the choice of domain, is essential.
Since an initial value problem for a lattice equation has infinitely many initial variables, some conditions on a domain are necessary when we show the Laurent property.
In fact, a Laurent system often loses the Laurent property if considered on a pathological domain \cite{rims}.
The conditions on a domain required for the Laurent property were first considered for several concrete lattice equations in \cite{Laurent_phenomenon} and then clearly stated for discrete Hirota bilinear equations in \cite{rims,investigation}.
The conditions for general lattice equations, defined in Definition~\ref{definition:lightcone_regular}, are described by the future and past light cones \cite{domain}.

Some Laurent systems on multi-dimensional lattices can be obtained from a (sequence of) mutation(s) for a cluster or LP algebra.
For example, Okubo constructed quivers that generate the discrete KdV (Korteweg-de Vries) equation and the Hirota-Miwa equation (the bilinear form of the discrete KP equation), respectively \cite{Okubo_rims,Okubo_HM}, and later obtained an LP-algebraic expression of the bilinear form of the discrete BKP equation \cite{Okubo_BKP}.
Even some non-integrable Laurent systems on multi-dimensional lattices are constructed from LP algebras \cite{toda50}.
Note that a cluster (or LP) algebra that generates a lattice equation defined by a single recurrence relation is not of finite rank since an initial value problem has infinitely many initial variables.
On the other hand, the author and collaborators recently constructed a quiver of finite rank whose commuting sequences of mutations generate a lattice equation on $\mathbb{Z}^2$ \cite[$A_3$-case]{Hone_Kim_Mase}.

Proving the Laurent property for a lattice equation without constructing a cluster or LP algebra, i.e., without relying on the Caterpillar Lemma, is sometimes very difficult because no general method to show that several iterates are mutually coprime is known.
In this process, we need to check that an iterate does not divide another iterate.
For now, the most commonly used strategy to show this is to check that an iterate is irreducible by substituting numerical values for (some of) the initial variables or by focusing on a specific initial variable that appears only in one of two iterates.
This strategy is most effective when working on a specific domain, such as a translation-invariant one \cite{extoda, toda50, exkdv}.

In this paper, we consider how the choice of domain affects the Laurent property, the irreducibility and the coprimeness of an autonomous, invertible difference equation defined by a single recurrence relation.
We will prove that as long as considered together, these three properties do not depend on the choice of domain (Theorem~\ref{theorem:main}).


Let us fix notations and definitions and see what conditions we require on equations.

\begin{notation}\label{notation_lattice}
	Let $L$ be a lattice on which we define an equation.
	In this paper, we use the term ``lattice'' to mean a finitely generated $\mathbb{Z}$-module.

	Let $R$ be a UFD (unique factorization domain), such as $\mathbb{Z}$, $\mathbb{Z}[a_1, a_2, \cdots]$ and a field.
	We consider the Laurent property over the coefficient ring $R$.
	Let $K$ be the field of fractions of $R$ and we use $\overline{K}$ to denote its algebraic closure.

	We say that a nonzero element $f$ of a UFD is irreducible if $f$ does not factorize into a product of two non-unit elements.
	It should be noted that, according to this definition, units are also labeled ``irreducible,'' while in the field of algebra, units are not considered irreducible.
\end{notation}

\begin{remark}\label{remark:basis}
	In our terminology, $L$ can possess a torsion element other than $0$.
	Some Laurent systems with multiple tau functions, such as the bilinear form of the discrete mKdV (modified Korteweg-de Vries) equation (Example~\ref{example:discrete_mkdv}), can be thought of as defined over a lattice with torsion elements.
	By the fundamental theorem of finitely generated abelian groups, taking appropriate generators, we can express $L$ as
	\[
		L \cong \mathbb{Z}^{\operatorname{rank} L} \oplus (\mathbb{Z} / a_1 \mathbb{Z}) \oplus \cdots \oplus (\mathbb{Z} / a_m \mathbb{Z})
	\]
	for some $a_1, \ldots, a_m \in \mathbb{Z}_{> 1}$.
	However, we do not use this expression in this paper.
\end{remark}

\begin{assumption}\label{assumption:general_equation}
	In this paper, we will consider equations of the following form:
	\begin{equation}\label{equation:general_equation}
		f_h = \Phi(f_{h + v_1}, \ldots, f_{h + v_N})
		\quad
		(h \in L),
	\end{equation}
	where $\Phi$ is an $R$-coefficient irreducible Laurent polynomial with $N$-variables and $v_1, \ldots, v_N \in L$ are shift vectors.
	We suppose that $v_1, \ldots, v_N$ are all different and $\Phi$ does depend on all of $f_{h + v_1}, \ldots, f_{h + v_N}$.
	If not, we decrease the number of variables of $\Phi$.

	If $\Phi$ is a Laurent monomial, then the equation defined by $\Phi$ can be linearized using a logarithm.
	Therefore, we may assume that $\Phi$ is not a Laurent monomial.
	In our case, $\Phi$ is primitive over $R$ (see Definition~\ref{definition:primitive}).

	Moreover, we may assume that the shift vectors $v_1, \ldots, v_N$ generate $L$ as a lattice.
	If not, we replace $L$ with $\operatorname{span}_{\mathbb{Z}} (v_1, \ldots, v_N)$.
\end{assumption}

\begin{remark}
	We do not explicitly exclude the cases where the numerator of $\Phi$ has a monomial factor.
	However, as studied in \cite{sc_Laurent} in the ordinary difference case, such a system does not satisfy the irreducibility or the coprimeness and naturally falls outside the scope of our research.
\end{remark}

\begin{assumption}[\cite{investigation}]\label{assumption:linear_independence}
	We think of equation \eqref{equation:general_equation} as defining an evolution on $L$ in the direction from $f_{h + v_1}, \ldots, f_{h + v_N}$ to $f_h$.
	Therefore, we assume that the shift vectors $v_1, \ldots, v_N$ are linearly independent over $\mathbb{Z}_{\ge 0}$.
	That is, a relation
	\[
		\sum_i a_i v_i = 0
	\]
	for $a_1, \ldots, a_N \in \mathbb{Z}_{\ge 0}$ implies
	\[
		a_1 = \cdots = a_N = 0.
	\]
\end{assumption}

\begin{definition}[\cite{rims,investigation}]
	Define an additive monoid $S \subset L$ as
	\[
		S = \operatorname{span}_{\mathbb{Z}_{\ge 0}} (v_1, \ldots, v_N)
		= \left\{ \sum_i a_i v_i \mid a_1, \ldots, a_N \in \mathbb{Z}_{\ge 0} \right\}
	\]
	and let ``$\le$'' be the binary relation on $L$ defined as
	\[
		h_1 \le h_2 \quad \Leftrightarrow \quad h_1 - h_2 \in S.
	\]
	That is, $h_1 \le h_2$ holds if and only if there exist $a_1, \ldots, a_N \in \mathbb{Z}_{\ge 0}$ such that
	\[
		h_1 = h_2 + \sum_i a_i v_i.
	\]
	By Assumption~\ref{assumption:linear_independence}, this binary relation is a (partial) order on $L$.
	Clearly, $\le$ is invariant under translations on $L$, i.e., for $h_1 \le h_2$ and $v \in L$, we have $h_1 + v \le h_2 + v$.
\end{definition}

\begin{assumption}\label{assumption:invertible}
	We assume that equation \eqref{equation:general_equation} is invertible in the following sense.

	First, we assume that the set of shift vectors
	\[
		\{ 0, v_1, \ldots, v_N \}
	\]
	has a unique minimal element (i.e., a minimum element) with respect to the order $\le$.
	The minimum element is not $0$ since it is the maximum.
	As the ordering of $v_1, \ldots, v_N$ can be changed at will, we may assume that the minimum element is $v_N$.

	Solving \eqref{equation:general_equation} in the opposite direction is to solve it with respect to $f_{h + v_N}$.
	Here, we suppose that \eqref{equation:general_equation} can be solved with respect to $f_{h + v_N}$ by a Laurent polynomial as
	\[
		f_{h + v_N} = \Psi(f_{h + v_1}, \ldots, f_{h + v_{N - 1}}, f_h),
	\]
	where $\Psi$ is an $R$-coefficient Laurent polynomial.
	Since $v_N$ is the minimum element of the set $\{ 0, v_1, \ldots, v_N \}$, the above equation defines an evolution in the opposite direction.

	Finally, we assume that $\Psi$ is irreducible as an $R$-coefficient Laurent polynomial.
\end{assumption}

\begin{remark}[\cite{rims,investigation}]
	All discrete Hirota bilinear equations satisfy Assumption~\ref{assumption:invertible} because of the balancing condition of the shift vectors.
\end{remark}

\begin{remark}
	$\Psi$ is not a Laurent monomial since $\Phi$ is not a Laurent monomial.
\end{remark}

\begin{remark}
	Although we assume that equation \eqref{equation:general_equation} can be solved in the opposite direction by an irreducible Laurent polynomial, we do not require the Laurent property in the opposite direction.
	Thus, our main theorem is valid even for equations that have the Laurent property only in one direction.
\end{remark}

\begin{definition}[light-cone regular domain \cite{Laurent_phenomenon,rims,investigation,domain}]\label{definition:lightcone_regular}
	A nonempty subset $H \subset L$ is said to be a \emph{light-cone regular} domain for equation \eqref{equation:general_equation} if it satisfies the following conditions:
	\begin{enumerate}
		\item\label{enumerate:lightcone_regular_1}
		For any $h \in H$, the intersection between the domain and the past light cone emanating from $h$ is a finite set, i.e., the set
		\[
			\{ h' \in H \mid h' \le h \} \quad (= H \cap (h + S))
		\]
		is finite.

		\item\label{enumerate:lightcone_regular_2}
		For any $h \in H$, the future light cone emanating from $h$ is contained in the domain, i.e.,
		\[
			\{ h' \in L \mid h' \ge h \} \subset H \quad
			(\Leftrightarrow h - S \subset H).
		\]

	\end{enumerate}
	For a light-cone regular domain $H$, we define $H_0 \subset H$ as
	\[
		H_0 = \{ h \in H \mid \exists i \colon h + v_i \notin H \},
	\]
	which we call the \emph{initial boundary} for $H$.
\end{definition}

\begin{remark}
	A domain that satisfies the above conditions is sometimes called a ``good domain,'' which was named by the present author \cite{rims}.
	In this paper, however, we introduce the new term ``light-cone regular,'' which is more appropriate because the name represents its properties.
\end{remark}

\begin{remark}
	The terms ``past and future light cones'' for difference equations have first been explicitly defined in \cite{domain} for quad equations.
	In the case of equation \eqref{equation:general_equation}, using the monoid $S$, the past (resp.\ future) light-cone emanating from $h \in L$ can be easily expressed as $h + S$ (resp.\ $h - S$).
\end{remark}

\begin{notation}[$R_T$, $K_T$]
	For a subset $T \subset L$, we introduce the notation
	\[
		R_T = R[f^{\pm 1}_h \mid h \in T], \quad
		K_T = K[f^{\pm 1}_h \mid h \in T].
	\]
\end{notation}

It is obvious that $R_T \subset R_{T'}$ for $T \subset T'$.
Note that $f_h$ ($h \in T$) are not always independent because they can have relations determined by equation \eqref{equation:general_equation}.
We will often use $R_{H_0}$ and $K_{H_0}$ in this paper.
In such cases, $f_h$ ($h \in H_0$) do not have any relation, i.e., they are algebraically independent over $K$, and thus can be thought of as if they were independent variables.

\begin{definition}[Laurent property]
	Let $H \subset L$ be light-cone regular.
	Equation \eqref{equation:general_equation} has the \emph{Laurent property} on $H$ if every iterate can be written as a Laurent polynomial in the initial variables, i.e.,
	\[
		f_h \in R_{H_0} \quad
		(= R[ f^{\pm 1}_{h_0} \mid h_0 \in H_0 ])
	\]
	for all $h \in H$.
\end{definition}

\begin{definition}[irreducibility, coprimeness]\label{definition:irreducibility_coprimeness}
	Let $H \subset L$ be light-cone regular and suppose that equation \eqref{equation:general_equation} has the Laurent property on $H$.
	\begin{enumerate}
		\item 
		Equation \eqref{equation:general_equation} has the \emph{irreducibility} if every iterate is irreducible as a Laurent polynomial, i.e., for any $h \in H$, $f_h$ is an irreducible element of the ring $R_{H_0}$.

		\item
		Equation \eqref{equation:general_equation} has the \emph{coprimeness} if every pair of iterates is coprime, i.e., for any $h_1, h_2 \in H$ ($h_1 \ne h_2$), $f_{h_1}$ and $f_{h_2}$ are coprime to each other in the ring $R_{H_0}$.

	\end{enumerate}
\end{definition}

As we will see in Example~\ref{example:cluster_1d}, a Laurent system sometimes has a unit element as an iterate.
The following proposition, which we will prove in Section~\ref{section:proof}, guarantees that such a phenomenon can never occur if a Laurent system has the coprimeness.

\begin{proposition}\label{proposition:not_unit}
	Suppose that equation \eqref{equation:general_equation} has the Laurent property and the coprimeness on a light-cone regular domain $H$.
	Then, for $h \in H$, $f_h$ is a unit in $R_{H_0}$ if and only if $h \in H_0$.
\end{proposition}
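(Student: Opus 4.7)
My plan is, first, to observe $(\Leftarrow)$ directly: if $h \in H_0$ then $f_h$ is itself one of the Laurent generators of $R_{H_0}$, hence a unit. For $(\Rightarrow)$ I would argue by strong induction on $\ell(h) := |\{h' \in H : h' \le h\}|$, which is finite by Condition~(\ref{enumerate:lightcone_regular_1}) of light-cone regularity. The base case $\ell(h) = 1$ forces $h + v_i \notin H$ for every $i$, so $h \in H_0$ by definition. In the inductive step I assume the result for all smaller $\ell$, suppose $f_h \in R_{H_0}^{\times}$ and---aiming for a contradiction---suppose $h \notin H_0$; then every $h + v_i \in H$ with $\ell(h + v_i) < \ell(h)$, so the induction hypothesis applies to each predecessor.

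The key step is to exploit coprimeness on the defining identity. Setting $F_i := f_{h + v_i}$ and writing $\Phi(X) = (\prod_j X_j^{a_j}) P(X)$ with $P$ the primitive irreducible polynomial from Assumption~\ref{assumption:general_equation} (at least two monomials, not divisible by any $X_j$), I would clear denominators:
\[
    P(F_1, \ldots, F_N) \prod_{j:\,a_j > 0} F_j^{a_j} \;=\; f_h \prod_{j:\,a_j < 0} F_j^{-a_j} \qquad \text{in } R_{H_0}.
\]
Since $h, h+v_1, \ldots, h+v_N$ are $N+1$ pairwise distinct points of $H$, the coprimeness hypothesis makes $f_h, F_1, \ldots, F_N$ pairwise coprime. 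Fix $j$ with $a_j > 0$: the factor $F_j^{a_j}$ divides the left side, hence the right, but every factor on the right (the unit $f_h$, and the $F_k^{-a_k}$ with $k \neq j$) is coprime to $F_j$, so $F_j^{a_j}$ divides a unit and thus $F_j$ is itself a unit of $R_{H_0}$; the induction hypothesis then places $h + v_j \in H_0$. The symmetric argument applied to the inverse equation $f_{h + v_N} = \Psi(F_1, \ldots, F_{N-1}, f_h)$, with $\Psi = (\prod_j X_j^{b_j}) Q(X)$, yields the same conclusion for every $j \le N-1$ with $b_j > 0$.

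The hard part will be the mixed case where some indices satisfy both $a_j \le 0$ and $b_j \le 0$---typified by the Fibonacci-type recurrence $\Phi(X_1, X_2) = X_1 + X_2$, in which all $a$'s and $b$'s vanish and the preceding analyses deliver no constraint. To push through, I would extract from the identity above the relation $P(F_1, \ldots, F_N) = Q_0 \prod_{j:\,a_j < 0} F_j^{-a_j}$ with $Q_0 \in R_{H_0}^{\times}$, whose left side is a genuine $\ge 2$-term polynomial evaluation while its right side is a single Laurent monomial in the $F_j$'s; tracking $\pi$-adic valuations along this identity, and using the pairwise coprimeness of the $F_j$'s to control which primes can contribute, should cascade the unit property to every remaining index. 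Once every $F_j$ is a unit, induction places each $h + v_j$ in $H_0$, so the $F_j$ are distinct literal Laurent generators of $R_{H_0}$, and the natural map $R[X_1^{\pm 1}, \ldots, X_N^{\pm 1}] \to R_{H_0}$, $X_j \mapsto F_j$, identifies $f_h = \Phi(F)$ with $\Phi(X)$. Since $\Phi$ is irreducible and not a Laurent monomial, it is not a unit in $R[X^{\pm 1}]$, hence not a unit in this embedded Laurent polynomial subring, and therefore not a unit in $R_{H_0}$---contradicting $f_h \in R_{H_0}^{\times}$ and closing the induction.
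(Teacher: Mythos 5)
Your backward direction, your base case, and your treatment of the indices $j$ with $a_j>0$ (or $b_j>0$ in $\Psi$) are sound, but the ``mixed case'' you flag as the hard part is a genuine gap, and the repair you sketch cannot work. After reducing to $P(F_1,\ldots,F_N)=Q_0\prod_{j}F_j^{c_j}$ with $Q_0$ a unit, you hope that pairwise coprimeness of the $F_j$ plus a valuation-tracking argument forces each remaining $F_j$ to be a unit. No such purely algebraic implication exists: in $K[x^{\pm 1},y^{\pm 1}]$ the elements $F_1=1+x+y$ and $F_2=-(x+y)$ are coprime non-units with $F_1+F_2=1$, so for your own test case $\Phi=X_1+X_2$ the identity, the pairwise coprimeness, and the unit hypothesis on $f_h$ are all simultaneously consistent with non-unit inputs. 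Hence no deduction from those three facts alone can close the induction. The problem is not marginal: for the central examples (e.g.\ the discrete KdV equation, where $\Phi=X_5^{-1}(\alpha X_1X_2+\beta X_3X_4)$ and $\Psi$ has the same shape) every index except the denominator one lies in the mixed case, so your concrete arguments deliver nothing and the whole proof rests on the unproved cascade.

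The paper closes exactly this gap with dynamical rather than local-algebraic input, and you would need to import something equivalent. By Lemma~\ref{lemma:not_monomial} (whose proof rests on Lemma~\ref{lemma:laurent_monomial_independent}: $\Phi$ evaluated at algebraically independent Laurent monomials is never a monomial) there is $h'\ge h$, $h'\ne h$, with $f_{h'}$ not a Laurent monomial. Translating the domain to $\widetilde{H}=H-h'+h$ turns $f_h$ itself into a non-monomial, hence non-unit, iterate of the translated initial value problem. The coprimeness on $\widetilde{H}$ together with Lemma~\ref{lemma:substitution} then yields values $\beta_{\widetilde{h}}\in\overline{K}^{\times}$ of the $\widetilde{H}_0$-data at which $f_h$ vanishes while every $f_b$ with $b\in B=\{h_0\in H_0\mid h_0\le h\}$ does not; setting $\alpha_b=f_b\big|_{\beta}$ produces a point with all coordinates in $\overline{K}^{\times}$ at which $f_h$, viewed in $R_{H_0}$, vanishes, so $f_h$ is not a unit, again by Lemma~\ref{lemma:substitution}. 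Your closing observation (that once all $F_j$ are independent initial variables, $\Phi(F)$ cannot be a unit) is correct and is precisely the content of Lemma~\ref{lemma:not_monomial}; what is missing is any valid route to the premise that all $F_j$ are units.
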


The following is our main theorem in this paper.

\begin{theorem}\label{theorem:main}
	Let $\widetilde{H} \subset L$ be light-cone regular and suppose that equation \eqref{equation:general_equation} has the Laurent property, the irreducibility and the coprimeness on $\widetilde{H}$.
	Then, for any light-cone regular domain $H \subset L$, equation \eqref{equation:general_equation} has these three properties on $H$, too.
	That is, if considered together, these three properties do not depend on the choice of light-cone regular domain.
\end{theorem}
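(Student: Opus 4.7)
My plan is to transfer the three properties between different light-cone regular domains using translation invariance of equation~\eqref{equation:general_equation} together with local boundary modifications. The starting observation is that the three properties have a local character. For any $h \in H$, light-cone regularity forces $(h+S) \cap H$ to be finite, so any Laurent expression for $f_h$ in $R_{H_0}$ (if it exists) involves only $h_0$ in the finite set $(h+S) \cap H_0$. Because $R_{H_0}$ is obtained from any such finite Laurent subring $R_F$ by adjoining Laurent variables that are algebraically independent of those appearing in $f_h$, irreducibility in $R_{H_0}$ is equivalent to irreducibility in $R_F$, and likewise for coprimeness of any pair $(f_{h_1}, f_{h_2})$. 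Combined with the fact that if the three properties hold on $\widetilde{H}$ they hold on every translate $\widetilde{H}+v$, this reduces the theorem to a comparison of the initial data on finite neighborhoods.

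The core technical step will be a \emph{boundary move lemma}: if two light-cone regular domains $H'$ and $H''$ differ by exactly one point, with both still light-cone regular, then the three properties on $H'$ imply the three properties on $H''$. Depending on whether the move extends the domain forward or backward in the $v_N$-direction (using Assumption~\ref{assumption:invertible}), the Laurent property transfers by direct substitution of $\Phi$ or $\Psi$. The irreducibility of $\Phi$ and $\Psi$ lets us carry the irreducibility of each iterate through the substitution, and coprimeness transfers because any new common factor on the modified side would pull back to a common factor on the original side; Proposition~\ref{proposition:not_unit} handles the possibility of spurious unit factors arising from the move. Finally, any prescribed finite neighborhood in $H$ can be reached from a suitable translate of $\widetilde{H}$ by finitely many such moves, since $\le$ is a genuine partial order (Assumption~\ref{assumption:linear_independence}) and both domains are closed under the future light cone.

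The main obstacle is the boundary move lemma itself, which requires all three properties to be maintained simultaneously through each substitution. Preserving irreducibility and pairwise coprimeness under the Laurent polynomial substitution $f_{h^*} \leftrightarrow \Phi(f_{h^*+v_1}, \ldots, f_{h^*+v_N})$ is a delicate UFD-theoretic matter: a factorization could in principle be hidden or created by the substitution, and coprimeness is inseparable from irreducibility in controlling such phenomena. This is precisely the reason the theorem couples the three properties together rather than stating each on its own.
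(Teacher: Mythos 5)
Your overall skeleton --- translate the known-good domain $\widetilde{H}$ far into the past, then pass to the target domain by finitely many single-point modifications, using the finiteness of past light cones to localize everything to finitely many variables --- is exactly the paper's strategy (its ``domain mutation'' at a minimal point, Lemmas~\ref{lemma:mutation}, \ref{lemma:main_laurent} and \ref{lemma:main_irreducibility}, organized by a minimization argument over a distance function $D$). However, there is a genuine gap at the heart of your plan: the boundary move lemma is asserted rather than proved, and the one concrete mechanism you offer for it is wrong. You claim that ``the Laurent property transfers by direct substitution of $\Phi$ or $\Psi$.'' It does not. If $H' = H \setminus \{h_0\}$ with $h_0$ minimal, then $H'_0 = (H_0\setminus\{h_0\})\cup\{h_0-v_N\}$ and $f_{h_0}=\Psi(\ldots)$ is a non-unit element of $R_{H'_0}$. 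Since an iterate $f_h\in R_{H_0}$ is a \emph{Laurent} polynomial, $f_{h_0}$ may occur in it with negative exponents, so direct substitution yields only $f_h \in R_{H'_0}[f_{h_0}^{-1}]$, not $f_h\in R_{H'_0}$. Clearing this denominator is the entire difficulty.

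The paper resolves it by showing $f_h$ also lies in $R_{H'_0}[f_{h+v_1}^{-1},\ldots,f_{h+v_N}^{-1}]$ and then intersecting the two localizations (Lemma~\ref{lemma:localization_intersection}), which requires proving that $f_{h_0}$ is coprime to each $f_{h+v_i}$ in $R_{H'_0}$. That coprimeness cannot be read off from the properties of $H$ or $H'$ alone --- on $H$ the element $f_{h_0}$ is an initial variable, hence a unit, and on $H'$ nothing is yet known --- so the paper establishes it by substituting a family of values in $\overline{K}^{\times}$ that kills $f_{h_0}$ but not $f_{h+v_i}$ (Lemma~\ref{lemma:substitution}), and these values are manufactured from the irreducibility and coprimeness on a backward translate of $\widetilde{H}$ on which all the relevant points are non-initial iterates. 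Your proposal flags this as ``a delicate UFD-theoretic matter'' but supplies no mechanism for it; in particular it never explains how the reference domain $\widetilde{H}$ re-enters each individual move, which is indispensable. A secondary structural point: the paper does not prove all three properties move-by-move as you propose; it first propagates only the Laurent property through mutations (each step leaning on $\widetilde{H}$), and then derives irreducibility and coprimeness on the target domain in a separate induction (Lemma~\ref{lemma:main_irreducibility}), again anchored to $\widetilde{H}$. Until you can prove your boundary move lemma --- in particular the coprimeness input to the localization-intersection step --- the argument is incomplete.
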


Thanks to this theorem, once we prove that an equation has these three properties on one domain, these properties immediately follow on any light-cone regular domain.

The paper is organized as follows.
Section~\ref{section:examples} consists of examples of Laurent systems.
We see that most Laurent systems in the literature satisfy the assumptions and conditions introduced in the introduction.
We prove our main theorem in Section~\ref{section:proof}.
Our main strategy is to substitute specific numerical values for initial variables, even in the case of general lattice equations.
In Section~\ref{section:reduction}, we consider reductions of Laurent systems, i.e., the procedure to obtain a difference equation defined on a lower-dimensional lattice by requiring some translation invariance on the original equation.
Reductions are the most important application of showing the Laurent property on a general domain.
We show in Proposition~\ref{proposition:reduction} that any reduction of a Laurent system is also a Laurent system, even if the lattices have torsion elements.
In Appendix~\ref{appendix:algebra}, we recall some basic properties of Laurent polynomial rings, which play an essential role in Section~\ref{section:proof}.

\section{Examples}\label{section:examples}

In Section~\ref{section:introduction}, we introduced some assumptions on equation \eqref{equation:general_equation}.
In this section, we check that many Laurent systems in the literature satisfy these conditions.

\begin{example}[discrete KdV equation]\label{example:discrete_kdv}
	The bilinear form of the discrete KdV equation is
	\[
		f_{m, n} = \frac{\alpha f_{m - 2, n} f_{m, n - 1} + \beta f_{m - 1, n} f_{m - 1, n - 1}}{f_{m - 2, n - 1}},
	\]
	where $\alpha$ and $\beta$ are nonzero constants \cite{Hirota_dkdv}.
	Let $R$ be an arbitrary UFD and let $\alpha, \beta \in R \setminus \{ 0 \}$ be coprime to each other in $R$.
	We sometimes impose a condition on $\alpha, \beta$ such as $\alpha + \beta = 1$, but in this paper, we do not need such a restriction.
	Let $L = \mathbb{Z}^2$ and define $v_1, \ldots, v_5 \in L$ as
	\begin{itemize}
		\item
		$v_1 = ( - 2, 0 )$,
		
		\item
		$v_2 = ( 0, - 1 )$,
		
		\item
		$v_3 = ( - 1, 0 )$,
		
		\item
		$v_4 = ( - 1, - 1 )$,
		
		\item
		$v_5 = ( - 2, - 1 )$.

	\end{itemize}
	Using $h = (m, n)$, the discrete KdV equation is written as
	\[
		f_h = \frac{\alpha f_{h + v_1} f_{h + v_2} + \beta f_{h + v_3} f_{h + v_4}}{ f_{h + v_5}}.
	\]
	The graphical representation of the shift vectors $0, v_1, \ldots, v_5$ is
	\[
		\begin{matrix}
			v_1 & v_3 & 0 \\
			v_5 & v_4 & v_2
		\end{matrix},
	\]
	which leads to the linear independence of $v_1, \ldots, v_5$ over $\mathbb{Z}_{\ge 0}$.
	The monoid $S$ generated by these shift vectors is
	\[
		S = \{ (m, n) \in L \mid m, n \le 0 \}
	\]
	and the corresponding order $\le$ on $L$ is expressed as
	\[
		(m, n) \le (m', n')
		\quad \Leftrightarrow \quad
		m \le m' \text{ and } n \le n'.
	\]
	With respect to this order, $v_5$ is the minimum element of the set $\{ 0, v_1, \ldots, v_5 \}$.
	Solving the discrete KdV equation in the opposite direction, we obtain
	\[
		f_{h + v_5} = \frac{\alpha f_{h + v_1} f_{h + v_2} + \beta f_{h + v_3} f_{h + v_4}}{f_{h}},
	\]
	the RHS of which is an irreducible Laurent polynomial.
	Therefore, the discrete KdV equation satisfies all the assumptions in Section~\ref{section:introduction}.
	It is well known that the discrete KdV equation has the Laurent property, the irreducibility and the coprimeness on any light-cone regular domain \cite{Laurent_phenomenon,rims,coprimeness}.
\end{example}

\begin{example}[discrete mKdV equation]\label{example:discrete_mkdv}
	The bilinear form of the discrete mKdV equation is written by two functions $f^{(0)}$ and $f^{(1)}$ as
	\begin{align*}
		f^{(0)}_{\ell, m} f^{(1)}_{\ell - 1, m - 1} &= \alpha f^{(0)}_{\ell - 1, m} f^{(1)}_{\ell, m - 1} + \beta f^{(0)}_{\ell, m - 1} f^{(1)}_{\ell - 1, m}, \\
		f^{(1)}_{\ell, m} f^{(0)}_{\ell - 1, m - 1} &= \alpha f^{(1)}_{\ell - 1, m} f^{(0)}_{\ell, m - 1} + \beta f^{(1)}_{\ell, m - 1} f^{(0)}_{\ell - 1, m},
	\end{align*}
	where $\alpha$ and $\beta$ are nonzero constants \cite{Hirota_dmkdv}.
	Let $R$ be an arbitrary UFD and let $\alpha, \beta \in R \setminus \{ 0 \}$ be coprime to each other in $R$.
	Let $L = \mathbb{Z}^2 \oplus (\mathbb{Z} / 2 \mathbb{Z})$ and define $v_1, \ldots, v_5 \in L$ as
	\begin{itemize}
		\item
		$v_1 = (- 1, 0, 0)$,

		\item
		$v_2 = (0, - 1, 1)$,

		\item
		$v_3 = (0, - 1, 0)$,

		\item
		$v_4 = (- 1, 0, 1)$,

		\item
		$v_5 = (- 1, - 1, 1)$,

	\end{itemize}
	where the third entries correspond to the $\mathbb{Z} / (2 \mathbb{Z})$-part.
	Then, the discrete mKdV equation is written by a single function $f_h$ ($h \in L$) as
	\[
		f_h = \frac{\alpha f_{h + v_1} f_{h + v_2} + \beta f_{h + v_3} f_{h + v_4}}{f_{h + v_5}},
	\]
	the RHS of which is an irreducible Laurent polynomial.
	Since the $\ell$- and $m$-coordinates of each generator are non-positive but do not vanish simultaneously, these shift vectors are linearly independent over $\mathbb{Z}_{\ge 0}$.
	The monoid $S$ generated by the shift vectors is
	\[
		S = \{ (\ell, m, n) \in L \mid \ell, m \le 0 \}
	\]
	and the corresponding order $\le$ on $L$ is expressed as
	\[
		(\ell, m, n) \le (\ell', m', n')
		\quad \Leftrightarrow \quad
		\ell \le \ell' \text{ and } m \le m'.
	\]
	The shift $v_5$ is the minimum element of the set $\{ 0, v_1, \ldots, v_5 \}$ and we can solve the equation for $f_{h + v_5}$ by an irreducible Laurent polynomial.
	Therefore, this equation satisfies all the assumptions in Section~\ref{section:introduction}.
\end{example}

\begin{remark}
	Not only the discrete (modified) KdV equation but all the discrete Hirota bilinear equations, including the Hirota-Miwa equation, the bilinear form of the discrete BKP equation and their nontrivial reductions, satisfy the assumptions in Section~\ref{section:introduction}.
\end{remark}

Although our main topic in this paper is lattice equations, we do not exclude Laurent systems defined on a one-dimensional lattice.
However, as the following two examples show, some Laurent systems on a one-dimensional lattice do not have the irreducibility or the coprimeness.

\begin{example}\label{example:cluster_1d}
	Let $r \in \mathbb{Z}_{\ge 1}$ and consider the system
	\[
		f_n = \frac{f^r_{n - 1} + 1}{f_{n - 2}}
	\]
	over the coefficient ring $\mathbb{Z}$.
	It is well-known that this equation can be obtained from a cluster algebra and has the Laurent property, i.e., $f_n \in \mathbb{Z}[f^{\pm 1}_0, f^{\pm 1}_1]$ for all $n \ge 0$ \cite{cluster1}.

	If $r = 1$, this system is the Lyness map \cite{Lyness}.
	It is periodic with period $5$ and has the irreducibility.
	In particular, it does not have the coprimeness since $f_{n + 5} = f_n$.
	Since $f_5 = f_0$ is a unit in $\mathbb{Z}[f^{\pm 1}_0, f^{\pm 1}_1]$, this equation has a nontrivial unit as an iterate.

	On the other hand, if $r \ge 2$, the equation is not periodic anymore.
	In this case, the equation has the irreducibility over $\mathbb{Z}$ if and only if $r$ is a power of $2$ (i.e., the first iterate is irreducible).
	However, the equation always has the coprimeness \cite{factorize}.
\end{example}

\begin{example}\label{example:hirota_miwa_not_coprime}
	Consider the equation
	\begin{equation}
		f_n = \frac{f_{n - 1} f_{n - 6} + f_{n - 3} f_{n - 4}}{f_{n - 7}}
	\end{equation}
	on $H = \mathbb{Z}_{\ge 0}$, i.e., the initial values are $f_0, \ldots, f_6$.
	Since it can be obtained as a reduction of the Hirota-Miwa equation, this equation has the Laurent property.
	However, since
	\begin{align*}
		f_7 &= \frac{f_1 f_6 + f_3 f_4}{f_0}, \\
		f_8 &= \frac{f_0 f_4 f_5 + f_1 f_2 f_6 + f_2 f_3 f_4}{f_0 f_1}, \\
		f_9 &= \frac{(f_0 f_5 + f_2 f_3) (f_1 f_6 + f_3 f_4)}{f_0 f_1 f_2},
	\end{align*}
	this equation does not satisfy the irreducibility or the coprimeness.
	Using
	\[
		p = f_1 f_6 + f_3 f_4, \quad
		q = f_0 f_4 f_5 + f_1 f_2 f_6 + f_2 f_3 f_4, \quad
		r = f_0 f_5 + f_2 f_3,
	\]
	we can write the time evolution of the equation as
	\begin{align*}
		f_0 &\mapsto f_1, \quad
		f_1 \mapsto f_2, \quad
		f_2 \mapsto f_3, \\
		f_3 &\mapsto f_4, \quad
		f_4 \mapsto f_5, \quad
		f_5 \mapsto f_6, \quad
		f_6 \mapsto \frac{p}{f_0}, \\
		p &\mapsto \frac{q}{f_0}, \quad
		q \mapsto \frac{p r}{f_0}, \quad
		r \mapsto p.
	\end{align*}
	Let
	\[
		A = \begin{bmatrix}
			0 & 1 & 0 & 0 & 0 & 0 & 0 & 0 & 0 & 0 \\
			0 & 0 & 1 & 0 & 0 & 0 & 0 & 0 & 0 & 0 \\
			0 & 0 & 0 & 1 & 0 & 0 & 0 & 0 & 0 & 0 \\
			0 & 0 & 0 & 0 & 1 & 0 & 0 & 0 & 0 & 0 \\
			0 & 0 & 0 & 0 & 0 & 1 & 0 & 0 & 0 & 0 \\
			0 & 0 & 0 & 0 & 0 & 0 & 1 & 0 & 0 & 0 \\
			- 1 & 0 & 0 & 0 & 0 & 0 & 0 & 1 & 0 & 0 \\
			- 1 & 0 & 0 & 0 & 0 & 0 & 0 & 0 & 1 & 0 \\
			- 1 & 0 & 0 & 0 & 0 & 0 & 0 & 0 & 0 & 1 \\
			0 & 0 & 0 & 0 & 0 & 0 & 0 & 1 & 0 & 0 \\
		\end{bmatrix}
	\]
	and define $F^{(0)}_n, \ldots, F^{(6)}_n, P_n, Q_n, R_n$ for $n \ge 0$ by
	\[
		F^{(0)}_0 = 1, \quad
		F^{(1)}_0 = \cdots = F^{(6)}_0 = 0, \quad
		P_0 = Q_0 = R_0 = 0
	\]
	and
	\[
		\begin{bmatrix}
			F^{(0)}_n \\
			\vdots \\
			F^{(6)}_n \\
			P_n \\
			Q_n \\
			R_n
		\end{bmatrix}
		= A \begin{bmatrix}
			F^{(0)}_{n - 1} \\
			\vdots \\
			F^{(6)}_{n - 1} \\
			P_{n - 1} \\
			Q_{n - 1} \\
			R_{n - 1}
		\end{bmatrix}
		\quad
		(n \ge 1).
	\]
	Then, the general iterate of this equation is expressed as
	\[
		f_n = f^{F^{(0)}_n}_0 f^{F^{(1)}_n}_1 \cdots f^{F^{(6)}_n}_6 p^{P_n} q^{Q_n} r^{R_n},
	\]
	which implies that the exponent of each factor grows quadratically.
	This equation can be reduced to the Lyness map via the transformation $g_n = \frac{f_n f_{n - 5}}{f_{n - 2} f_{n - 3}}$ \cite{Hone_Kouloukas}.
\end{example}

When considering an equation on $\mathbb{Z}$, we almost always use the domain
\[
	\{ n \in \mathbb{Z} \mid n \ge a \}
\]
for some $a \in \mathbb{Z}$.
However, some ordinary difference equations have another type of light-cone regular domains.

\begin{example}
	Consider the discrete Hirota bilinear equation
	\[
		f_n = \frac{\alpha f_{n - 2} f_{n - 5} + \beta f_{n - 3} f_{n - 4}}{f_{n - 7}},
	\]
	where $\alpha$ and $\beta$ are nonzero constants.
	Then, the domain
	\[
		H = \{ 0 \} \cup \{ n \in \mathbb{Z} \mid n \ge 2 \}
	\]
	is light-cone regular for this equation because $f_{n - 1}$ does not appear in the RHS of the equation.
	The initial boundary for $H$ is
	\[
		H_0 = \{ 7 \} \cup \{ n \in \mathbb{Z} \mid n \ge 9 \}.
	\]
	Since this equation satisfies all the assumptions in Section~\ref{section:introduction}, we can apply Theorem~\ref{theorem:main} to this equation.
	That is, if it has the three properties on $H$, then the equation also satisfies them on $\{ n \in \mathbb{Z} \mid n \ge a \}$.
\end{example}

It is known that some non-integrable systems have the Laurent property, the irreducibility and the coprimeness.

\begin{example}[extensions of the discrete Toda equation \cite{toda50}]\label{example:toda50}
	Let $R = \mathbb{Z}$, $a, b \in \mathbb{Z}_{> 0}$ and let $k_1, \ldots, k_{a + b}, \ell_1, \ldots, \ell_{a + b} \in \mathbb{Z}_{> 0}$.
	Suppose that
	\[
		\operatorname{GCD}(k_1, \ldots, k_{a + b}, \ell_1, \ldots, \ell_{a + b})
	\]
	is a power of $2$, where $\operatorname{GCD}$ stands for the greatest common divisor.
	Express a vector $\mathbf{n} \in \mathbb{Z}^{a + b}$ as
	\[
		\mathbf{n} = (n_1, \ldots, n_{a + b})
		= \sum^{a + b}_{i=1} n_i \mathbf{e}_i,
	\]
	where $\mathbf{e}_i \in \mathbb{Z}^{a + b}$ is the unit vector in the $i$-th direction.
	Then, an extension of the discrete Toda equation is
	\begin{equation}\label{equation:toda50}
		f_{t, \mathbf{n}} = \frac{ \prod^a_{i = 1} f^{k_i}_{t - 1,\mathbf{n} + \mathbf{e}_i} f^{\ell_i}_{t - 1,\mathbf{n} - \mathbf{e}_i} + \prod^{a + b}_{j = a + 1} f^{k_j}_{t - 1,\mathbf{n} + \mathbf{e}_j} f^{\ell_j}_{t - 1, \mathbf{n} - \mathbf{e}_j} }{f_{t - 2, \mathbf{n}}}.
	\end{equation}
	Note that this equation has a realization as an LP-algebraic object.

	Let us check that the equation satisfies all the assumptions in Section~\ref{section:introduction}.
	Let
	\[
		L = \left\{ (t, \mathbf{n}) \in \mathbb{Z} \oplus \mathbb{Z}^{a + b} \mid t + \sum^{a + b}_{i = 1} n_i \colon \text{even} \right\}
	\]
	and define $v_1, \ldots, v_{2 a + 2 b + 1} \in L$ as
	\[
		v_{2 i - 1} = (- 1, \mathbf{e}_i), \quad
		v_{2 i} = (- 1, - \mathbf{e}_i) \quad (i = 1, \ldots, a + b), \quad
		v_{2 a + 2 b + 1} = (- 2, \mathbf{0}).
	\]
	Using $h = (t, \mathbf{n})$, the equation can be written as
	\[
		f_h = \frac{ \prod^a_{i = 1} f^{k_i}_{h + v_{2 i - 1}} f^{\ell_i}_{h + v_{2 i}} + \prod^{a + b}_{j = a + 1} f^{k_j}_{h + v_{2 j - 1}} f^{\ell_j}_{h + v_{2 j}} }{f_{h + v_{2 a + 2 b + 1}}}.
	\]
	The shift vectors $v_1, \ldots, v_{2 a + 2 b + 1}$ are linearly independent over $\mathbb{Z}_{\ge 0}$ since their $t$-coordinates are all negative.
	The monoid $S$ is written as
	\[
		S = \operatorname{span}_{\mathbb{Z}_{\ge 0}}(v_1, \ldots, v_{2 a + 2 b}),
	\]
	where the number of generators in the RHS is already minimum.
	The corresponding order $\le$ is described explicitly as
	\[
		(t, \mathbf{n}) \le (t', \mathbf{n}')
		\quad \Leftrightarrow \quad
		t \le t' \text{ and }
		\sum^{a + b}_{i = 1} |n'_i - n_i| \le t' - t.
	\]
	With respect to this order, the shift vector $v_{2 a + 2 b + 1}$ is the minimum element of the set $\{ 0, v_1, \ldots, v_{2 a + 2 b + 1} \}$ because of the relation
	\[
		v_{2 i - 1} + v_{2 i} = v_{2 a + 2 b + 1}
	\]
	for $i = 1, \ldots, a + b$.
	Solving the equation in the opposite direction, we obtain
	\[
		f_{h + v_{2 a + 2 b + 1}} = \frac{ \prod^a_{i = 1} f^{k_i}_{h + v_{2 i - 1}} f^{\ell_i}_{h + v_{2 i}} + \prod^{a + b}_{j = a + 1} f^{k_j}_{h + v_{2 j - 1}} f^{\ell_j}_{h + v_{2 j}} }{f_h}.
	\]
	Therefore, the equation satisfies all the assumptions in Section~\ref{section:introduction}.
\end{example}

\begin{example}[\cite{2dchaos, exkdv, lattice_degree}]\label{example:higher}
	For an even positive integer $k$, the lattice equation
	\[
		x_{m, n} = - x_{m - 1, n - 1} + \frac{a}{x^k_{m - 1, n}} + \frac{b}{x^k_{m, n - 1}}
	\]
	is not integrable in the sense of degree growth but passes the singularity confinement test.
	One of its generalizations to a higher-dimensional lattice is
	\[
		x_{t, \mathbf{n}} + x_{t - 2, \mathbf{n}} = \sum^d_{i = 1} \left( \frac{a_i}{x^{k_i}_{t - 1, \mathbf{n} + \mathbf{e}_i}} + \frac{b_i}{x^{\ell_i}_{t - 1, \mathbf{n} - \mathbf{e}_i}} \right),
	\]
	where
	\begin{itemize}
		\item 
		$d$ is a positive integer,

		\item 
		$\mathbf{n} \in \mathbb{Z}^d$,

		\item 
		$k_i, \ell_i$ are positive even integers satisfying
		\[
			\min_{1 \le i \le d}(\ell_i k_i - 1) > \max_{1 \le  i \le d}(\ell_i, k_i),
		\]

		\item
		$a_1, b_1, \ldots, a_d, b_d$
		are variables (i.e., algebraically independent over $\mathbb{Q}$),

		\item
		the coefficient ring is $R = \mathbb{Z}[a_1, b_1, \ldots, a_d, b_d]$.

	\end{itemize}
	This equation can be Laurentified via the transformation
	\[
		x_{t, \mathbf{n}} = \frac{f_{t, \mathbf{n}} f_{t - 2, \mathbf{n}}}{F_{t - 1, \mathbf{n}}},
	\]
	where
	\[
		F_{t, \mathbf{n}} = \prod^d_{i = 1} f^{k_i}_{t, \mathbf{n} + \mathbf{e}_i} f^{\ell_i}_{t, \mathbf{n} - \mathbf{e}_i}.
	\]
	The obtained Laurent system is
	\begin{equation}\label{equation:higher}
		f_{t, \mathbf{n}}
		= - \frac{F_{t - 1, \mathbf{n}} f_{t - 4,\mathbf{n}}}{F_{t - 3,\mathbf{n}}} + \sum^d_{i = 1} \left(
		\frac{a_i F_{t - 1,\mathbf{n}} F^{k_i}_{t - 2, \mathbf{n} + \mathbf{e}_i}}{f_{t - 2, \mathbf{n}} f^{k_i}_{t - 1, \mathbf{n} + \mathbf{e}_i} f^{k_i}_{t - 3, \mathbf{n} + \mathbf{e}_i}}
		+ \frac{b_i F_{t - 1, \mathbf{n}} F^{\ell_i}_{t - 2,\mathbf{n} - \mathbf{e}_i}}{f_{t - 2, \mathbf{n}} f^{\ell_i}_{t - 1, \mathbf{n} - \mathbf{e}_i} f^{\ell_i}_{t - 3, \mathbf{n} - \mathbf{e}_i}}
		\right).
	\end{equation}
	As far as the present author knows, equation \eqref{equation:higher} is the Laurent system with the most complicated concrete expression in the literature.
	Note that this equation does not have a cluster- or LP-algebraic expression because $f_{t, \mathbf{n}}$ and $f_{t - 4, \mathbf{n}}$ do not appear in a product form, i.e., we cannot apply the Caterpillar Lemma.

	Let us check that the equation satisfies all the assumptions in Section~\ref{section:introduction}.
	Let
	\[
		L = \left\{ (t, \mathbf{n}) \in \mathbb{Z} \oplus \mathbb{Z}^d \mid t + \sum^d_{i = 1} n_i \colon \text{even} \right\}.
	\]
	We do not explicitly write down the set of shift vectors, but those vectors are linearly independent over $\mathbb{Z}_{\ge 0}$ since their $t$-coordinates are all negative.
	The monoid $S$ can be written as
	\[
		S = \operatorname{span}_{\mathbb{Z}_{\ge 0}} \{ (- 1, \pm \mathbf{e}_i) \mid i = 1, \ldots, d \}
	\]
	and the minimum element of the set of shift vectors is $v_N = (- 4, \mathbf{0})$.
	Solving the equation in the opposite direction, we obtain
	\[
		f_{t - 4, \mathbf{n}}
		= - \frac{F_{t - 3, \mathbf{n}} f_{t, \mathbf{n}}}{F_{t - 1, \mathbf{n}}} + \sum^d_{i = 1} \left(
		\frac{a_i F_{t - 3, \mathbf{n}} F^{k_i}_{t - 2, \mathbf{n} + \mathbf{e}_i}}{f_{t - 2, \mathbf{n}} f^{k_i}_{t - 1, \mathbf{n} + \mathbf{e}_i} f^{k_i}_{t - 3, \mathbf{n} + \mathbf{e}_i}}
		+ \frac{b_i F_{t - 3, \mathbf{n}} F^{\ell_i}_{t - 2, \mathbf{n} - \mathbf{e}_i}}{f_{t - 2, \mathbf{n}} f^{\ell_i}_{t - 1, \mathbf{n} - \mathbf{e}_i} f^{\ell_i}_{t - 3, \mathbf{n} - \mathbf{e}_i}}
		\right),
	\]
	the RHS of which is an irreducible Laurent polynomial since its degree with respect to $f_h$ is $1$.
	Therefore, the equation satisfies all the assumptions in Section~\ref{section:introduction}.
\end{example}

It is sometimes much easier to show the Laurent property, the irreducibility and the coprimeness on a specific domain than on a general light-cone regular domain.
For example, the present author and collaborators proved these properties for equations \eqref{equation:toda50} and \eqref{equation:higher} on some translation-invariant domains \cite{toda50,exkdv}.
Using Theorem~\ref{theorem:main}, we immediately obtain these properties on any light-cone regular domain.
Moreover, according to Proposition~\ref{proposition:not_unit}, the iterates of these equations have no unit element except on the initial boundary.

\begin{corollary}
	Equation \eqref{equation:toda50} in Example~\ref{example:toda50} has the Laurent property, the irreducibility and the coprimeness on any light-cone regular domain.
	Moreover, the iterates do not contain any nontrivial unit, i.e., if an iterate $f_{t, \mathbf{n}}$ of the equation is a Laurent monomial, then $(t, \mathbf{n})$ must belong to the initial boundary.
\end{corollary}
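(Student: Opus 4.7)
The plan is to deduce the corollary as a direct application of Theorem~\ref{theorem:main} together with Proposition~\ref{proposition:not_unit}. There is essentially no new work needed beyond collecting results already established earlier in the paper.

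First, I would point out that Example~\ref{example:toda50} has already verified that equation \eqref{equation:toda50} fits entirely within the framework of Section~\ref{section:introduction}: the equation is of the form \eqref{equation:general_equation} with $\Phi$ an irreducible $R$-coefficient Laurent polynomial (not a Laurent monomial), the shift vectors $v_1, \ldots, v_{2a+2b+1}$ are linearly independent over $\mathbb{Z}_{\ge 0}$ (as all $t$-coordinates are negative), the minimum element $v_{2a+2b+1}$ of $\{0, v_1, \ldots, v_{2a+2b+1}\}$ exists thanks to the relations $v_{2i-1} + v_{2i} = v_{2a+2b+1}$, and solving in the opposite direction yields an irreducible Laurent polynomial with respect to $f_h$ (degree one in $f_h$ in the numerator, primitive with $\operatorname{GCD}$ assumption keeping the numerator primitive). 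Thus Assumptions~\ref{assumption:general_equation}, \ref{assumption:linear_independence} and \ref{assumption:invertible} all hold.

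Next, I would invoke the result of \cite{toda50}: the authors proved that equation \eqref{equation:toda50} possesses the Laurent property, the irreducibility and the coprimeness on a certain translation-invariant light-cone regular domain $\widetilde H \subset L$. Since such a $\widetilde H$ exists and the three properties hold simultaneously on it, the hypotheses of Theorem~\ref{theorem:main} are satisfied. Therefore, for every light-cone regular domain $H \subset L$, equation \eqref{equation:toda50} simultaneously satisfies the Laurent property, the irreducibility and the coprimeness on $H$. This gives the first assertion.

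For the second assertion, since the coprimeness holds on every light-cone regular domain $H$, Proposition~\ref{proposition:not_unit} applies: for $h = (t, \mathbf n) \in H$, the iterate $f_{t, \mathbf n}$ is a unit in $R_{H_0}$ if and only if $(t, \mathbf n) \in H_0$. Translating this into the language of Laurent monomials (a Laurent polynomial over $R$ is a unit in a Laurent polynomial ring over $R$ precisely when it is a unit times a Laurent monomial, and by the irreducibility each iterate is either a unit or itself irreducible), we conclude that an iterate $f_{t, \mathbf n}$ reduces to a Laurent monomial only when $(t, \mathbf n)$ lies on the initial boundary. Since every step here is a direct invocation of previously established results, there is no genuine obstacle; the only subtle point is checking that the domain on which \cite{toda50} proves the three properties is indeed light-cone regular in the sense of Definition~\ref{definition:lightcone_regular}, which is immediate from its translation-invariant construction there.
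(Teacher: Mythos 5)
Your proposal is correct and follows exactly the route the paper intends: the three properties on a translation-invariant light-cone regular domain from \cite{toda50} feed into Theorem~\ref{theorem:main} to give them on every light-cone regular domain, and Proposition~\ref{proposition:not_unit} then rules out nontrivial units off the initial boundary. The paper treats this corollary as an immediate consequence of those two results without further argument, so there is nothing to add.
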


\begin{corollary}
	Equation \eqref{equation:higher} in Example~\ref{example:higher} has the Laurent property, the irreducibility and the coprimeness on any light-cone regular domain.
	Moreover, the iterates do not contain any nontrivial unit.
\end{corollary}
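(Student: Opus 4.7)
The plan is to treat this corollary as a direct application of the main theorem together with the earlier work cited in the example. The heavy lifting has two components that I would cite rather than reprove: first, Example~\ref{example:higher} already verifies that equation \eqref{equation:higher} satisfies all the structural assumptions of Section~\ref{section:introduction} (Assumptions~\ref{assumption:general_equation}, \ref{assumption:linear_independence} and \ref{assumption:invertible}); second, the companion paper \cite{exkdv} establishes the Laurent property, the irreducibility and the coprimeness for this equation on a specific translation-invariant domain $\widetilde{H}$.

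With these two inputs in hand, I would argue as follows. Let $\widetilde{H}$ denote the translation-invariant light-cone regular domain used in \cite{exkdv}. Since the three properties hold simultaneously on $\widetilde{H}$, Theorem~\ref{theorem:main} applies directly and transports all three properties to any other light-cone regular domain $H \subset L$ for equation \eqref{equation:higher}. This gives the first sentence of the corollary.

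For the second sentence, I would invoke Proposition~\ref{proposition:not_unit}: since we have just established that the Laurent property and the coprimeness both hold on $H$, the proposition says that an iterate $f_h$ is a unit in $R_{H_0}$ precisely when $h \in H_0$. Equivalently, any iterate $f_{t,\mathbf{n}}$ that happens to be a Laurent monomial must have its index lying on the initial boundary, which is the nontrivial-unit statement.

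The only step that could pose a subtlety is checking that the particular $\widetilde{H}$ used in \cite{exkdv} is indeed light-cone regular in the sense of Definition~\ref{definition:lightcone_regular}, so that Theorem~\ref{theorem:main} is applicable with $\widetilde{H}$ as the reference domain; however, the translation-invariant domains of the form $\{(t,\mathbf{n}) \in L \mid t \ge t_0\}$ customarily used for equations whose past light cone lies in $\{t < t_0\}$ visibly satisfy both the finiteness of past light cones (each intersection is bounded by the slab structure of $S$) and the containment of future light cones, so no genuine obstacle arises. Thus the corollary reduces to a two-line invocation of Theorem~\ref{theorem:main} and Proposition~\ref{proposition:not_unit}.
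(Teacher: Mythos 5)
Your proposal is correct and follows exactly the route the paper itself takes: cite the verification of the assumptions in Example~\ref{example:higher} and the results of \cite{exkdv} on a translation-invariant domain, then apply Theorem~\ref{theorem:main} to transport the three properties to an arbitrary light-cone regular domain, and finish with Proposition~\ref{proposition:not_unit} for the statement about units. Your side remark checking that the translation-invariant reference domain is light-cone regular is a reasonable precaution and is consistent with the paper's setup.
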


\section{Proofs}\label{section:proof}

In this section, we prove Proposition~\ref{proposition:not_unit} and Theorem~\ref{theorem:main}.

\begin{lemma}\label{lemma:initial}
	Let $H \subset L$ be light-cone regular.
	Then, its initial boundary $H_0$ can be expressed as
	\[
		H_0 = \{ h \in H \mid h + v_N \notin H \}.
	\]
\end{lemma}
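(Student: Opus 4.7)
The plan is to reduce the defining condition of $H_0$ (existence of \emph{some} shift $v_i$ taking $h$ outside $H$) to the single condition for $i = N$, by exploiting two inputs: (i) $v_N$ is the minimum element of $\{0, v_1, \ldots, v_N\}$ with respect to $\le$, and (ii) the future-light-cone condition~(\ref{enumerate:lightcone_regular_2}) in Definition~\ref{definition:lightcone_regular}.

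The inclusion $\{h \in H \mid h + v_N \notin H\} \subset H_0$ is immediate from the definition of $H_0$: take $i = N$ as the witness. For the reverse inclusion, suppose $h \in H_0$, so $h + v_i \notin H$ for some $i$. I would argue by contrapositive: assume $h + v_N \in H$ and deduce that then $h + v_i \in H$ for every $i$, contradicting the choice of $i$.

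The key algebraic step uses the minimality of $v_N$. Since $v_N \le v_i$, we have $v_N - v_i \in S$, hence $v_i = v_N - (v_N - v_i) \in v_N - S$. Translating, $h + v_i \in (h + v_N) - S$. Now condition~(\ref{enumerate:lightcone_regular_2}) applied to $h + v_N \in H$ gives $(h + v_N) - S \subset H$, so $h + v_i \in H$. This yields the contradiction and proves the lemma.

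I do not expect any real obstacle here: the argument is a one-line application of the minimum property of $v_N$ together with the closure of $H$ under the future light cone. The only subtlety worth flagging is to be careful with the sign/direction conventions of the order $\le$ (which reverses the familiar inequality, since $h_1 \le h_2$ means $h_1 - h_2 \in S$), so that the minimality $v_N \le v_i$ correctly translates into $v_i \in v_N - S$ and not into $v_N \in v_i - S$.
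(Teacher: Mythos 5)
Your proposal is correct and follows essentially the same route as the paper's proof: the nontrivial inclusion is obtained from the minimality $v_N \le v_i$ together with condition~\ref{enumerate:lightcone_regular_2} of Definition~\ref{definition:lightcone_regular}, the only difference being that you phrase the final step as an explicit contrapositive while the paper states it directly. Your care with the direction of the order $\le$ is also exactly the point the paper's argument relies on.
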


\begin{proof}
	We only show that $H_0 \subset \{ h \in H \mid h + v_N \notin H \}$ since the opposite is clear by definition.
	
	Let $h \in H_0$.
	Then,
	\[
		h + v_i \notin H
	\]
	for some $i$.
	Since $v_N \le v_i$ by Assumption~\ref{assumption:invertible}, we have
	\[
		h + v_N \le h + v_i.
	\]
	Therefore, it follows from condition~\ref{enumerate:lightcone_regular_2} in Definition~\ref{definition:lightcone_regular} that $h + v_N \notin H$.
	Hence, $h$ belongs to the RHS.
\end{proof}

\begin{lemma}\label{lemma:translation}
	Let $H \subset L$ be light-cone regular and let $h_1, \ldots, h_m \in L$.
	Then, we have
	\[
		h_1, \ldots, h_m \in H + \ell v_N \quad
		(= \{ h + \ell v_N \mid h \in H \})
	\]
	for a sufficiently large $\ell \in \mathbb{Z}_{> 0}$.
\end{lemma}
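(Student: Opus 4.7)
The plan is to reduce to the single-element case $m=1$ by taking a maximum at the end, and then to use condition~\ref{enumerate:lightcone_regular_2} of Definition~\ref{definition:lightcone_regular}. Fix any $h_0\in H$ (which exists since $H$ is nonempty). For each $j$, it suffices to find $\ell_j\ge 0$ such that for all $\ell\ge\ell_j$ one has $h_j-\ell v_N\in h_0-S$, because then $h_j-\ell v_N\in H$ by condition~\ref{enumerate:lightcone_regular_2}. Unwinding the order, $h_j-\ell v_N\in h_0-S$ is exactly the statement $\ell v_N-(h_j-h_0)\in S$. Taking $\ell\ge\max_j\ell_j$ will then yield the conclusion.

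The heart of the matter is the purely algebraic claim: \emph{for every $v\in L$ there exists $\ell_0\ge 0$ with $\ell v_N-v\in S$ for all $\ell\ge\ell_0$.} By Assumption~\ref{assumption:general_equation}, the vectors $v_1,\ldots,v_N$ generate $L$, so we can write $v=\sum_{i=1}^N c_i v_i$ with $c_i\in\mathbb{Z}$, and split $c_i=c_i^+-c_i^-$ with $c_i^\pm\ge 0$. Then
\[
\ell v_N-v \;=\; \Bigl(\ell v_N-\sum_i c_i^+\,v_i\Bigr)+\sum_i c_i^-\,v_i,
\]
and the second sum lies in $S$ by definition. For the first parenthesis, set $C^+:=\sum_i c_i^+$ and rewrite it as
\[
\ell v_N-\sum_i c_i^+\,v_i \;=\;(\ell-C^+)\,v_N+\sum_i c_i^+\,(v_N-v_i).
\]
By Assumption~\ref{assumption:invertible}, $v_N$ is the minimum of $\{0,v_1,\ldots,v_N\}$, so $v_N\le v_i$ for every $i$, i.e.\ $v_N-v_i\in S$; thus the last sum lies in $S$. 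Also $v_N\in S$ since $v_N\le 0$. Hence, as soon as $\ell\ge C^+$, both summands belong to $S$, so the whole expression lies in $S$, as claimed.

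Applying this claim to $v=h_j-h_0$ for each $j=1,\ldots,m$ produces constants $\ell_j$, and $\ell:=\max_j\ell_j$ (which is a finite positive integer, adjusting trivially to make it $>0$) has the property that $\ell v_N-(h_j-h_0)\in S$ for every $j$. Equivalently, $h_j-\ell v_N\in h_0-S\subset H$, so $h_j\in H+\ell v_N$ for all $j$, as required.

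There is no real obstacle here; the only thing one must be careful about is the sign/direction of the order relation (note that $v_N\in S$, while $-v_N$ in general is not), and the fact that $L$ may have torsion does not affect the argument because the integer expansion $v=\sum c_i v_i$ need not be unique---only its existence is used.
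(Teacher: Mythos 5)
Your proof is correct and takes essentially the same route as the paper: both express the difference of lattice points as an integer combination of the generators $v_1,\ldots,v_N$, split it into nonnegative and nonpositive parts, use the minimality of $v_N$ (so that $v_N - v_i \in S$) to absorb everything into $S$ for $\ell$ large, and conclude via condition~\ref{enumerate:lightcone_regular_2} of Definition~\ref{definition:lightcone_regular}. The only cosmetic difference is that the paper phrases the key step as the order inequality $h_1 \ge h + \ell v_N$ and applies condition~\ref{enumerate:lightcone_regular_2} to the translated domain $H + \ell v_N$, whereas you verify $\ell v_N - (h_j - h_0) \in S$ directly from a fixed base point $h_0 \in H$.
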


\begin{proof}
	It is sufficient to consider the case $m = 1$.
	Take an arbitrary $h \in H$.
	Since the shift vectors $v_1, \ldots, v_N$ generate the lattice $L$ by Assumption~\ref{assumption:general_equation}, there exist $a_1, \ldots, a_N \in \mathbb{Z}$ such that
	\[
		h_1 - h = a_1 v_1 + \cdots + a_N v_N.
	\]
	Using
	\[
		g \ge g + v_j \ge g + v_N
	\]
	for $g \in L$ and $j = 1, \ldots, N$, we have
	\[
		h_1 = h + \sum_i a_i v_i
		\ge h + \sum_i |a_i| v_i
		\ge h + \sum_i |a_i| v_N.
	\]
	Therefore, if $\ell \ge \sum_i |a_i|$, then we have
	\[
		h_1 \ge h + \ell v_N.
	\]
	Since $h + \ell v_N \in H + \ell v_N$ and the domain $H + \ell v_N$ is light-cone regular, it follows from condition~\ref{enumerate:lightcone_regular_2} in Definition~\ref{definition:lightcone_regular} that $h_1 \in H + \ell v_N$.
\end{proof}

\begin{lemma}\label{lemma:independent_variable}
	Let $H \subset L$ be light-cone regular.
	Let $h \in H$ and $h_0 \in H_0$.
	If $h_0 \nleq h$ (meaning that $h_0 \le h$ does not hold), then $f_h$ is independent of the initial variable $f_{h_0}$.
	That is, if $f_h$ depends on the initial variable $f_{h_0}$, then $h_0 \le h$.
\end{lemma}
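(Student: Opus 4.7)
The plan is to proceed by strong induction on the cardinality of the past light cone $P(h) := \{h' \in H \mid h' \le h\}$, which is finite by condition~\ref{enumerate:lightcone_regular_1} of Definition~\ref{definition:lightcone_regular}.

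For the base case I would handle the situation $h \in H_0$: then $f_h$ is itself an initial variable, and the hypothesis $h_0 \nleq h$ together with reflexivity $h \le h$ forces $h_0 \ne h$, so $f_h$ and $f_{h_0}$ are distinct members of $\{f_{h'} \mid h' \in H_0\}$ and the conclusion is immediate.

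For the inductive step, assume $h \in H \setminus H_0$, so by the definition of $H_0$ every $h + v_i$ lies in $H$, and the recurrence gives $f_h = \Phi(f_{h+v_1}, \ldots, f_{h+v_N})$. I would first verify two basic order-theoretic facts. First, $P(h + v_i) \subsetneq P(h)$: the inclusion $P(h+v_i) \subset P(h)$ follows from $v_i \in S$ together with transitivity of $\le$, while the strictness comes from observing that $h \in P(h) \setminus P(h + v_i)$, since $h \le h + v_i$ would require $-v_i \in S$, and this combined with the obvious $v_i \in S$ would contradict Assumption~\ref{assumption:linear_independence}. Second, $h_0 \nleq h$ implies $h_0 \nleq h + v_i$ by the contrapositive: if $h_0 \le h + v_i$, then transitivity with $h + v_i \le h$ would give $h_0 \le h$. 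Combining these, the induction hypothesis applies to each of $h + v_1, \ldots, h + v_N$, so none of $f_{h + v_1}, \ldots, f_{h + v_N}$ involves $f_{h_0}$; hence $f_h$, obtained by substituting these expressions into the Laurent polynomial $\Phi$, does not involve $f_{h_0}$ either.

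I do not anticipate any serious obstacle: the lemma merely formalizes the intuition that information propagates only forward from the initial boundary, and the finiteness condition~\ref{enumerate:lightcone_regular_1} is precisely what makes the induction well-founded. The only point requiring genuine care is the strict drop $P(h+v_i) \subsetneq P(h)$, which hinges on the $\mathbb{Z}_{\ge 0}$-independence built into Assumption~\ref{assumption:linear_independence}; without it, some $v_i$ could sit in $S \cap (-S)$ and the induction would collapse.
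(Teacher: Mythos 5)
Your proof is correct and follows essentially the same route as the paper: induction over the past light cone (the paper phrases it as induction on $h$ with respect to $\le$, you make it concrete via the cardinality $\#\{h' \in H \mid h' \le h\}$, which is the quantity the paper later calls $d_H(h)$), with the base case $h \in H_0$ and the key observation that $h_0 \nleq h$ propagates to $h_0 \nleq h + v_i$ by transitivity. Your additional care in verifying the strict decrease $P(h+v_i) \subsetneq P(h)$ via Assumption~\ref{assumption:linear_independence} is a correct and welcome explicit justification of the well-foundedness that the paper leaves implicit.
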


\begin{proof}
	Let $h_0 \nleq h$.
	We show that $f_h$ does not depend on $f_{h_0}$ by induction on $h$ with respect to the order $\le$.

	If $h \in H_0$, then $f_h$ is an initial variable.
	Therefore, it is sufficient to consider the case $h \in H \setminus H_0$.
	Since $h_0 \nleq h + v_i$ for all $i$ by the definition of the order $\le$, $f_{h + v_1}, \ldots, f_{h + v_N}$ are all independent of $f_{h_0}$.
	Hence, $f_h$, which is determined by $f_{h + v_1}, \ldots, f_{h + v_N}$, is also independent of $f_{h_0}$.
\end{proof}

\begin{lemma}\label{lemma:coefficient_extension}
	Let $H \subset L$ be light-cone regular and assume that equation \eqref{equation:general_equation} possesses the Laurent property, the irreducibility and the coprimeness.
	Then, the equation also has these three properties even when considered over the coefficient ring $K$ (the field of fractions of $R$).
	That is, for any $h \in H \setminus H_0$, $f_h$ is an irreducible element of the ring $K_{H_0}$ and for any pair $h_1, h_2 \in H$ ($h_1 \ne h_2$), $f_{h_1}$ and $f_{h_2}$ are coprime in $K_{H_0}$.
\end{lemma}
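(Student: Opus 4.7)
The plan is to transfer each of the three properties from $R_{H_0}$ to $K_{H_0}$ using Gauss-lemma-type results for Laurent polynomial rings, which the paper defers to Appendix~\ref{appendix:algebra}. The Laurent property is automatic: since $R_{H_0}\subset K_{H_0}$, any $f_h\in R_{H_0}$ already lies in $K_{H_0}$. The substantive work is the irreducibility and the coprimeness.

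For the irreducibility of $f_h$ with $h\in H\setminus H_0$, I would pass through the content–primitive decomposition $f_h=c\,\tilde f_h$, where $c\in R$ is the content and $\tilde f_h\in R_{H_0}$ is primitive. A non-unit of $R$ is still a non-unit of $R_{H_0}$ (the units of $R_{H_0}$ are exactly $R^\times$ times Laurent monomials), so irreducibility of $f_h$ in $R_{H_0}$ forces either $c\in R^\times$ or $\tilde f_h$ to be a unit of $R_{H_0}$. In the latter case $f_h$ is a Laurent monomial, hence a unit in $K_{H_0}$, and so irreducible there by the paper's convention. In the former case $f_h$ is primitive up to a unit, and Gauss's lemma for Laurent polynomial rings gives irreducibility in $K_{H_0}$ directly. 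For $h\in H_0$ the iterate $f_h$ is a variable, hence a unit in both rings.

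For the coprimeness I would argue by contradiction: suppose $f_{h_1}$ and $f_{h_2}$ shared a non-unit common factor in $K_{H_0}$, and pick an irreducible such factor $d\in K_{H_0}$. After multiplying by a suitable nonzero element of $K$ (a unit in $K_{H_0}$), we may assume $d\in R_{H_0}$ is primitive. Gauss's lemma then supplies two facts: $d$ is irreducible, hence prime, in $R_{H_0}$; and since $d$ is primitive and divides $f_{h_i}\in R_{H_0}$ in the larger ring $K_{H_0}$, it already divides $f_{h_i}$ in $R_{H_0}$. Applied to both $i=1,2$, this produces a common prime factor of $f_{h_1}$ and $f_{h_2}$ in $R_{H_0}$, contradicting the coprimeness hypothesis.

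The only real obstacle is organizing the Gauss-type machinery (content–primitive decomposition, product of primitives is primitive, and the fact that divisibility by a primitive element descends from $K_{H_0}$ to $R_{H_0}$) in the Laurent-polynomial setting rather than the ordinary polynomial setting. Once these standard statements are available from Appendix~\ref{appendix:algebra}, the proof above is essentially formal bookkeeping; no new input from the dynamics of equation~\eqref{equation:general_equation} is needed.
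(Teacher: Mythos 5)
Your proposal is correct and follows essentially the same route as the paper: the Laurent property is the trivial inclusion $R_{H_0}\subset K_{H_0}$, the irreducibility is the standard Gauss-type fact that an irreducible element of $R_{H_0}$ (primitive up to a unit, or a monomial) remains irreducible in $K_{H_0}$, and your coprimeness argument is precisely the proof of Lemma~\ref{lemma:coprime_extension}, which the paper simply cites. The only difference is that you spell out inline the content--primitive bookkeeping that the paper delegates to Appendix~\ref{appendix:algebra}.
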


\begin{proof}
	The Laurent property immediately follows from $R_{H_0} \subset K_{H_0}$.
	If $f_h \in R_{H_0}$ is irreducible, then $f_h$ is also irreducible as an element of $K_{H_0}$, which leads to the irreducibility over $K$.
	The coprimeness follows from Lemma~\ref{lemma:coprime_extension}.
\end{proof}

\begin{remark}
	While the Laurent property and the coprimeness are preserved under any extension of the coefficient ring, the irreducibility is not in general.
	For example, the Laurent system
	\[
		f_n = \frac{f^2_{n - 1} + 1}{f_{n - 2}}
	\]
	has the irreducibility over $\mathbb{R}$ (or $\mathbb{Z}$, $\mathbb{Q}$) but the RHS itself factorizes over $\mathbb{C}$.
	Therefore, extending the coefficient ring $R$ to an algebraically closed field from the beginning does not always work well.
\end{remark}

\begin{lemma}\label{lemma:not_monomial}
	Suppose that equation \eqref{equation:general_equation} has the Laurent property on a light-cone regular domain $H$.
	Then, for any $h \in H$, there exists $h' \in H$ such that $h \le h'$ and $f_{h'}$ is not a Laurent monomial in the initial variables.
\end{lemma}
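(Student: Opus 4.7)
The plan is a proof by contradiction: assume that $f_{h'}$ is a Laurent monomial in the initial variables for every $h'\in H$ with $h\le h'$, and derive an inconsistency with the recurrence.

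First I would single out the test point $h^\dagger := h-v_N$. The future-light-cone condition (Definition~\ref{definition:lightcone_regular}(\ref{enumerate:lightcone_regular_2})) gives $h^\dagger \in H$; Lemma~\ref{lemma:initial}, combined with $h^\dagger + v_N = h \in H$, gives $h^\dagger \in H\setminus H_0$; and the minimality of $v_N$ in Assumption~\ref{assumption:invertible} (so $v_N-v_i\in S$) yields $h\le h^\dagger + v_i$ for every $i$. Under the contradictory hypothesis, both $\mu:=f_{h^\dagger}$ and each $\nu_i:=f_{h^\dagger + v_i}$ is a Laurent monomial in the initial variables.

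Next I would expand $\Phi=\sum_{k=1}^{r}c_k\,y^{a^k}$ with $r\ge 2$ distinct exponent vectors (using that $\Phi$ is not a Laurent monomial). The recurrence $\mu=\Phi(\nu_1,\ldots,\nu_N)$ becomes an identity in $R_{H_0}$ of the form $\mu = \sum_k c_k\rho_k\,f^{Ba^k}$, where $B$ is the integer matrix whose columns are the exponent vectors of $\nu_1,\ldots,\nu_N$ in $\mathbb{Z}^{H_0}$, and $\rho_k\in R\setminus\{0\}$. Because the initial variables $\{f_{h_0}\}_{h_0\in H_0}$ are algebraically independent, two Laurent monomials in them coincide iff their exponent vectors coincide; hence this sum collapses to the single monomial $\mu$ only if the images $Ba^1,\ldots,Ba^r\in\mathbb{Z}^{H_0}$ cluster into a single effective equivalence class. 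In particular there exist $k\ne\ell$ with $B(a^k-a^\ell)=0$, a nontrivial $\mathbb{Z}$-linear dependence among the columns of $B$.

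To convert this into a contradiction I would isolate a single initial point $h_0^\ast\in H_0$ whose past-cone status separates the shift indices. Since the distinct shift vectors $v_i$ yield distinct points $h^\dagger+v_i$ with distinct past cones in $H_0$, one can select $h_0^\ast$ lying in the past cone of some, but not all, of the $h^\dagger+v_i$; by Lemma~\ref{lemma:independent_variable} the $h_0^\ast$-coordinate of $E(\nu_i)$ then vanishes for those $i$ whose past cone excludes $h_0^\ast$. Extracting the $h_0^\ast$-coordinate of the relation $B(a^k-a^\ell)=0$ reduces to a vanishing statement about the $h_0^\ast$-exponent of some single $\nu_{i_0}$, i.e., it forces $f_{h^\dagger + v_{i_0}}$ to be independent of $f_{h_0^\ast}$.

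The main obstacle is the final step: one must show that this forced vanishing is incompatible with the actual Laurent expansion, i.e., that $f_{h^\dagger + v_{i_0}}$ genuinely depends on $f_{h_0^\ast}$. Tracing the recurrence from $h^\dagger + v_{i_0}$ down to $h_0^\ast$ and invoking the irreducibility and non-monomiality of both $\Phi$ and its inverse $\Psi$ (Assumption~\ref{assumption:invertible}) to preclude complete Laurent-polynomial cancellations is the delicate part of the argument; combined, if necessary, with iterating the construction at $h^\dagger - kv_N$ for $k\ge 1$ to accumulate incompatible constraints, this should yield the desired contradiction.
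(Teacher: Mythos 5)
Your proposal begins exactly as the paper does (contradiction, test point $g = h - v_N$, all of $f_{g+v_1},\ldots,f_{g+v_N}$ and $f_g$ forced to be Laurent monomials, expansion of $\Phi$ into $r\ge 2$ monomials), and your observation that a collapse to a single monomial forces a relation $B(a^k-a^\ell)=0$ among exponent vectors is correct. But the argument is not completed, and the route you sketch for finishing it has a genuine gap. The paper closes the argument in one stroke: the iterates $f_{g+v_1},\ldots,f_{g+v_N}$ are \emph{algebraically independent over $K$} (they satisfy no algebraic relation, thanks to the invertibility of the evolution), and by Lemma~\ref{lemma:algebraic_independence_monomial} algebraic independence of nonzero Laurent monomials is equivalent to linear independence of the whole family of Laurent monomials built from them; hence distinct exponent vectors $a^k\ne a^\ell$ give $Ba^k\ne Ba^\ell$ and no collapse is possible (this is exactly Lemma~\ref{lemma:laurent_monomial_independent}). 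This is the key idea your proposal is missing.

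Your substitute --- isolating a single initial point $h_0^\ast$ whose past-cone membership separates the shift indices --- runs into two concrete problems. First, the coordinate of $B(a^k-a^\ell)=0$ at $h_0^\ast$ is $\sum_i (a^k_i-a^\ell_i)\,E_{h_0^\ast}(\nu_i)=0$, a weighted sum over \emph{all} indices whose past cone contains $h_0^\ast$; it does not in general reduce to the vanishing of a single exponent $E_{h_0^\ast}(\nu_{i_0})$ unless you can arrange that exactly one index in the support of $a^k-a^\ell$ sees $h_0^\ast$, and you give no construction guaranteeing this. Second, even granting that reduction, you would need the \emph{converse} of Lemma~\ref{lemma:independent_variable} --- that $h_0^\ast\le h^\dagger+v_{i_0}$ forces genuine dependence of $f_{h^\dagger+v_{i_0}}$ on $f_{h_0^\ast}$ --- and that converse is neither stated nor true in general; you acknowledge this yourself by calling it ``the delicate part.'' I recommend replacing the whole second half of your argument with the algebraic-independence step above.
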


\begin{proof}
	Seeking for a contradiction, assume that there exists $h \in H$ such that for any $h' \ge h$, $f_{h'}$ is a Laurent monomial in the initial variables.
	Let $g = h - v_N$.
	Then, $f_{g + v_1}, \ldots, f_{g + v_N}$ are all Laurent monomials.
	Moreover, $f_{g + v_1}, \ldots, f_{g + v_N}$ are algebraically independent over $K$ since they have no algebraic relation.
	Therefore, it follows from Lemma~\ref{lemma:laurent_monomial_independent} that $f_g = \Phi(f_{g + v_1}, \ldots, f_{g + v_N})$ is not a Laurent monomial, which leads to a contradiction.
\end{proof}

\begin{definition}[$d_H(h)$ \cite{exkdv}]
	For a light-cone regular domain $H \subset L$ and a point $h \in H$, we define $d_H(h) \in \mathbb{Z}_{\ge 1}$ as
	\[
		d_H(h) = \# (H \cap (h + S)) \quad
		(= \# \{ h' \in H \mid h' \le h \}),
	\]
	where $\#$ denotes the cardinality of a set.
	Note that the set $H \cap (h + S)$ is always finite since $H$ is light-cone regular.
\end{definition}

\begin{lemma}
	Let $H \subset L$ be light-cone regular and suppose that $h_1, h_2 \in H$ satisfy $h_1 \le h_2$.
	Then, we have
	\[
		d_H(h_1) \le d_H(h_2)
	\]
	and the equality holds if and only if $h_1 = h_2$.
\end{lemma}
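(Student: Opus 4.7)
My plan is to prove both the inequality and the equality characterization directly from the set-theoretic definition of $d_H$, using only transitivity and antisymmetry of the order $\le$.

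For the inequality $d_H(h_1) \le d_H(h_2)$, I would argue as follows. The set $H \cap (h_1 + S)$ consists of all $h' \in H$ with $h' \le h_1$. Given $h_1 \le h_2$, transitivity of $\le$ yields $h' \le h_2$, so $h' \in H \cap (h_2 + S)$. Thus $H \cap (h_1 + S) \subseteq H \cap (h_2 + S)$, and since both sets are finite by condition~\ref{enumerate:lightcone_regular_1} of Definition~\ref{definition:lightcone_regular}, taking cardinalities gives $d_H(h_1) \le d_H(h_2)$.

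For the equality case, one direction is trivial: $h_1 = h_2$ implies $d_H(h_1) = d_H(h_2)$. Conversely, suppose $h_1 \ne h_2$ and $h_1 \le h_2$; I would exhibit an element witnessing strict inclusion. The natural candidate is $h_2$ itself: clearly $h_2 \in H \cap (h_2 + S)$ because $h_2 \le h_2$ (since $0 \in S$). On the other hand, if $h_2$ were in $H \cap (h_1 + S)$, we would have $h_2 \le h_1$, and combined with $h_1 \le h_2$ this contradicts antisymmetry of $\le$, which is guaranteed by Assumption~\ref{assumption:linear_independence} (and noted explicitly in Definition~\ref{definition:lightcone_regular}'s preceding paragraph). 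Hence the inclusion $H \cap (h_1 + S) \subsetneq H \cap (h_2 + S)$ is strict and $d_H(h_1) < d_H(h_2)$.

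There is no serious obstacle here; the only point that deserves a moment's care is confirming that $\le$ is a genuine partial order (in particular antisymmetric), which is where the $\mathbb{Z}_{\ge 0}$-linear independence of the shift vectors from Assumption~\ref{assumption:linear_independence} enters: if both $h_1 - h_2$ and $h_2 - h_1$ lie in $S$, writing each as a nonnegative combination of the $v_i$ and summing forces all coefficients to vanish, so $h_1 = h_2$.
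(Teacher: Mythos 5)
Your proof is correct and follows the same route as the paper, which simply states that the claim "immediately follows from the inclusion $\{ h \in H \mid h \le h_1 \} \subset \{ h \in H \mid h \le h_2 \}$"; you have merely spelled out the transitivity argument for the inclusion and the antisymmetry argument (with $h_2$ as the witness) for strictness. No issues.
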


\begin{proof}
	The statement immediately follows from the inclusion
	\[
		\{ h \in H \mid h \le h_1 \} \subset \{ h \in H \mid h \le h_2 \}.
	\]
\end{proof}

\begin{lemma}\label{lemma:mutation}
	Let $H \subset L$ be light-cone regular and let $h_0$ be a minimal element of $H$ with respect to the order $\le$.
	Then, $H' = H \setminus \{ h_0 \}$ is light-cone regular, too.
	In particular, if $h \in H$, $h_0 \le h$ and $h_0 \ne h$, then $d_{H'}(h) < d_H(h)$.
\end{lemma}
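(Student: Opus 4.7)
The plan is to verify the two conditions of Definition~\ref{definition:lightcone_regular} for $H'$ in turn, and then read off the cardinality inequality as an essentially one-line consequence. The only real work lies in checking that removing $h_0$ does not break the future-cone condition; for this, the minimality of $h_0$ is precisely what is needed.

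For the finiteness of past cones (condition~\ref{enumerate:lightcone_regular_1}), the inclusion $H' \subset H$ makes the check automatic: for any $h \in H'$,
\[
\{h' \in H' \mid h' \le h\} \subset \{h' \in H \mid h' \le h\},
\]
and the larger set is finite by the light-cone regularity of $H$.

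The substantive step is the future-cone condition~\ref{enumerate:lightcone_regular_2}. For $h \in H'$ I already have $h - S \subset H$, so the only question is whether $h_0$ itself lies in $h - S$. If it did, then $h_0 = h - s$ for some $s \in S$, so $h - h_0 = s \in S$, which by definition of $\le$ means $h \le h_0$. Combined with $h \in H$ and the minimality of $h_0$ in $H$, this would force $h = h_0$, contradicting $h \in H' = H \setminus \{h_0\}$. Hence $h_0 \notin h - S$ and therefore $h - S \subset H'$, establishing the future-cone condition for $H'$.

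Once $H'$ is known to be light-cone regular, $d_{H'}(h)$ is well defined, and
\[
\{h' \in H' \mid h' \le h\} = \{h' \in H \mid h' \le h\} \setminus \{h_0\}.
\]
Under the hypothesis $h_0 \le h$ and $h_0 \ne h$, the element $h_0$ does lie in the right-hand set before removal, so I obtain $d_{H'}(h) = d_H(h) - 1$, yielding the strict inequality. I do not anticipate any genuine obstacle in this lemma; the one moment requiring care is making sure ``minimal'' is applied in the correct direction of the partial order, which is exactly the direction in which the future cones open.
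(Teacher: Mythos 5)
Your proof is correct and follows essentially the same route as the paper: past-cone finiteness via the inclusion $H'\subset H$, and the future-cone condition by showing that $h_0\in h-S$ would force $h\le h_0$ and hence $h=h_0$ by minimality. The explicit count $d_{H'}(h)=d_H(h)-1$ is a slightly more detailed version of what the paper leaves implicit.
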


\begin{proof}
	It is sufficient to show that $H'$ is light-cone regular.
	Let $h \in H'$.

	First, we show condition~\ref{enumerate:lightcone_regular_1} in Definition~\ref{definition:lightcone_regular}.
	It follows from $H' \subset H$ that
	\[
		\{ h' \in H' \mid h' \le h \} \subset \{ h' \in H \mid h' \le h \}.
	\]
	Since the RHS is a finite set, so is the LHS.

	Next, we show condition~\ref{enumerate:lightcone_regular_2} in Definition~\ref{definition:lightcone_regular}.
	Since $h_0$ is minimal in $H$ and $h \ne h_0$, we have $h \nleq h_0$.
	Hence, we have
	\[
		\{ h' \in H' \mid h' \ge h \} = \{ h' \in H \mid h' \ge h \} \subset H'.
	\]
\end{proof}

\begin{definition}[domain mutation]\label{definition:domain_mutation}
	We will call the above procedure to obtain $H'$ from $H$ the \emph{mutation} of $H$ at $h_0$.
	We also call $H'$ the mutation of $H$ at $h_0$.
\end{definition}

The term ``mutation'' comes from the theory of cluster algebras.

\begin{lemma}
	Let $H \subset L$ be light-cone regular and let $H'$ be the mutation of $H$ at a minimal element $h_0 \in H$.
	Then
	\[
		H'_0 = ( H_0 \setminus \{ h_0 \} ) \cup \{ h_0 - v_N \}.
	\]
\end{lemma}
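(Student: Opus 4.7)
My plan is to use Lemma~\ref{lemma:initial} to rewrite everything in terms of the single shift $v_N$, so that
\[
H'_0 = \{ h \in H' \mid h + v_N \notin H' \},
\]
and then analyze membership by splitting on whether $h + v_N$ fails to lie in $H$ or merely fails to lie in $H'$ because it equals the removed point $h_0$.

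First I would observe that $h_0$ itself belongs to $H_0$: since $h_0$ is minimal in $H$ and $h_0 + v_i \le h_0$ with $h_0 + v_i \ne h_0$, we must have $h_0 + v_i \notin H$ for every $i$; in particular $h_0 + v_N \notin H$, so Lemma~\ref{lemma:initial} gives $h_0 \in H_0$. Next I would verify $h_0 - v_N \in H'$: by condition~\ref{enumerate:lightcone_regular_2} of Definition~\ref{definition:lightcone_regular} the future light cone of $h_0$ is contained in $H$, so $h_0 - v_N \in H$, and $h_0 - v_N \ne h_0$ because $v_N \ne 0$.

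For the inclusion $(H_0 \setminus \{h_0\}) \cup \{h_0 - v_N\} \subset H'_0$, take $h \in H_0 \setminus \{h_0\}$: by Lemma~\ref{lemma:initial}, $h + v_N \notin H \supset H'$, so $h \in H'_0$. For $h = h_0 - v_N$, we have $(h_0 - v_N) + v_N = h_0 \notin H'$, so $h_0 - v_N \in H'_0$ as well. For the reverse inclusion, let $h \in H'_0$. If $h + v_N \notin H$, then $h \in H_0$ by Lemma~\ref{lemma:initial}, and $h \in H'$ forces $h \ne h_0$, so $h \in H_0 \setminus \{h_0\}$. Otherwise $h + v_N \in H \setminus H' = \{h_0\}$, which forces $h = h_0 - v_N$.

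There is no serious obstacle; the only point requiring a little care is ensuring the two pieces of the asserted union are genuinely disjoint and that $h_0 - v_N$ is not already in $H_0$. The latter follows from Lemma~\ref{lemma:initial} because $(h_0 - v_N) + v_N = h_0 \in H$, so $h_0 - v_N \notin H_0$; this also confirms that $h_0 - v_N$ is a new element of the initial boundary created by the mutation.
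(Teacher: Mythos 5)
Your proof is correct and follows the same route as the paper, which simply notes that the statement is an immediate consequence of Lemma~\ref{lemma:initial}; you have spelled out the case analysis (splitting on whether $h+v_N$ leaves $H$ or merely hits the removed point $h_0$) that the paper leaves implicit. The only tacit ingredient worth flagging is that applying Lemma~\ref{lemma:initial} to $H'$ requires knowing $H'$ is light-cone regular, which is supplied by Lemma~\ref{lemma:mutation}.
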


\begin{proof}
	The statement immediately follows from Lemma~\ref{lemma:initial}.
\end{proof}

The following lemma, which gives the converse of Lemma~\ref{lemma:mutation}, is helpful when one obtains a Laurent system from a cluster or LP algebra, although we do not use it in this paper.

\begin{lemma}
	Let $H \subset L$ be light-cone regular, $h \in H$ and let $H' = H \setminus \{ h \}$.
	If $H'$ is light-cone regular, then $h$ is minimal in $H$.
\end{lemma}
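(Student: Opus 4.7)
The plan is to argue by contraposition: assume $h$ is \emph{not} minimal in $H$, and show that $H' = H \setminus \{h\}$ must violate one of the two defining conditions of a light-cone regular domain. The condition that will fail is condition~\ref{enumerate:lightcone_regular_2} of Definition~\ref{definition:lightcone_regular} (future light cones contained in the domain). Intuitively, if $h$ is not minimal, then some point strictly below it still lies in $H'$, but its future light cone then needs to hit the deleted point $h$.

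Concretely, suppose $h$ is not minimal. Then there exists $g \in H$ with $g \le h$ and $g \ne h$. Since $g \ne h$, we have $g \in H'$. By the definition of the order $\le$, the relation $g \le h$ means $h - g \in -S$, equivalently $h \in \{h' \in L \mid h' \ge g\}$, i.e., $h$ lies in the future light cone emanating from $g$. But $h \notin H'$ by construction of $H'$, so
\[
    \{h' \in L \mid h' \ge g\} \not\subset H',
\]
which is precisely the failure of condition~\ref{enumerate:lightcone_regular_2} at the point $g \in H'$. Hence $H'$ is not light-cone regular, completing the contrapositive.

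There is no real obstacle in this proof: the statement is essentially dual to Lemma~\ref{lemma:mutation}, which establishes that removing a minimal element preserves light-cone regularity. The only thing worth double-checking is that $g$ genuinely belongs to $H'$ (which is immediate because $g \in H$ and $g \ne h$), and that the future-cone condition is indeed the one that fails (condition~\ref{enumerate:lightcone_regular_1} always persists under taking subsets, so if anything fails it must be condition~\ref{enumerate:lightcone_regular_2}).
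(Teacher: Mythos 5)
Your proof is correct and is essentially identical to the paper's: both argue by contraposition, pick $g \in H$ with $g \le h$ and $g \ne h$, note $g \in H'$ while $h \ge g$ lies outside $H'$, and conclude that condition~\ref{enumerate:lightcone_regular_2} fails at $g$. The extra remark that condition~\ref{enumerate:lightcone_regular_1} persists under subsets is accurate but not needed.
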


\begin{proof}
	Suppose that $h$ is not minimal in $H$ and we show that $H'$ is not light-cone regular.
	Since $h$ is not minimal, there exists $h_0 \in H$ such that $h_0 \le h$ and $h_0 \ne h$.
	Since
	\[
		h_0 \in H', \quad
		h \ge h_0, \quad
		h \notin H',
	\]
	the domain $H'$ does not satisfy condition~\ref{enumerate:lightcone_regular_2} in Definition~\ref{definition:lightcone_regular}.
\end{proof}

\begin{remark}
	The above lemma implies that when one obtains a Laurent system on a lattice from a cluster algebra as in \cite{Okubo_rims, Okubo_HM}, any mutation-eligible vertex (meaning that the mutation there produces the equation) in each step must always be minimal in the sense that the corresponding lattice point is minimal in the domain with respect to the order $\le$.
\end{remark}

\begin{example}
	Consider the discrete KdV equation (Example~\ref{example:discrete_kdv}) and define $H \subset L$ as
	\[
		H = \{ (m, n) \in L \mid m, n \ge 0 \} \setminus \{ (0, 0), (1, 0) \}.
	\]
	Then, $H$ is light-cone regular (see Figure~\ref{figure:discrete_kdv_mutation}).

	Let $h_0 = (0, 1)$, which is a minimal element of $H$.
	The mutation $H'$ of $H$ at $h_0$ is
	\[
		H' = \{ (m, n) \in L \mid m, n \ge 0 \} \setminus \{ (0, 0), (1, 0), (0, 1) \}.
	\]
	Let $h = (3, 3) \in H$.
	Then, it follows from Figure~\ref{figure:discrete_kdv_mutation} that
	\[
		d_H(h) = 14, \quad
		d_{H'}(h) = 13.
	\]
\end{example}

\begin{figure}
	\begin{center} \begin{picture}(380, 150)
		
		\multiput(20, 110)(20, 0){7}{\circle{5}}
		\multiput(20, 90)(20, 0){7}{\circle{5}}
		\multiput(20, 70)(20, 0){7}{\circle{5}}
		\multiput(20, 50)(20, 0){7}{\circle{5}}
		\multiput(20, 30)(20, 0){7}{\circle{5}}
		\multiput(60, 10)(20, 0){5}{\circle{5}}
		
		\multiput(20, 110)(20, 0){2}{\circle*{5}}
		\multiput(20, 90)(20, 0){2}{\circle*{5}}
		\multiput(20, 70)(20, 0){2}{\circle*{5}}
		\multiput(20, 50)(20, 0){2}{\circle*{5}}
		\multiput(20, 30)(20, 0){4}{\circle*{5}}
		\multiput(60, 10)(20, 0){5}{\circle*{5}}
		
		\put(0, 0){\framebox(90, 80)}
		
		\put(11, 21){$h_0$}
		\put(71, 61){$h$}
		
		\put(175, 60){$\to$}
		
		\multiput(240, 110)(20, 0){7}{\circle{5}}
		\multiput(240, 90)(20, 0){7}{\circle{5}}
		\multiput(240, 70)(20, 0){7}{\circle{5}}
		\multiput(240, 50)(20, 0){7}{\circle{5}}
		\multiput(260, 30)(20, 0){6}{\circle{5}}
		\multiput(280, 10)(20, 0){5}{\circle{5}}
		
		\multiput(240, 110)(20, 0){2}{\circle*{5}}
		\multiput(240, 90)(20, 0){2}{\circle*{5}}
		\multiput(240, 70)(20, 0){2}{\circle*{5}}
		\multiput(240, 50)(20, 0){3}{\circle*{5}}
		\multiput(260, 30)(20, 0){3}{\circle*{5}}
		\multiput(280, 10)(20, 0){5}{\circle*{5}}
		
		\put(220, 0){\framebox(90, 80)}
		
		\put(291, 61){$h$}
		
	\end{picture} \end{center}
	\caption{Domain mutation at $h_0$ in the case of the discrete KdV equation.
	The initial boundaries consist of the points marked by black disks, respectively.
	The boxed regions represent the past light cones emanating from $h$, respectively.}
	\label{figure:discrete_kdv_mutation}
\end{figure}
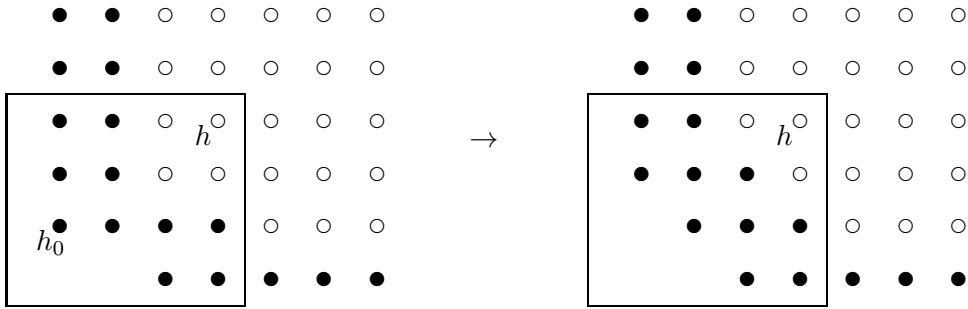


Let us prove Proposition~\ref{proposition:not_unit}.

\begin{proof}[Proof of Proposition~\ref{proposition:not_unit}]
	The ``if'' part is trivial.
	Let $h \in H \setminus H_0$.
	We show that $f_h$ is not a unit.

	Let
	\[
		B = \{ h_0 \in H_0 \mid h_0 \le h \}.
	\]
	Then, it follows from Lemma~\ref{lemma:independent_variable} that $f_h \in R_B$.
	Therefore, by Lemma~\ref{lemma:substitution}, it is sufficient to show that there exists a family $\alpha = (\alpha_b)_{b \in B}$ of elements of $\overline{K}^{\times}$ such that
	\[
		f_{h} \Big|_{f_b \leftarrow \alpha_b (b \in B)} = 0.
	\]

	By Lemma~\ref{lemma:not_monomial}, there exists $h' \in H \setminus H_0$ such that $h' \ge h$, $h' \ne h$ and $f_{h'}$ is not a Laurent monomial in the initial variables.
	Let
	\[
		\widetilde{H} = H - h' + h.
	\]
	Since $\widetilde{H}$ is nothing but a translation of $H$, the equation has the Laurent property and the coprimeness on $\widetilde{H}$.
	Moreover, $f_{h} \in R_{\widetilde{H}_0}$ is not a Laurent monomial, i.e., not a unit in this ring.
	It follows from the coprimeness on $\widetilde{H}$ that $f_h$ is coprime to $f_b$ for any $b \in B$ in the ring $R_{\widetilde{H}_0}$.
	Therefore, by Lemma~\ref{lemma:substitution}, there exists a family $(\beta_{\widetilde{h}})_{\widetilde{h} \in \widetilde{H}_0}$ of elements of $\overline{K}^{\times}$ such that
	\[
		f_{h} \Big|_{f_{\widetilde{h}} \leftarrow \beta_{\widetilde{h}} (\widetilde{h} \in \widetilde{H}_0)} = 0, \quad
		f_b \Big|_{f_{\widetilde{h}} \leftarrow \beta_{\widetilde{h}} (\widetilde{h} \in \widetilde{H}_0)} \ne 0 \quad
		(\forall b \in B).
	\]
	Let us define $\alpha_b \in \overline{K}^{\times}$ for each $b \in B$ as
	\[
		\alpha_b = f_b \Big|_{f_{\widetilde{h}} \leftarrow \beta_{\widetilde{h}} ({\widetilde{h}} \in {\widetilde{H}_0})}.
	\]
	Then, by construction, we have
	\begin{align*}
		f_h \Big|_{f_b \leftarrow \alpha_b (b \in B)}
		&= f_h \Big|_{f_{\widetilde{h}} \leftarrow \beta_{\widetilde{h}} ({\widetilde{h}} \in {\widetilde{H}_0})} = 0
	\end{align*}
	(see Figure~\ref{figure:main_strategy}).
	Hence, $f_h$ is not a unit in $R_{H_0}$.
\end{proof}

\begin{figure}
	\begin{center} \begin{picture}(400, 310)
		
		\multiput(20, 260)(20, -20){11}{\color{red} \circle*{8}}
		\multiput(20, 280)(20, -20){12}{\color{red} \circle*{8}}
		\multiput(40, 280)(20, -20){12}{\color{red} \circle*{8}}

		\multiput(60, 280)(20, -20){12}{\color{red} \circle{8}}
		\multiput(80, 280)(20, -20){12}{\color{red} \circle{8}}
		\multiput(100, 280)(20, -20){12}{\color{red} \circle{8}}
		\multiput(120, 280)(20, -20){12}{\color{red} \circle{8}}
		\multiput(140, 280)(20, -20){12}{\color{red} \circle{8}}
		\multiput(160, 280)(20, -20){12}{\color{red} \circle{8}}

		\multiput(180, 280)(20, -20){11}{\color{red} \circle{8}}
		\multiput(200, 280)(20, -20){10}{\color{red} \circle{8}}
		\multiput(220, 280)(20, -20){9}{\color{red} \circle{8}}
		\multiput(240, 280)(20, -20){8}{\color{red} \circle{8}}
		\multiput(260, 280)(20, -20){7}{\color{red} \circle{8}}
		\multiput(280, 280)(20, -20){6}{\color{red} \circle{8}}
		\multiput(300, 280)(20, -20){5}{\color{red} \circle{8}}
		\multiput(320, 280)(20, -20){4}{\color{red} \circle{8}}
		\multiput(340, 280)(20, -20){3}{\color{red} \circle{8}}
		\multiput(360, 280)(20, -20){2}{\color{red} \circle{8}}
		\multiput(380, 280)(20, -20){1}{\color{red} \circle{8}}

		\multiput(80, 280)(20, -20){12}{\circle*{4}}
		\multiput(100, 280)(20, -20){12}{\circle*{4}}
		\multiput(120, 280)(20, -20){12}{\circle*{4}}

		\multiput(140, 280)(20, -20){12}{\circle{4}}
		\multiput(160, 280)(20, -20){12}{\circle{4}}
		\multiput(180, 280)(20, -20){11}{\circle{4}}
		\multiput(200, 280)(20, -20){10}{\circle{4}}
		\multiput(220, 280)(20, -20){9}{\circle{4}}
		\multiput(240, 280)(20, -20){8}{\circle{4}}
		\multiput(260, 280)(20, -20){7}{\circle{4}}
		\multiput(280, 280)(20, -20){6}{\circle{4}}
		\multiput(300, 280)(20, -20){5}{\circle{4}}
		\multiput(320, 280)(20, -20){4}{\circle{4}}
		\multiput(340, 280)(20, -20){3}{\circle{4}}
		\multiput(360, 280)(20, -20){2}{\circle{4}}
		\multiput(380, 280)(20, -20){1}{\circle{4}}

		\put(230, 230){$h$}

		\put(140, 185){$B$}
		\multiput(150, 250)(20, -20){5}{\line(1, 0){20}}
		\multiput(110, 230)(20, -20){7}{\line(1, 0){20}}
		\multiput(170, 250)(20, -20){5}{\line(0, -1){20}}
		\multiput(110, 250)(20, -20){7}{\line(0, -1){20}}
		\put(110, 250){\line(1, 0){40}}
		\put(250, 110){\line(0, 1){40}}

		\put(10, 50){\framebox(245, 205)}
		\put(120, 55){$h + S$}

		\put(20, 290){\color{red} $\widetilde{H}_0$}
		\put(320, 40){$H_0$}

	\end{picture} \end{center}
	\caption{Main strategy of the proof of Proposition~\ref{proposition:not_unit} in the case of the discrete KdV equation.
	Here, $H$ is a staircase domain (marked with the small circles) and $\widetilde{H}$ is a translation of $H$ (marked with the big circles).
	If we consider the equation on $H$, each $f_b$ ($b \in B$) is an initial variable.
	On the other hand, if considered on $\widetilde{H}$, $f_b$ is a Laurent polynomial in $f_{\widetilde{h}}$ ($\widetilde{h} \in \widetilde{H}_0$).
	This strategy is also used in the proofs of Lemmas~\ref{lemma:main_laurent} and \ref{lemma:main_irreducibility}.}
	\label{figure:main_strategy}
\end{figure}
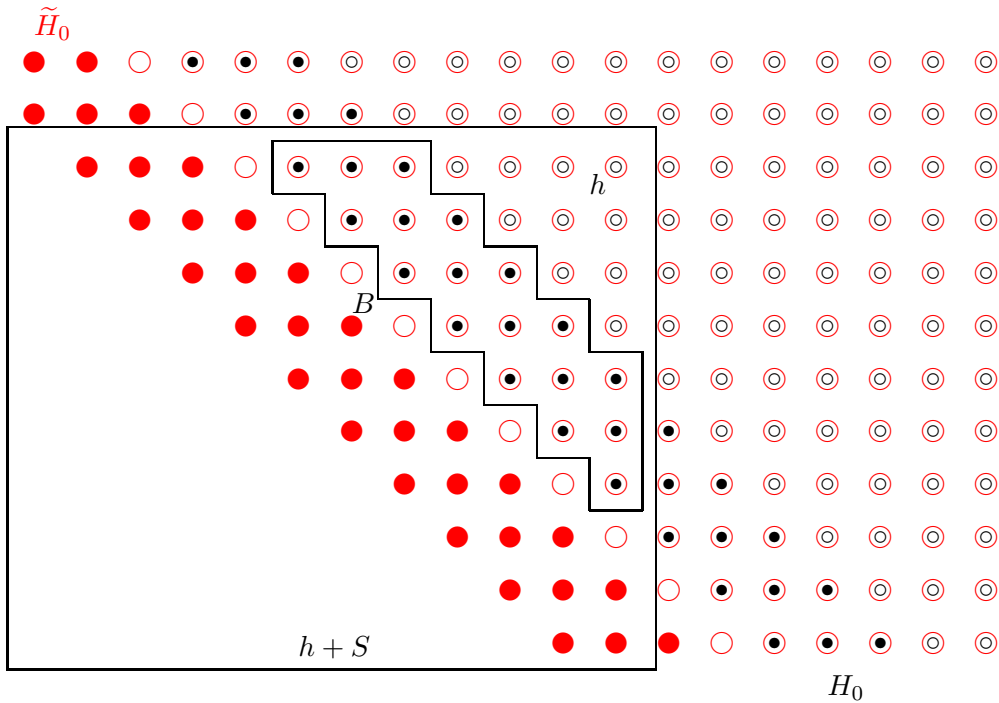

The following two lemmas play essential roles in the proof of Theorem~\ref{theorem:main}.

\begin{lemma}\label{lemma:main_laurent}
	Let $H, \widetilde{H} \subset L$ be light-cone regular.
	Suppose that equation \eqref{equation:general_equation} has the Laurent property, the irreducibility and the coprimeness on $\widetilde{H}$.
	Suppose also that the equation has the Laurent property on $H$.
	Let $h_0 \in H$ be a minimal element and let $H'$ be the mutation of $H$ at $h_0$, i.e.,
	\[
		H' = H \setminus \{ h_0 \}, \quad
		H'_0 = (H_0 \setminus \{ h_0 \}) \cup \{ h_0 - v_N \}.
	\]
	Then, the equation also possesses the Laurent property on $H'$.
\end{lemma}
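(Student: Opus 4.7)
My strategy models the substitution argument used in the proof of Proposition~\ref{proposition:not_unit}, and begins with a reformulation of the relevant rings. Since $H_0$ and $H'_0$ coincide outside $\{h_0, h_0 - v_N\}$, and since $h_0 - v_N \in H$ (because $h_0 \le h_0 - v_N$, as $v_N \in S$), the Laurent property on $H$ gives $f_{h_0 - v_N} \in R_{H_0}$. As subrings of the common field of fractions, this yields
\[
	R_{H_0}[f^{\pm 1}_{h_0 - v_N}] \;=\; R_{H'_0}[f^{\pm 1}_{h_0}] \;=\; R[f^{\pm 1}_{h'_0} : h'_0 \in H_0 \cup H'_0],
\]
so $f_h \in R_{H_0}$ implies $f_h \in R_{H'_0}[f^{\pm 1}_{h_0}]$. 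Choose $g \in R_{H'_0}$ and a minimal $k \ge 0$ with $f_h \cdot f_{h_0}^k = g$; the lemma reduces to showing that $k = 0$.

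To force $k = 0$, I invoke Lemma~\ref{lemma:translation} to choose $\ell$ so that $\widetilde{H}_\ell := \widetilde{H} + \ell v_N$ contains $\{h, h_0\} \cup H'_0$, with none of these points lying in $\widetilde{H}_{\ell, 0}$; by autonomy the equation has all three properties on $\widetilde{H}_\ell$, so inside the UFD $R_{\widetilde{H}_{\ell, 0}}$ the elements $f_{h_0}$, $f_h$, and each $f_{h'_0}$ (for $h'_0 \in H'_0$) are pairwise coprime irreducibles. If $k \ge 1$, Lemma~\ref{lemma:substitution} applied to the coprime family $\{f_{h_0}\} \cup \{f_h\} \cup \{f_{h'_0} : h'_0 \in H'_0\}$ produces a $\beta \in (\overline{K}^{\times})^{\widetilde{H}_{\ell, 0}}$ with $f_{h_0}|_\beta = 0$ while $f_h|_\beta$ and every $f_{h'_0}|_\beta$ remain nonzero. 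Substituting into $f_h \cdot f_{h_0}^k = g$ yields $0 = g$ evaluated at the point $(f_{h'_0}|_\beta)_{h'_0 \in H'_0} \in (\overline{K}^{\times})^{H'_0}$. Varying $\beta$ over the full ``$f_{h_0}$-vanishing locus'' and re-applying Lemma~\ref{lemma:substitution} inside $R_{H'_0}$ eventually forces $g = 0$ in $R_{H'_0}$, contradicting the minimality of $k$.

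The main obstacle is this final density step: as $\beta$ ranges over substitutions with $f_{h_0}|_\beta = 0$ and all other $f$'s nonzero, one must show that the induced values $(f_{h'_0}|_\beta)_{h'_0 \in H'_0}$ realize enough of $(\overline{K}^{\times})^{H'_0}$ to detect the vanishing of any nonzero Laurent polynomial in $R_{H'_0}$. This should follow from the algebraic independence of $\{f_{h'_0}\}_{h'_0 \in H'_0}$ inside the field of fractions (since $H'_0$ serves as initial data for the equation on $H'$) together with the irreducibility of $f_{h_0}$ in $R_{\widetilde{H}_{\ell, 0}}$, which prevents the hypersurface $\{f_{h_0} = 0\}$ from being swept into a lower-dimensional locus under the projection induced by the equation. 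The Laurent property on $H$ enters crucially only in the first paragraph to justify $f_h \in R_{H'_0}[f^{\pm 1}_{h_0}]$; all the heavy lifting is the substitution-based coprimeness argument in the second paragraph.
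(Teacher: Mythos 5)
Your first paragraph is sound and matches the paper's opening move: $f_h\in R_{H'_0}[f^{-1}_{h_0}]$, so $f_hf^{k}_{h_0}\in R_{H'_0}$ for some minimal $k\ge 0$. The problem is everything after that. Reducing the lemma to ``$k=0$'' means you must prove a \emph{divisibility} statement ($f_{h_0}$ divides $g=f_hf^k_{h_0}$ in $R_{H'_0}$, so that $k$ can be decreased), whereas Lemma~\ref{lemma:substitution} only ever yields \emph{non}-divisibility: a single substitution killing $f_{h_0}$ but not $f_h$ tells you $f_{h_0}\nmid f_h$, which you already knew. To get divisibility of $g$ by $f_{h_0}$ you would need the induced points $(f_{h'_0}|_\beta)_{h'_0}$ to be Zariski-dense in the hypersurface $\{f_{h_0}=0\}$ of the $H'_0$-variable space as $\beta$ ranges over the $\widetilde H_{\ell,0}$-side vanishing locus --- precisely the ``density step'' you flag as the main obstacle. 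This is not a routine consequence of algebraic independence plus irreducibility: it amounts to the dominance of the evaluation map restricted to a hypersurface onto another hypersurface, and no lemma in the paper (nor your sketch) supplies it. Note also that the conclusion you aim for, ``$g=0$,'' is not even the right target ($g=0$ would force $f_h=0$; what you need is $f_{h_0}\mid g$), and that you apply Lemma~\ref{lemma:substitution} to the infinite family $\{f_{h'_0}:h'_0\in H'_0\}$, which requires first cutting down to a finite set $B$ via Lemma~\ref{lemma:independent_variable}.

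The paper sidesteps the divisibility problem entirely, and this is the idea your proposal is missing. It argues by induction on $d_{H'}(h)$: the induction hypothesis gives $f_{h+v_1},\dots,f_{h+v_N}\in R_{H'_0}$, so the recurrence places $f_h$ in the localization $R_{H'_0}[f^{-1}_{h+v_1},\dots,f^{-1}_{h+v_N}]$ as well as in $R_{H'_0}[f^{-1}_{h_0}]$. Lemma~\ref{lemma:localization_intersection} then yields $f_h\in R_{H'_0}$ provided $f_{h_0}$ is coprime to each $f_{h+v_i}$ in $R_{H'_0}$ --- and coprimeness with an irreducible non-unit is a \emph{non}-divisibility statement, exactly the kind of thing the substitution argument with the auxiliary domain $\widetilde H$ (translated so that the relevant finite set $B\cup\{h_0,h+v_i\}$ sits inside $\widetilde H\setminus\widetilde H_0$) can deliver via Lemma~\ref{lemma:substitution}. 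So your second paragraph needs to be replaced by the induction-plus-localization-intersection argument; as written, the proof has a genuine gap at its central step.
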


\begin{proof}
	Let $h \in H'$.
	We show by induction on $d_{H'}(h)$ that $f_h \in R_{H'_0}$.
	Since the domain $H'$ is fixed throughout this proof, we simply use $d(h)$ to denote $d_{H'}(h)$.

	\newstep{step_lemma_main_laurent}
	If $h \in H'_0$, then the statement is clear.
	From here on, we only consider the case $h \in H' \setminus H'_0$.

	\newstep{step_lemma_main_laurent}
	\label{step:fhin}
	We show that $f_h \in R_{H'_0}[f^{- 1}_{h_0}]$.

	It follows from the Laurent property on $H$ that $f_h \in R_{H_0}$.
	On the other hand, Assumption~\ref{assumption:invertible} implies that $f_{h_0} \in R_{H'_0}$.
	Using the inclusion
	\[
		R_{H_0} \subset R_{H_0}[f^{- 1}_{h_0 - v_N}]
		= R_{H'_0}[f^{- 1}_{h_0}],
	\]
	we have
	\[
		f_h \in R_{H'_0}[f^{- 1}_{h_0}].
	\]

	\newstep{step_lemma_main_laurent}
	\label{step:localization}
	We show that the ring extension $R_{H'_0} \subset R_{H'_0}[f^{- 1}_{h + v_1}, \ldots, f^{- 1}_{h + v_1}]$ is a localization and that $f_h \in R_{H'_0}[f^{- 1}_{h + v_1}, \ldots, f^{- 1}_{h + v_N}]$.

	Since $d(h + v_i) < d(h)$ for each $i$, it follows from the induction hypothesis that
	\[
		f_{h + v_1}, \ldots, f_{h + v_N} \in R_{H'_0}.
	\]
	Therefore, the extension $R_{H'_0} \subset R_{H'_0}[f^{- 1}_{h + v_1}, \ldots, f^{- 1}_{h + v_N}]$ is a localization.

	Since it is defined by \eqref{equation:general_equation}, $f_h$ clearly belongs to $R_{\{ h + v_1, \ldots, h + v_N \}}$.
	Using the inclusion
	\[
		R_{\{ h + v_1, \ldots, h + v_N \}}
		\subset R_{ H'_0 \cup \{ f_{h + v_1}, \ldots, f_{h + v_N} \} }
		= R_{H'_0}[f^{- 1}_{h + v_1}, \ldots, f^{- 1}_{h + v_N}],
	\]
	we have
	\[
		f_h \in R_{H'_0}[f^{- 1}_{h + v_1}, \ldots, f^{- 1}_{h + v_N}].
	\]

	\newstep{step_lemma_main_laurent}
	By steps~\ref{step:fhin} and \ref{step:localization}, we have
	\[
		f_h \in R_{H'_0}[f^{- 1}_{h_0}] \cap R_{H'_0}[f^{- 1}_{h + v_1}, \ldots, f^{- 1}_{h + v_N}].
	\]
	If $f_{h_0}$ is coprime to $f_{h + v_i}$ in $R_{H'_0}$ for all $i$, it follows from Lemma~\ref{lemma:localization_intersection} that
	\[
		R_{H'_0} = R_{H'_0}[f^{- 1}_{h_0}] \cap R_{H'_0}[f^{- 1}_{h + v_1}, \ldots, f^{- 1}_{h + v_N}],
	\]
	which implies the statement: $f_h \in R_{H'_0}$.

	\newstep{step_lemma_main_laurent}
	From here on, we fix $i = 1, \ldots, N$ and show that $f_{h_0}$ and $f_{h + v_i}$ are coprime in $R_{H'_0}$.
	By Assumption~\ref{assumption:invertible}, $f_{h_0}$ is irreducible but not a unit in $R_{H'_0}$.
	Therefore, it is sufficient to show that $f_{h_0}$ does not divide $f_{h + v_i}$ in $R_{H'_0}$.

	\newstep{step_lemma_main_laurent}
	Since $R_{H'_0}$ has infinitely many variables, we introduce a sub-algebra with only finitely many variables.
	Let
	\[
		B = \{ b \in H'_0 \mid b \le h + v_i \text{ or } b \le h_0 - v_N \}.
	\]
	Since $H'$ is light-cone regular, $B$ is a finite set.
	By Lemma~\ref{lemma:independent_variable}, when we think of $f_{h + v_i}$ as a rational function in $f_{h'_0}$ ($h'_0 \in H_0$), the indices of the variables needed to express $f_{h + v_i}$ all belong to $B$.
	In particular, $f_{h + v_i}$ belongs to $R_B$.
	Moreover, since
	\[
		f_{h_0} \in R_{ \{ h_0 + v_1 - v_N, \ldots, h_0 + v_{N - 1} - v_N, h_0 - v_N \} }
	\]
	by Assumption~\ref{assumption:invertible}, $f_{h_0}$ also belongs to $R_B$.

	\newstep{step_lemma_main_laurent}
	We claim that there exists a family $\alpha = (\alpha_b)_{b \in B}$ of elements of $\overline{K}^{\times}$ such that
	\[
		f_{h_0} \Big|_{f_b \leftarrow \alpha_b (b \in B)} = 0, \quad
		f_{h + v_i} \Big|_{f_b \leftarrow \alpha_b (b \in B)} \ne 0.
	\]
	Assume this claim for the moment.
	Then, it follows from Lemma~\ref{lemma:substitution} that $f_{h_0}$ does not divide $f_{h + v_i}$ in $K_B$ and thus $f_{h_0}$ does not divide $f_{h + v_i}$ in the ring $R_B$, either, which completes the proof.

	Therefore, it is sufficient to show the claim, i.e., to construct such a family $\alpha = (\alpha_b)_{b \in B}$.

	\newstep{step_lemma_main_laurent}
	Let us construct such $\alpha$.

	By Lemma~\ref{lemma:translation}, we have
	\[
		B \cup \{ h_0, h + v_i \} \subset \widetilde{H} + \ell v_N
	\]
	for a sufficiently large $\ell > 0$ since the LHS is a finite set.
	Since the Laurent property, the irreducibility and the coprimeness are all invariant under a translation on the whole lattice, translating $\widetilde{H}$ if necessary, we may assume without loss of generality that
	\[
		B \cup \{ h_0, h + v_i \} \subset \widetilde{H} \setminus \widetilde{H}_0.
	\]

	By Lemma~\ref{lemma:coefficient_extension}, the Laurent property, the irreducibility and the coprimeness on $\widetilde{H}$ are all preserved under the extension of the coefficient ring from $R$ to $K$.
	Therefore, $f_{h_0}$, $f_{h + v_i}$ and $f_b$ ($b \in B$) are all irreducible elements of $K_{\widetilde{H}_0}$.
	Moreover, since $h_0 \notin H'$, $h + v_i \in H'$ and $B \subset H'$, it holds that
	\[
		h_0 \ne h + v_i, \quad
		h_0 \notin B.
	\]
	Thus, it follows from the coprimeness over $K$ that $f_{h_0}$ is coprime to any of $f_{h + v_i}$ and $f_b$ ($b \in B$) in the ring $K_{\widetilde{H}_0}$.

	It follows from Proposition~\ref{proposition:not_unit} that $f_{h_0}$ is not a unit in $K_{\widetilde{H}_0}$.
	Thus, by Lemma~\ref{lemma:substitution}, there exists a family $\beta = (\beta_{\widetilde{h}})_{\widetilde{h} \in \widetilde{H}_0}$ of elements of $\overline{K}^{\times}$ such that
	\[
		f_{h_0} \Big|_{f_{\widetilde{h}} \leftarrow \beta_{\widetilde{h}} ({\widetilde{h}} \in {\widetilde{H}_0})} = 0, \quad
		f_{h+v_i} \Big|_{f_{\widetilde{h}} \leftarrow \beta_{\widetilde{h}} ({\widetilde{h}} \in {\widetilde{H}_0})} \ne 0, \quad
		f_b \Big|_{f_{\widetilde{h}} \leftarrow \beta_{\widetilde{h}} ({\widetilde{h}} \in {\widetilde{H}_0})} \ne 0 \quad
		(\forall b \in B).
	\]
	We define $\alpha_b \in \overline{K}^{\times}$ for each $b \in B$ as
	\[
		\alpha_b = f_b \Big|_{f_{\widetilde{h}} \leftarrow \beta_{\widetilde{h}} ({\widetilde{h}} \in {\widetilde{H}_0})}.
	\]
	Then, by construction, we have
	\begin{align*}
		f_{h_0} \Big|_{f_b \leftarrow \alpha_b (b \in B)}
		&= f_{h_0} \Big|_{f_{\widetilde{h}} \leftarrow \beta_{\widetilde{h}} ({\widetilde{h}} \in {\widetilde{H}_0})} = 0, \\
		f_{h+v_i} \Big|_{f_g \leftarrow \alpha_b (b \in B)}
		&= f_{h+v_i} \Big|_{f_{\widetilde{h}} \leftarrow \beta_{\widetilde{h}} ({\widetilde{h}} \in {\widetilde{H}_0})} \ne 0,
	\end{align*}
	which implies that $f_{h_0}$ does not divide $f_{h + v_i}$.

	Hence, we have $f_h \in R_{H'_0}$ and the proof is completed.
\end{proof}


\begin{lemma}\label{lemma:main_irreducibility}
	Let $H, \widetilde{H} \subset L$ be light-cone regular.
	Suppose that equation \eqref{equation:general_equation} has the Laurent property, the irreducibility and the coprimeness on $\widetilde{H}$.
	If the equation has the Laurent property on $H$, then it also satisfies the other two properties on $H$.
\end{lemma}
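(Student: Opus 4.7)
My plan is to mirror the substitution strategy of Lemma~\ref{lemma:main_laurent}, handling irreducibility and coprimeness in parallel. Fix $h \in H \setminus H_0$ for irreducibility or a pair $h_1 \ne h_2 \in H$ for coprimeness, and let $B \subset H_0$ be the finite set of boundary points that are $\le$ the iterate(s) in question. By Lemma~\ref{lemma:independent_variable}, $f_h$ (resp.\ $f_{h_1}, f_{h_2}$) lies in $R_B$, and any hypothetical nontrivial factorization or common divisor can be taken inside $R_B$. Using Lemma~\ref{lemma:translation} and the translation invariance of the three properties, I translate $\widetilde{H}$ by a large enough multiple of $v_N$ so that $B$ together with the iterate positions sit in $\widetilde{H} \setminus \widetilde{H}_0$. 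Let $T$ be the localization of the UFD $R_{\widetilde{H}_0}$ at the multiplicative set generated by the Laurent polynomial expressions of $f_b$ ($b \in B$) on $\widetilde{H}$. The map $\phi \colon R_B \to T$ sending each indeterminate $f_b$ to its $\widetilde{H}$-expression is a well-defined ring homomorphism, and by consistency of the recurrence $\phi(f_h) = f_h$ (and similarly for $f_{h_1}, f_{h_2}$) as elements of $T$.

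The key structural observation is that $f_h$ (resp.\ each of $f_{h_1}, f_{h_2}$) remains an irreducible non-unit of $T$: it is an irreducible non-unit of $R_{\widetilde{H}_0}$ by hypothesis and Proposition~\ref{proposition:not_unit}, and by coprimeness on $\widetilde{H}$ it is not an associate of any $f_b$, so the localization does not turn it into a unit. Now suppose, for contradiction, that $f_h = P Q$ in $R_B$ with $P, Q$ non-units, or that $d \in R_B$ is a non-unit common factor of $f_{h_1}$ and $f_{h_2}$. Applying $\phi$ gives $f_h = \phi(P)\phi(Q)$ (resp.\ $\phi(d)$ is a common divisor of the coprime pair $f_{h_1}, f_{h_2}$) in $T$, so one of $\phi(P), \phi(Q)$ is a unit in $T$ (resp.\ $\phi(d)$ is a unit in $T$). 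Units of $T$ have the form $u \prod_{b \in B} f_b^{c_b}$ with $u \in R_{\widetilde{H}_0}^\times$ a Laurent monomial in the $\widetilde{H}$-initial variables; after multiplying $P$ (resp.\ $d$) by the $R_B$-unit $\prod_b f_b^{-c_b}$, which does not affect its status as a non-unit, I may assume that $\phi(P)$ (resp.\ $\phi(d)$) equals $u$, a Laurent monomial in the $\widetilde{H}$-initial variables.

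The contradiction is then sealed by the substitution technique from the proof of Lemma~\ref{lemma:main_laurent}. Since $P$ (resp.\ $d$) is a non-unit in $R_B$, it is non-monomial, so there is some $\alpha \in (\overline{K}^\times)^B$ with $P(\alpha) = 0$; but $u$ is a Laurent monomial, so $u(\beta) \ne 0$ for every $\beta \in (\overline{K}^\times)^{\widetilde{H}_0}$. The contradiction would be immediate if one could find $\beta$ with $f_b(\beta) = \alpha_b$ for every $b \in B$, since then $\phi(P)(\beta) = P(\alpha) = 0 \ne u(\beta)$. Producing such a lift $\beta$ is the main obstacle and what I would have to justify with care. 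The plan is to exploit that, after translating $\widetilde{H}$ far enough, $|\widetilde{H}_0|$ greatly exceeds $|B|$, and that the $\{f_b\}_{b \in B}$ on $\widetilde{H}$ are pairwise coprime irreducible non-units (by L+I+C together with Proposition~\ref{proposition:not_unit}); this should force the morphism $\beta \mapsto (f_b(\beta))_{b \in B}$ from $(\overline{K}^\times)^{\widetilde{H}_0}$ to $(\overline{K}^\times)^B$ to be dominant, so that its image is Zariski-dense and meets the codimension-one hypersurface $V(P)$. Once such a $\beta$ is constructed, the identical argument with $d$ replacing $P$ handles coprimeness on $H$.
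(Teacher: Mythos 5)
Your overall architecture differs from the paper's, and the step you flag as ``the main obstacle'' is a genuine gap, not a technicality that care will fix. The difficulty is concentrated exactly where you put it: you need a point $\beta \in (\overline{K}^{\times})^{\widetilde{H}_0}$ whose image under $\beta \mapsto (f_b(\beta))_{b \in B}$ lies on the prescribed hypersurface $V(P)$ inside the torus, and dominance of this map does not deliver that. Dominance only gives an image containing a dense open set, whose complement may contain an entire hypersurface --- possibly the one you need. A one-variable toy example shows the mechanism: take $f_{b_1} = X + c$ (an irreducible non-unit, algebraically independent over $K$, so the map $\beta \mapsto f_{b_1}(\beta)$ is dominant) and $P = Y - c$, a non-unit of $K[Y^{\pm 1}]$; then $\phi(P) = X$ is a unit of the localized ring, yet the only $\alpha$ with $P(\alpha) = 0$ is $\alpha = c$, which lifts only to $\beta = 0 \notin \overline{K}^{\times}$. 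So ``non-monomial $P$ in $R_B$ with $\phi(P)$ a unit of $T$'' is \emph{not} contradictory for a general algebraically independent family of pairwise coprime irreducible non-units; ruling it out in the present setting requires essentially the content of the lemma you are proving. The root cause is that you apply the substitution technique to an arbitrary hypothetical factor $P$, about which the hypotheses on $\widetilde{H}$ say nothing.

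The paper avoids this entirely by never substituting into anything that is not itself an iterate of the system. For irreducibility it argues by induction on $d_H(h)$: it picks a minimal element $h_0$ of $H \cap (h + S)$, mutates $H$ at $h_0$ to get $H'$ (on which the Laurent property holds by Lemma~\ref{lemma:main_laurent}), obtains irreducibility of $f_h$ in $R_{H'_0}$ from the induction hypothesis, and then transfers it back to $R_{H_0}$ through the common localization $R_{H_0}[f^{-1}_{h_0 - v_N}] = R_{H'_0}[f^{-1}_{h_0}]$ via Lemma~\ref{lemma:localization_irreducible}. That transfer only requires that the \emph{iterate} $f_{h_0 - v_N}$ not divide the \emph{iterate} $f_h$, and for two iterates the required family is produced directly by Lemma~\ref{lemma:substitution}\ref{enumerate:substitution2} applied on $\widetilde{H}$ (using coprimeness and Proposition~\ref{proposition:not_unit} there): one first chooses $\beta$ on $\widetilde{H}_0$ killing $f_{h_0 - v_N}$ but none of $f_h$, $f_b$ ($b \in B$), and then \emph{defines} $\alpha_b = f_b(\beta)$ --- no lifting of a prescribed $\alpha$ is ever needed. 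Your coprimeness paragraph has the same flaw if you insist on an arbitrary common divisor $d$; once irreducibility on $H$ is in hand, coprimeness reduces to non-divisibility between the two iterates $f_{h_1}, f_{h_2}$ and goes through by the same direct substitution. To repair your proof you would need to abandon the direct attack on a hypothetical factorization and reintroduce the induction/mutation/localization scaffolding, at which point you have the paper's proof.
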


\begin{proof}
	First, we prove the irreducibility.

	\newstep{step_lemma_main_irreducibility}
	Let $h \in H$.
	It follows from the Laurent property on $H$ that $f_h \in R_{H_0}$.
	We show by induction on $d_H(h)$ that $f_h$ is irreducible in $R_{H_0}$.
	If $h \in H_0$, then the statement is clear.
	Thus, we only consider the case $h \in H \setminus H_0$.

	\newstep{step_lemma_main_irreducibility}
	The proof of the irreducibility is complicated because we do not fix the domain $H$.
	To make sure, let us describe what we assume in the induction step.
	Our induction hypothesis is as follows:
	\begin{quote}
		Let $H' \subset L$ be light-cone regular and suppose that equation \eqref{equation:general_equation} has the Laurent property on $H'$.
		If $h' \in H'$ satisfies
		\[
			d_{H'}(h') < d_H(h),
		\]
		then $f_{h'}$ is irreducible in the ring $R_{H'_0}$.
	\end{quote}

	\newstep{step_lemma_main_irreducibility}
	Since the set
	\[
		\{ h' \in H \mid h' \le h \}
	\]
	is finite and nonempty, one can take a minimal element $h_0$ of this set.
	Note that $h_0 \in H_0$ and $h_0 \ne h$.
	If $h = h_0 - v_N$, then the statement is clear because $h + v_1, \ldots, h + v_N$ all belong to $H_0$.
	From here on, we only consider the case $h \ne h_0 - v_N$.

	Let $H'$ be the mutation of $H$ at $h_0$, i.e.,
	\[
		H' = H \setminus \{ h_0 \}, \quad
		H'_0 = (H_0 \setminus \{ h_0 \}) \cup \{ h_0 - v_N \}.
	\]
	By Lemma~\ref{lemma:main_laurent}, the equation has the Laurent property on $H'$ and thus we have $f_h \in R_{H'_0}$.
	Since $d_{H'}(h) < d_H(h)$, it follows from the induction hypothesis that $f_h$ is irreducible in $R_{H'_0}$.

	\newstep{step_lemma_main_irreducibility}
	It follows from Assumption~\ref{assumption:invertible} that $f_{h_0} \in R_{H'_0}$.
	Consider the inclusion relations among localized rings:
	\[
		R_{H_0} \subset R_{H_0}[f^{- 1}_{h_0 - v_N}]
		= R_{H'_0}[f^{- 1}_{h_0}] \supset R_{H'_0}.
	\]
	By Lemma~\ref{lemma:localization_irreducible}, the irreducibility of $f_h$ in $R_{H_0}$ follows from the coprimeness of $f_h$ and $f_{h_0 - v_N}$ in $R_{H_0}$.

	By Assumption~\ref{assumption:general_equation}, $f_{h_0 - v_N}$ is irreducible in $R_{H_0}$.
	Therefore, it is sufficient to show that $f_{h_0 - v_N}$ does not divide $f_h$ in $R_{H_0}$.

	\newstep{step_lemma_main_irreducibility}
	Since $R_{H_0}$ has infinitely many variables, we introduce a sub-algebra with only finitely many variables.
	Let
	\[
		B = \{ b \in H_0 \mid b \le h \text{ or } b \le h_0 - v_N \}.
	\]
	Since $H$ is light-cone regular, $B$ is a finite set.
	By Lemma~\ref{lemma:independent_variable}, when we think of $f_h$ and $f_{h_0 - v_N}$ as Laurent polynomials in $f_{h'_0}$ ($h'_0 \in H_0$), the indices of the variables needed to express $f_h$ and $f_{h_0 - v_N}$ all belong to $B$.
	In particular, we have $f_h, f_{h'_0} \in R_B$.

	\newstep{step_lemma_main_irreducibility}
	We claim that there exists a family $\alpha = (\alpha_b)_{b \in B}$ of elements of $\overline{K}^{\times}$ such that
	\[
		f_h \Big|_{f_b \leftarrow \alpha_b (b \in B)} \ne 0, \quad
		f_{h_0 - v_N} \Big|_{f_b \leftarrow \alpha_b (b \in B)} = 0.
	\]
	Assume this claim for the moment.
	By Lemma~\ref{lemma:substitution}, $f_{h_0 - v_N}$ does not divide $f_h$ in the ring $K_B$.
	Then, $f_{h_0 - v_N}$ does not divide $f_h$ in $R_B$, either, which completes the proof of the irreducibility.
	Thus, it is sufficient to construct such $\alpha$.

	\newstep{step_lemma_main_irreducibility}\label{step:construct_alpha}
	Let us construct such $\alpha$.

	Equation \eqref{equation:general_equation} has the Laurent property, the irreducibility and the coprimeness on $\widetilde{H}$.
	Since $\widetilde{H}$ is light-cone regular and $B$ is a finite set, it follows from Lemma~\ref{lemma:translation} that
	\[
		B \subset \widetilde{H} + \ell v_N
	\]
	for a sufficiently large $\ell > 0$.
	Since the Laurent property, the irreducibility and the coprimeness are all invariant under a translation on the whole lattice, translating $\widetilde{H}$ if necessary, we may assume without loss of generality that
	\[
		B \subset \widetilde{H}.
	\]

	Since $h_0 \in B \subset \widetilde{H}$, we have
	\[
		h_0 - v_N \in \widetilde{H} \setminus \widetilde{H}_0.
	\]
	It follows from Proposition~\ref{proposition:not_unit} that $f_{h_0 - v_N}$ is not a unit in $K_{\widetilde{H}_0}$.
	Thus, by Lemma~\ref{lemma:substitution}, there exists a family $\beta = (\beta_{\widetilde{h}})_{\widetilde{h} \in \widetilde{H}_0}$ of elements of $\overline{K}^{\times}$ such that
	\[
		f_{h_0 - v_N} \Big|_{f_{\widetilde{h}} \leftarrow \beta_{\widetilde{h}} ({\widetilde{h}} \in {\widetilde{H}_0})} = 0, \quad
		f_h \Big|_{f_{\widetilde{h}} \leftarrow \beta_{\widetilde{h}} ({\widetilde{h}} \in {\widetilde{H}_0})} \ne 0, \quad
		f_b \Big|_{f_{\widetilde{h}} \leftarrow \beta_{\widetilde{h}} ({\widetilde{h}} \in {\widetilde{H}_0})} \ne 0 \quad
		(\forall b \in B).
	\]

	Define $\alpha_b \in \overline{K}^{\times}$ for each $b \in B$ as
	\[
		\alpha_b = f_b \Big|_{f_{\widetilde{h}} \leftarrow \beta_{\widetilde{h}} ({\widetilde{h}} \in {\widetilde{H}_0})}.
	\]
	Then, by construction, we have
	\begin{align*}
		f_{h_0 - v_N} \Big|_{f_b \leftarrow \alpha_b (b \in B)}
		&= f_{h_0 - v_N} \Big|_{f_{\widetilde{h}} \leftarrow \beta_{\widetilde{h}} ({\widetilde{h}} \in {\widetilde{H}})} = 0, \\
		f_h \Big|_{f_b \leftarrow \alpha_b (b \in B)}
		&= f_h \Big|_{f_{\widetilde{h}} \leftarrow \beta_{\widetilde{h}} ({\widetilde{h}} \in {\widetilde{H}})} \ne 0,
	\end{align*}
	which implies that $f_{h_0 - v_N}$ does not divide $f_h$.
	Hence, we conclude that $f_h$ is irreducible in $R_{H'_0}$ and the proof of the irreducibility is finished.

	\newstep{step_lemma_main_irreducibility}
	We show the coprimeness on $H$.

	Let $h_1, h_2 \in H \setminus H_0$ and let $h_1 \ne h_2$.
	Because of the irreducibility, $f_{h_1}, f_{h_2}$ are irreducible in $R_{H_0}$.

	Define a finite set $B \subset H_0$ as
	\[
		B = \{ b \in H_0 \mid b \le h_1 \text{ or } b \le h_2 \}.
	\]
	Then, $f_{h_1}$ and $f_{h_2}$ are irreducible not only in $R_B$ but also in $K_B$.

	In order to show the coprimeness, it is sufficient to show that there exists a family $\alpha = (\alpha_b)_{b \in B}$ of elements of $\overline{K}^{\times}$ such that
	\[
		f_{h_1} \Big|_{f_b \leftarrow \alpha_b (b \in B)} \ne 0, \quad
		f_{h_2} \Big|_{f_b \leftarrow \alpha_b (b \in B)} = 0,
	\]
	since these relations imply that $f_{h_2}$ does not divide $f_{h_1}$ by Lemma~\ref{lemma:substitution}.
	However, we can construct such $\alpha$ in the same way as in step~\ref{step:construct_alpha}.

	Hence, the proof of the coprimeness is completed.
\end{proof}


Let us show our main theorem.

\begin{proof}[Proof of Theorem~\ref{theorem:main}]
	Let $H \subset L$ be light-cone regular.
	By Lemma~\ref{lemma:main_irreducibility}, it is sufficient to show the Laurent property on $H$.
	Let $h \in H \setminus H_0$.
	We show that $f_h \in R_{H_0}$.

	Roughly speaking, our strategy is to construct a sequence of domain mutations from $\widetilde{H}$ to $H$ to use Lemma~\ref{lemma:main_laurent}.
	To avoid an intricate proof, we will give a proof by induction.

	\newstep{step_theorem_main}
	Let
	\[
		G = \{ g \in H \mid g \le h \}
	\]
	and define a set of light-cone regular domains as
	\[
		\mathcal{H} = \Big\{ H' \subset L \text{: light-cone regular} \, \Big| \,
		G \subset H';
		\text{ the equation has the Laurent property on $H'$} \Big\}.
	\]
	Note that while $\mathcal{H}$ is defined only by the Laurent property, the definition implicitly includes the other two properties because of Lemma~\ref{lemma:main_irreducibility}.
	For each $H' \in \mathcal{H}$, define a nonnegative integer $D(H')$ as
	\[
		D(H') = \# \left\{ h' \in H' \mid h' \le h; \ h' \notin H \right\}.
	\]
	Roughly speaking, $D(H')$ measures the distance between the domains $H$ and $H'$ (but we ignore the region outside the past light cone emanating from $h$).

	\newstep{step_theorem_main}
	We claim that there exists $H' \in \mathcal{H}$ such that $D(H') = 0$.
	We assume this claim for the moment and show that $f_h \in R_{H_0}$.
	Let
	\[
		G' = \{ g \in H' \mid g \le h \}.
	\]

	First, we show that $G \subset G'$.
	If $g \in G$, then $g \le h$.
	Since $G \subset H'$ by the definition of $\mathcal{H}$, we have $g \in G'$.

	Next, we show that $G = G'$.
	Since $G \subset G'$, it follows from the definition of $D(H')$ that
	\[
		D(H') = \# \left( G' \setminus G \right).
	\]
	Since $D(H') = 0$, we have $G' = G$.

	Using $G' = G$, we conclude that $H$ and $H'$ coincide in the range of the past light cone emanating from $h$.
	Therefore, it follows from the Laurent property on $H'$ that $f_h \in R_{H_0}$.

	From here on, we show the claim.

	\newstep{step_theorem_main}
	We show that $\mathcal{H} \ne \emptyset$.
	Since $G$ is a finite set, by Lemma~\ref{lemma:translation}, we have
	\[
		G \subset \widetilde{H} + \ell v_N
	\]
	for a sufficiently large $\ell > 0$.
	Since the Laurent property is invariant under a translation on the whole lattice, we have
	\[
		\widetilde{H} + \ell v_N \in \mathcal{H},
	\]
	which implies $\mathcal{H} \ne \emptyset$.

	\newstep{step_theorem_main}
	Since $\mathcal{H} \ne \emptyset$, we can take $H' \in \mathcal{H}$ that achieves the minimum value of $D$.
	Let us show that $D(H') = 0$.
	Searching for contradiction, we assume that $D(H') > 0$.

	Let
	\[
		B = \left\{ h' \in H' \mid h' \le h; \ h' \notin H \right\}.
	\]
	Then, the set $B$ is finite but nonempty since $\# B = D(H') > 0$.
	Therefore, we can take a minimal element $h_0$ of $B$.

	We show that $h_0$ is minimal in $H'$.
	Suppose that $g \in H'$ satisfies $g \le h_0$.
	Since $g \le h_0$ and $h_0 \le h$, we have $g \le h$.
	On the other hand, since $g \le h_0$ and $h_0 \notin H$, it follows from the conditions on a light-cone regular domain that $g \notin H$.
	Thus, we have $g \in B$.
	Therefore, it follows from the minimality of $h_0 \in B$ that $g = h_0$.
	Hence, $h_0$ is minimal in $H'$.

	Let $H''$ be the mutation of $H'$ at $h_0$.
	By Lemma~\ref{lemma:main_laurent}, the equation has the Laurent property on $H''$.
	Moreover, since $h_0 \notin G$ by the choice of $h_0$, we have $G \subset H''$.
	Therefore, we have $H'' \in \mathcal{H}$.
	Because of $h_0$, however, $D(H'')$ must be strictly less than $D(H')$, which contradicts the minimality of $D(H')$.
	Hence, we conclude that $D(H') = 0$ and the proof is completed.
\end{proof}

\section{Reductions}\label{section:reduction}

As shown in \cite[Proposition~3.4]{investigation}, if a (possibly non-autonomous) discrete Hirota bilinear equation has the Laurent property on any light-cone regular domain, then its reductions also have the Laurent property.
Because the proof of this proposition does not rely on the specific bilinear structure of equations, it is easy to extend this result to general equations with the Laurent property on any light-cone regular domain.
Note that while it is now common, the strategy to prove the Laurent property for a lattice equation to show that an equation on a lower dimensional lattice has the Laurent property was first used in \cite{Laurent_phenomenon}, where, using the Caterpillar Lemma, the Laurent property for the Hirota-Miwa equation and the bilinear form of the discrete BKP equation was proved.

\begin{definition}[reduction. cf.\ {\cite[Definition~3.2]{investigation}}]\label{definition:reduction}
	A surjective $\mathbb{Z}$-linear map $\varphi \colon L \to L'$
	($L'$ being a lattice) is called a reduction of equation \eqref{equation:general_equation} if $\varphi(v_1), \ldots, \varphi(v_N) \in L'$ are linearly independent over $\mathbb{Z}_{\ge 0}$.
	The equation obtained by the reduction is given by
	\[
		f_{h'} = \Phi(f_{h' + \varphi(v_1)}, \ldots, f_{h' + \varphi(v_N)}) \quad
		(h' \in L'),
	\]
	which we also call a reduction of equation \eqref{equation:general_equation}.
\end{definition}

Note that we require the linear independence of $\varphi(v_1), \ldots, \varphi(v_N) \in L'$ over $\mathbb{Z}_{\ge 0}$ so that the obtained equation defines a proper evolution in the direction from $f_{h + \varphi(v_1)}, \ldots, f_{h + \varphi(v_N)}$ to $f_h$.
If $\varphi \colon L \to L'$ is a reduction, then, using $S' = \operatorname{span}_{\mathbb{Z}_{\ge 0}} \left( \varphi(v_1), \ldots, \varphi(v_N) \right)$, one can define the order $\le_{L'}$ and light-cone regular domains on $L'$ in the same way as on $L$.

\begin{proposition}[cf.\ {\cite[Proposition~3.4]{investigation}}]\label{proposition:reduction}
	Let $\varphi \colon L \to L'$ be a reduction of equation \eqref{equation:general_equation} and let $H' \subset L'$ be light-cone regular for the equation obtained by the reduction.
	Then, the lift of the domain $\varphi^{- 1}(H') \subset L$ is also light-cone regular for the original equation.
	In particular, if equation \eqref{equation:general_equation} has the Laurent property on $\varphi^{- 1}(H')$, then the equation obtained after the reduction also has the Laurent property on $H'$.
\end{proposition}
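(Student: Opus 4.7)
My plan is to establish the two claims in succession: first that $H := \varphi^{-1}(H')$ is light-cone regular in $L$, and second that the Laurent property transfers from $H$ down to $H'$.

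For the light-cone regularity of $H$, condition~\ref{enumerate:lightcone_regular_2} is immediate: for $h \in H$ and $h' \ge h$ in $L$, the difference $h - h' \in S$ maps to $\varphi(h) - \varphi(h') \in S'$, giving $\varphi(h) \le_{L'} \varphi(h')$; since $\varphi(h) \in H'$, condition~\ref{enumerate:lightcone_regular_2} for $H'$ forces $\varphi(h') \in H'$, i.e., $h' \in H$. For condition~\ref{enumerate:lightcone_regular_1}, I fix $h \in H$ and stratify the past light cone of $h$ in $H$ by its $\varphi$-image. Any $h'' \le h$ in $H$ has $\varphi(h'') \in H'$ and $\varphi(h'') \le_{L'} \varphi(h)$, so $\varphi(h'')$ lies in the finite set $H' \cap (\varphi(h) + S')$ provided by condition~\ref{enumerate:lightcone_regular_1} for $H'$. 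For each fixed $w$ in this set, the contributing $h''$ are contained in $\{h + \sum_i a_i v_i : (a_1, \ldots, a_N) \in \mathbb{Z}_{\ge 0}^N,\ \sum_i a_i \varphi(v_i) = w - \varphi(h)\}$, so it suffices to show that for every $U' \in L'$ the tuple set $\{(a_i) \in \mathbb{Z}_{\ge 0}^N : \sum_i a_i \varphi(v_i) = U'\}$ is finite. Projecting $L'$ onto its free quotient $L'/\mathrm{Tor}(L')$, the images of $\varphi(v_1), \ldots, \varphi(v_N)$ remain linearly independent over $\mathbb{Z}_{\ge 0}$ (any nontrivial nonnegative relation on the images would, after multiplication by the exponent of $\mathrm{Tor}(L')$, contradict the hypothesis in Definition~\ref{definition:reduction}); for rational vectors in a free abelian group, this linear independence makes the real polyhedron of nonnegative solutions compact (its recession cone is trivial), so its integer points form a finite set.

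For the Laurent descent, the key bookkeeping step is $\varphi(H_0) = H'_0$: if $h_0 \in H_0$ with $h_0 + v_i \notin H$, then $\varphi(h_0) + \varphi(v_i) \notin H'$, so $\varphi(h_0) \in H'_0$; conversely, for $h'_0 \in H'_0$ with $h'_0 + \varphi(v_j) \notin H'$, any lift $h \in \varphi^{-1}(h'_0)$ sits in $H$ with $h + v_j \notin H$, so $h \in H_0$. Assuming the Laurent property on $H$, each $f_h$ is a Laurent polynomial in the algebraically independent family $\{f_{h_0} : h_0 \in H_0\}$. Writing $f^{\mathrm{red}}_{h'}$ for the iterates of the reduced equation on $H'$, substitute $f_{h_0} \leftarrow f^{\mathrm{red}}_{\varphi(h_0)}$ in each Laurent expression for $f_h$; this lands in $R_{H'_0}$ because $\varphi(h_0) \in H'_0$ and $f^{\mathrm{red}}_{\varphi(h_0)}$ is a unit in $R_{H'_0}$. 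An induction on $d_H(h)$ then shows that the specialization equals $f^{\mathrm{red}}_{\varphi(h)}$: the base case $h \in H_0$ is tautological, and for $h \in H \setminus H_0$ the recurrence $f_h = \Phi(f_{h+v_1}, \ldots, f_{h+v_N})$ specializes, by the induction hypothesis, to $\Phi(f^{\mathrm{red}}_{\varphi(h)+\varphi(v_1)}, \ldots, f^{\mathrm{red}}_{\varphi(h)+\varphi(v_N)}) = f^{\mathrm{red}}_{\varphi(h)}$, which is precisely the reduced recurrence at $\varphi(h)$. Since $\varphi$ maps $H$ onto $H'$, this yields $f^{\mathrm{red}}_{h'} \in R_{H'_0}$ for every $h' \in H'$.

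The main obstacle is the finiteness step inside condition~\ref{enumerate:lightcone_regular_1}: because the map $(a_i) \mapsto \sum_i a_i v_i$ may be many-to-one and $L'$ may carry torsion, a naive fiber-by-fiber count fails, and one has to invoke the full linear-independence hypothesis from Definition~\ref{definition:reduction} via the polyhedral/Dickson-type argument sketched above. The Laurent descent itself is essentially routine once the initial-boundary identification $\varphi(H_0) = H'_0$ is in hand.
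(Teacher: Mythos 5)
Your proposal is correct, and it reaches the light-cone regularity of $\varphi^{-1}(H')$ by a noticeably different route than the paper. The paper first decomposes $\varphi$ into a chain of quotients by single generators $x$, splits into cases according to whether $x$ has finite or infinite order, and in the infinite-order case derives a contradiction from an infinite sequence $z + m_i x \in S$ by showing $\overline{x}$ would lie in the real cone $S_{\mathbb{R}}$ and hence map to a nontrivial nonnegative relation among the $\overline{\varphi(v_i)}$. You instead keep $\varphi$ whole, parametrize the past light cone by exponent tuples in $\mathbb{Z}_{\ge 0}^N$, stratify by the (finite) image in $H' \cap (\varphi(h) + S')$, and bound each fiber by the compactness of a rational polyhedron with trivial recession cone; this avoids both the cyclic decomposition and the finite/infinite-order case split. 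The essential shared ingredient is the same in both arguments --- killing torsion in $L'$ by multiplying a putative relation by the exponent of $\operatorname{tor}(L')$ to reduce to the $\mathbb{Z}_{\ge 0}$-independence hypothesis of Definition~\ref{definition:reduction} --- and both proofs pass from $\mathbb{Z}_{\ge 0}$- to $\mathbb{R}_{\ge 0}$-independence with the same (standard, unproved) rationality remark, so you are not less rigorous than the paper there. You also spell out the Laurent descent in full (the identification $H_0 = \varphi^{-1}(H'_0)$, the specialization homomorphism $f_{h_0} \mapsto f^{\mathrm{red}}_{\varphi(h_0)}$ into units of $R_{H'_0}$, and the induction on $d_H$), which the paper leaves implicit by deferring to \cite{investigation}; that part of your write-up is routine but complete and correct.
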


The proof we give here is based on that of \cite[Proposition~3.4]{investigation}.
The proposition in \cite{investigation} was shown under the following conditions:
\begin{itemize}
	\item
	equation \eqref{equation:general_equation} is a discrete Hirota bilinear equation,

	\item
	$\varphi$ does not decrease the number of terms of the defining equation,

	\item
	$L$ and $L'$ are free lattices.

\end{itemize}
However, we do not use these assumptions in this paper.
Since the first two assumptions were not essential even in the proof in \cite{investigation}, the only difficulty here is that lattices can now contain torsion elements.
Moreover, we will not use the following assumptions, either.
\begin{itemize}
	\item
	$\Phi$ is irreducible
	(Assumption~\ref{assumption:general_equation}),

	\item
	$\Phi$ is not a Laurent monomial
	(Assumption~\ref{assumption:general_equation}),

	\item
	There exists an irreducible Laurent polynomial $\Psi$ that solves the equation in the opposite direction
	(Assumption~\ref{assumption:invertible}).
\end{itemize}

\begin{proof}[Proof of Proposition~\ref{proposition:reduction}]
	Let $H = \varphi^{- 1}(H')$.
	We show that $H \subset L$ is light-cone regular.

	\newstep{step_proposition_reduction}
	Taking generators $x_1, \ldots, x_n \in \ker \varphi$, we can decompose $\varphi \colon L \to L'$ as
	\[
		L \to L / \mathbb{Z} x_1 \to \cdots \to L / ( \mathbb{Z} x_1 + \cdots \mathbb{Z} x_n) \cong L'.
	\]
	Therefore, it is sufficient to show that $H$ is light-cone regular in the case $L' = L / \mathbb{Z} x$ for some $x \in L$.

	\newstep{step_proposition_reduction}
	The proof in \cite{investigation} to show condition~\ref{enumerate:lightcone_regular_2} on a light-cone regular domain is still valid.
	Therefore, we only check condition~\ref{enumerate:lightcone_regular_1} in this paper.
	As the proof in \cite{investigation}, it is sufficient to show that for each $z' \in S'$, the set $\varphi^{- 1}(z') \cap S$ is finite.

	\newstep{step_proposition_reduction}
	If $x \in L$ has finite order, then the cardinality of the set $\varphi^{- 1}(z')$ coincides with the order of $x$.
	In particular, $\varphi^{- 1}(z') \cap S$ is a finite set.

	From here on, we consider the case where $x \in L$ has infinite order.

	\newstep{step_proposition_reduction}
	Let 
	\[
		L_{\mathbb{R}} = L \otimes \mathbb{R}, \quad
		L'_{\mathbb{R}} = L' \otimes \mathbb{R}, \quad
		\varphi_{\mathbb{R}} = \varphi \otimes \operatorname{id}_{\mathbb{R}} \colon L_{\mathbb{R}} \to L'_{\mathbb{R}}.
	\]
	For example, under the expression in Remark~\ref{remark:basis}, we have
	\[
		L_{\mathbb{R}} \cong \mathbb{R}^{\operatorname{rank} L}.
	\]	
	In particular, if $L$ (resp.\ $L'$) is not free, then the $\mathbb{Z}$-linear map $L \to L_{\mathbb{R}}$ (resp.\ $L' \to L'_{\mathbb{R}}$) is not injective.
	Note that this $\mathbb{Z}$-linear map decomposes into a surjection and an injection as
	\[
		L \twoheadrightarrow L / \operatorname{tor} (L) \hookrightarrow L_{\mathbb{R}},
	\]
	where $\operatorname{tor} (L)$ denotes the submodule consisting of the torsion elements of $L$.
	For $y \in L$ (resp.\ $y' \in L'$), we use $\overline{y} \in L_{\mathbb{R}}$ (resp.\ $\overline{y'} \in L'_{\mathbb{R}}$) to denote the image of $y$ by $L \to L_{\mathbb{R}}$ (resp.\ of $y'$ by $L' \to L'_{\mathbb{R}}$).
	Let
	\[
		S_{\mathbb{R}} = \operatorname{span}_{\mathbb{R}_{\ge 0}} (\overline{v}_1, \ldots, \overline{v}_N) \subset L_{\mathbb{R}}.
	\]
	Then, $S_{\mathbb{R}}$ is a closed convex cone.

	\newstep{step_proposition_reduction}\label{step:strongly_convex}
	Let us show that $\overline{\varphi(v_1)}, \ldots, \overline{\varphi(v_N)} \in L'_{\mathbb{R}}$ are linearly independent over $\mathbb{R}_{\ge 0}$.

	Using $L' / \operatorname{tor} (L') \hookrightarrow L_{\mathbb{R}}$, we can think of $\overline{\varphi(v_1)}, \ldots, \overline{\varphi(v_N)} \in L' / \operatorname{tor} (L')$ as lattice points in $L'_{\mathbb{R}}$.
	Therefore, it is sufficient to show that $\overline{\varphi(v_1)}, \ldots, \overline{\varphi(v_N)} \in L'_{\mathbb{R}}$ are linearly independent over $\mathbb{Z}_{\ge 0}$.

	Suppose that $a_1, \ldots, a_N \in \mathbb{Z}_{\ge 0}$ satisfy
	\[
		\sum_i a_i \overline{\varphi(v_i)} = 0.
	\]
	By definition, there exists $y' \in \operatorname{tor} (L')$ such that
	\[
		\sum_i a_i \varphi(v_i) = y'.
	\]
	Since $y' \in \operatorname{tor} (L')$, there exists $m \in \mathbb{Z}_{> 0}$ such that $m y' = 0$ in $L'$.
	Therefore, we have
	\[
		\sum_i m a_i \varphi(v_i) = 0,
	\]
	which implies
	\[
		a_1 = \cdots = a_N = 0
	\]
	since $\varphi(v_1), \ldots, \varphi(v_N)$ are linearly independent over $\mathbb{Z}_{\ge 0}$.

	\newstep{step_proposition_reduction}
	We show that $\varphi^{- 1}(z') \cap S$ is a finite set.
	Searching for contradiction, assume that $\varphi^{- 1}(z') \cap S$ is an infinite set.

	Let $z \in \varphi^{- 1}(z') \cap S$.
	Since $\varphi^{- 1}(z') = z + \mathbb{Z} x$ and $\varphi^{- 1}(z') \cap S$ is an infinite set, there exists an infinite sequence of distinct integers $m_1, m_2, \cdots$ such that $z + m_i x \in S$ for all $i$.
	Since
	\[
		0, \overline{z}, \overline{z} + m_i \overline{x} \in S_{\mathbb{R}}
	\]
	for all $i$, one can show that $\overline{x} \in S_{\mathbb{R}}$ in the same way as in the proof in \cite{investigation}.
	Thus, there exist $a_1, \ldots, a_N \in \mathbb{R}_{\ge 0}$ such that
	\[
		\overline{x} = \sum_i a_i \overline{v}_i.
	\]
	Since $x \in L$ has infinite order, $\overline{x} \in L_{\mathbb{R}}$ is not $0$.
	Thus, at least one of $a_1, \ldots, a_N$ must be nonzero.
	Applying $\varphi_{\mathbb{R}}$, we have
	\[
		0 = \sum_i a_i \varphi_{\mathbb{R}} (\overline{v}_i)
		= \sum_i a_i \overline{\varphi (v_i)},
	\]
	which contradicts step~\ref{step:strongly_convex}.
	Hence, $\varphi^{- 1}(z') \cap S$ is a finite set and the proof is completed.
\end{proof}

\begin{corollary}
	Any reduction of equation \eqref{equation:toda50} or \eqref{equation:higher} has the Laurent property on any light-cone regular domain.
\end{corollary}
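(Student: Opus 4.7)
The plan is a short chain of applications of results already established in the paper, with essentially no new content. Fix a reduction $\varphi \colon L \to L'$ of one of the two equations in question, and let $H' \subset L'$ be an arbitrary light-cone regular domain for the reduced equation. Since by Definition~\ref{definition:reduction} the images $\varphi(v_1), \ldots, \varphi(v_N)$ are linearly independent over $\mathbb{Z}_{\ge 0}$, Proposition~\ref{proposition:reduction} applies and tells us that the preimage $H := \varphi^{-1}(H') \subset L$ is itself light-cone regular for the original equation on $L$.

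Next, I would invoke the two corollaries immediately following the statement of Theorem~\ref{theorem:main}, which assert that equations \eqref{equation:toda50} and \eqref{equation:higher} possess the Laurent property, the irreducibility, and the coprimeness on every light-cone regular domain of $L$. In particular, the original equation has the Laurent property on $\varphi^{-1}(H')$. Finally, applying the concluding clause of Proposition~\ref{proposition:reduction}, the Laurent property transfers from $\varphi^{-1}(H')$ to $H'$ for the reduced equation. Since $H'$ was arbitrary, the corollary follows.

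There is no real obstacle: all the difficult work has been done upstream, in the proof of Theorem~\ref{theorem:main}, in checking the assumptions of Section~\ref{section:introduction} for the two equations (already carried out in Examples~\ref{example:toda50} and \ref{example:higher}), and in Proposition~\ref{proposition:reduction} itself (which handles the subtle case where the reduction map has torsion in its kernel or image). The only thing worth verifying in any concrete instance is that the map $\varphi$ one wishes to use really satisfies the $\mathbb{Z}_{\ge 0}$-linear independence of $\varphi(v_1), \ldots, \varphi(v_N)$, but that is precisely the defining condition for a reduction.
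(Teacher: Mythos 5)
Your argument is correct and is exactly the chain the paper intends: the corollaries following Theorem~\ref{theorem:main} give the Laurent property (together with the other two properties) on the lifted domain $\varphi^{-1}(H')$, and Proposition~\ref{proposition:reduction} then transfers the Laurent property down to $H'$. The paper leaves this corollary without an explicit proof precisely because it is this immediate combination, so there is nothing to add.
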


\begin{example}
	Let us check that a reduction of equation \eqref{equation:toda50} to a one-dimensional free lattice has the following form:
	\[
		f_m = \frac{ \prod^a_{i = 1} f^{k_i}_{m - d_i} f^{\ell_i}_{m - c + d_i} + \prod^{a + b}_{j = a + 1} f^{k_j}_{m - d_j} f^{\ell_j}_{m - c + d_j} }{f_{m - c}},
	\]
	where $c, d_1, \ldots, d_{a + b} \in \mathbb{Z}$ satisfy
	\[
		0 < d_i < c, \quad
		0 < c - d_i < c, \quad
		\operatorname{GCD}(c, d_1, \ldots, d_{a + b}) = 1.
	\]
	Let $\varphi \colon L \to \mathbb{Z}$ be a reduction of \eqref{equation:toda50} and let
	\[
		c = - \varphi(v_{2 a + 2 b + 1}) = \varphi(2, \mathbf{0}), \quad
		d_i = -\varphi(v_{2 i - 1}) = \varphi(1, - \mathbf{e}_i).
	\]
	As the sign of the lattice $\mathbb{Z}$ can be changed at will, we may assume that $c > 0$.
	Then, a direct calculation shows that
	\[
		\varphi(1, \mathbf{e}_i) = - c + d_i.
	\]
	Since $\varphi(v_1), \ldots, \varphi(v_N)$ must be linearly independent over $\mathbb{Z}_{\ge 0}$, we have
	\[
		\varphi(v_i) < 0 \quad (i = 1, \ldots, 2 a + 2 b + 1),
	\]
	i.e.,
	\[
		0 < d_i < c, \quad 0 < c - d_i < c.
	\]
	The GCD condition follows from the surjectivity of $\varphi$.
\end{example}

\begin{remark}
	In \cite{investigation}, we only consider reductions that do not reduce the number of terms of a defining bilinear equation.
	The Laurent property is, however, always preserved under any reduction in the sense of Definition~\ref{definition:reduction}, even if the reduction decreases the number of terms and the $\Phi(f_{h' + \varphi(v_1)}, \ldots, f_{h' + \varphi(v_N)})$ in Definition~\ref{definition:reduction} is not irreducible as a Laurent polynomial in $f_{h' + \varphi(v_1)}, \ldots, f_{h' + \varphi(v_N)}$ anymore.
	While the Laurent property is preserved under reductions, the irreducibility and the coprimeness are not in general, as seen in Example~\ref{example:hirota_miwa_not_coprime}.
\end{remark}

\begin{remark}
	In this paper, we only considered autonomous equations.
	However, Proposition~\ref{proposition:reduction} holds even in the non-autonomous case because we only proved that the lift of a light-cone regular domain is also light-cone regular.
	To consider a reduction for a non-autonomous lattice equation, we need to impose some conditions on the equation.
	For example, if $\varphi \colon L \to L'$ is a reduction of a non-autonomous version $\Phi_h$ ($h \in L$) of equation \eqref{equation:general_equation}, then $\Phi_h$ must be $(\ker \varphi)$-invariant, i.e., $\Phi_{h + x} = \Phi_h$ for all $x \in \ker \varphi$.
\end{remark}

\section{Conclusion}

In this paper, we studied how the choice of domain for a lattice equation affects the Laurent property, the irreducibility and the coprimeness.
We showed that if it has these three properties on one light-cone regular domain, a lattice equation must also satisfy them on any light-cone regular domain.
That is, as long as we work on light-cone regular domains, these properties are independent of the choice of domain and are inherent to a lattice equation.

Our key idea was to show that an iterate of a general equation does not divide another iterate by constructing a family of elements of $\overline{K}^{\times}$ such that only the former iterate vanishes when substituting the values for the initial variables.
This strategy is a theoretical generalization of substituting concrete numerical values for initial variables, which is often used for a concrete equation.

It is sometimes much easier to show these properties on a specific domain, such as a translation-invariant one.
In fact, until now, these properties of some lattice equations were proved only on specific domains \cite{extoda, toda50, exkdv}.
Now, our main theorem guarantees that such an equation has these properties on any light-cone regular domain.

We also considered the reductions of Laurent systems.
Generalizing Proposition~3.4 in \cite{investigation}, we showed that any reduction of a lattice equation that satisfies the Laurent property on any light-cone regular domain must also satisfy the Laurent property, even if the lattices have torsion elements.

Showing the Laurent property for an ordinary difference equation is sometimes more difficult than for a lattice equation because an ordinary difference equation has only finitely many initial variables.
We sometimes take a lattice equation with the ordinary difference equation as a reduction and show that the lattice equation has the Laurent property on the domain corresponding to the reduction.
Our study in this paper allows us to replace the latter step with proving the Laurent property, the irreducibility and the coprimeness on one arbitrary light-cone regular domain.

\section*{Acknowledgement}

I wish to thank Prof.\ M. Kanki and Prof.\ T. Tokihiro for their discussion and comments.
This work was supported by a Grant-in-Aid for Scientific Research of Japan Society for the Promotion of Science, JSPS KAKENHI Grant Number 18K13438 and 23K12996.

\appendix
\section{Algebraic tools used in this study}\label{appendix:algebra}

In this appendix, we recall some basic knowledge of algebra.
We only give proofs of some lemmas since the others are well-known facts or can be shown in the usual way.
We use $X_1, X_2, Y_1, \cdots$ to denote variables of Laurent polynomial rings.
Note that in this paper, we consider units of a ring to be irreducible (Notation~\ref{notation_lattice}).

\begin{definition}\label{definition:primitive}
	Let $R$ be a UFD.
	A nonzero Laurent polynomial $f \in R \left[X^{\pm 1}_1, X^{\pm 1}_2, \cdots \right]$ is primitive if the GCD of all the coefficients of $f$ is $1$.
	That is, $f$ is primitive if and only if $f$ cannot be divided by any element of $R \setminus R^{\times}$.
\end{definition}

\begin{lemma}
	Let $R$ be a UFD and $K$ be its field of fractions.
	\begin{enumerate}
		\item
		Let $f, g \in R \left[ X^{\pm 1}_1, X^{\pm 1}_2, \cdots \right]$ and suppose that $f$ is primitive.
		If $f$ divides $g$ in the ring $K \left[ X^{\pm 1}_1, X^{\pm 1}_2, \cdots \right]$, then $f$ also divides $g$ in the ring $R \left[ X^{\pm 1}_1, X^{\pm 1}_2, \cdots \right]$.
		This is the Laurent polynomial version of Gauss's Lemma.

		\item
		Suppose that $f \in R \left[ X^{\pm 1}_1, X^{\pm 1}_2, \cdots \right]$ is not a Laurent monomial.
		Then, $f$ is irreducible in $R \left[ X^{\pm 1}_1, X^{\pm 1}_2, \cdots \right]$ if and only if $f$ is primitive in $R \left[ X^{\pm 1}_1, X^{\pm 1}_2, \cdots \right]$ and is irreducible in $K \left[ X^{\pm 1}_1, X^{\pm 1}_2, \cdots \right]$.

	\end{enumerate}
\end{lemma}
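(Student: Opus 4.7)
The plan is to reduce everything to the ordinary polynomial ring $R[X_1, X_2, \ldots]$ by exploiting the fact that Laurent monomials are units in $R\!\left[X_1^{\pm 1}, X_2^{\pm 1}, \ldots\right]$. First I would define the content $\operatorname{cont}(f)$ of a nonzero $f \in R\!\left[X_1^{\pm 1}, X_2^{\pm 1}, \ldots\right]$ as the GCD of its coefficients (well-defined up to units of $R$), so that $f$ is primitive iff $\operatorname{cont}(f) \in R^\times$. Since for any $f$ there is a Laurent monomial $m$ with $mf \in R[X_1, X_2, \ldots]$, and plainly $\operatorname{cont}(mf) = \operatorname{cont}(f)$ up to units, the classical Gauss lemma for polynomial rings over a UFD lifts to multiplicativity of content in the Laurent setting: $\operatorname{cont}(fg) = \operatorname{cont}(f)\operatorname{cont}(g)$ modulo $R^\times$.

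With this in hand, for part~(1) I would assume $g = fh$ in $K\!\left[X_1^{\pm 1}, \ldots\right]$, clear denominators to get $c \in R \setminus \{0\}$ with $ch = h' \in R\!\left[X_1^{\pm 1}, \ldots\right]$, and so $cg = fh'$ inside $R\!\left[X_1^{\pm 1}, \ldots\right]$. Taking contents and using the primitivity of $f$, I get $c\,\operatorname{cont}(g) = \operatorname{cont}(h')$ up to units, so $c$ divides every coefficient of $h'$, whence $h = h'/c \in R\!\left[X_1^{\pm 1}, \ldots\right]$.

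For part~(2), I would treat the two implications separately. If $f$ is irreducible in $R\!\left[X_1^{\pm 1}, \ldots\right]$ and not a Laurent monomial, then $f$ must be primitive (otherwise $f = c \cdot (f/c)$ with $c \in R \setminus R^\times$ gives a factorization into non-units, since $f/c$ is still not a monomial hence not a unit). To get irreducibility over $K$, I would take any $K$-factorization $f = gh$ with $g,h$ non-units, rescale to $g = \alpha g''$ and $h = \beta h''$ with $g'', h'' \in R\!\left[X_1^{\pm 1}, \ldots\right]$ primitive and $\alpha,\beta \in K^\times$; then content multiplicativity together with primitivity of $f$ forces $\alpha\beta \in R^\times$, and $f = (\alpha\beta g'')h''$ is a genuine factorization in $R\!\left[X_1^{\pm 1}, \ldots\right]$ into non-units (since $g''$, $h''$ each have at least two terms), a contradiction. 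For the converse, assume $f$ is primitive and irreducible over $K$, and suppose $f = gh$ in $R\!\left[X_1^{\pm 1}, \ldots\right]$. Irreducibility over $K$ makes one factor, say $g$, a unit in $K\!\left[X_1^{\pm 1}, \ldots\right]$, i.e.\ $g = c \cdot (\text{Laurent monomial})$ with $c \in K^\times$; since $g \in R\!\left[X_1^{\pm 1}, \ldots\right]$ we have $c \in R$, and primitivity of $f = gh$ forces $c \in R^\times$, so $g$ is already a unit of $R\!\left[X_1^{\pm 1}, \ldots\right]$.

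The main technical obstacle is the setup of the content formalism in the Laurent setting: units in $K\!\left[X_1^{\pm 1}, \ldots\right]$ are much more numerous than in $K[X_1, X_2, \ldots]$ (they include every $K$-scalar multiple of every Laurent monomial), so one must consistently interpret ``irreducible,'' ``content,'' and ``primitive'' modulo Laurent-monomial units. Once the monomial-clearing reduction is installed, both parts follow from the classical arguments without further incident.
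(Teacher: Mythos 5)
Your proof is correct. Note that the paper itself gives no argument for this lemma: it appears in Appendix~\ref{appendix:algebra} among the statements the author declares to be ``well-known facts or \ldots{} shown in the usual way,'' so there is nothing to compare against. Your route --- defining content for Laurent polynomials, clearing monomial denominators to reduce content multiplicativity to the classical Gauss lemma for $R[X_1,X_2,\cdots]$, and then running the standard primitivity arguments for both directions of part~(2) --- is exactly the ``usual way'' the author is alluding to, and all the small points that need care (units of $R\left[X^{\pm 1}_1,X^{\pm 1}_2,\cdots\right]$ being $R^\times$-multiples of Laurent monomials, the non-unit factors over $K$ having at least two terms, each element involving only finitely many variables) are handled correctly.
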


\begin{lemma}\label{lemma:coprime_extension}
	Let $R$ be a UFD and $K$ be its field of fractions.
	If $f_1, f_2 \in R \left[ X^{\pm 1}_1, X^{\pm 1}_2, \cdots \right]$ are coprime to each other, then they are also coprime as elements of $K \left[ X^{\pm 1}_1, X^{\pm 1}_2, \cdots \right]$.
\end{lemma}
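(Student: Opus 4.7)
The plan is to argue by contradiction using the Laurent-polynomial version of Gauss's lemma stated immediately before. Suppose $f_1, f_2$ are not coprime in $K[X^{\pm 1}_1, X^{\pm 1}_2, \ldots]$. Since each Laurent polynomial involves only finitely many variables, by restricting to a suitable finite subset of variables we may work inside $K[X^{\pm 1}_1, \ldots, X^{\pm 1}_n]$ for some $n$. The failure of coprimeness means there exists a common divisor $g \in K[X^{\pm 1}_1, \ldots, X^{\pm 1}_n]$ of $f_1$ and $f_2$ that is not a unit, i.e.\ not a Laurent monomial with coefficient in $K^{\times}$.

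Next I would \emph{primitivize} $g$: multiplying $g$ by a suitable element of $K^{\times}$ (clear denominators by multiplying by a common denominator in $R$, then divide out the $\operatorname{GCD}$ of the resulting coefficients in $R$) yields a primitive element $\tilde{g} \in R[X^{\pm 1}_1, \ldots, X^{\pm 1}_n]$ that differs from $g$ only by a unit of $K[X^{\pm 1}_1, \ldots, X^{\pm 1}_n]$. In particular $\tilde{g}$ still divides $f_1$ and $f_2$ in the $K$-Laurent polynomial ring. Moreover, since $g$ is not a Laurent monomial and primitivization preserves the monomial support, $\tilde{g}$ is not a Laurent monomial either, so $\tilde{g}$ is a non-unit in $R[X^{\pm 1}_1, \ldots, X^{\pm 1}_n]$.

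Now I would invoke the Laurent polynomial version of Gauss's lemma (the preceding lemma): a primitive element of $R[X^{\pm 1}_1, \ldots, X^{\pm 1}_n]$ that divides $f_i$ in $K[X^{\pm 1}_1, \ldots, X^{\pm 1}_n]$ also divides $f_i$ in $R[X^{\pm 1}_1, \ldots, X^{\pm 1}_n]$. Thus $\tilde{g}$ divides both $f_1$ and $f_2$ in $R[X^{\pm 1}_1, \ldots, X^{\pm 1}_n]$, hence in $R[X^{\pm 1}_1, X^{\pm 1}_2, \ldots]$. Since $\tilde{g}$ is a non-unit in the UFD $R[X^{\pm 1}_1, \ldots, X^{\pm 1}_n]$, any irreducible non-unit factor of $\tilde{g}$ is a non-unit common divisor of $f_1$ and $f_2$ in $R[X^{\pm 1}_1, X^{\pm 1}_2, \ldots]$, contradicting the assumed coprimeness.

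There is no serious obstacle; the only point requiring mild care is the primitivization step, which works because any given Laurent polynomial involves only finitely many variables and because $R[X^{\pm 1}_1, \ldots, X^{\pm 1}_n]$ is a UFD (being a localization of the polynomial ring $R[X_1, \ldots, X_n]$). The heart of the argument is simply the Gauss-lemma statement already recorded in the paper.
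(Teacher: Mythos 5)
Your proposal is correct and follows essentially the same route as the paper: pass to the contrapositive, rescale a nontrivial common factor $g$ by an element of $K^{\times}$ to make it primitive in $R\left[X^{\pm 1}_1, X^{\pm 1}_2, \cdots\right]$, and apply the Laurent-polynomial Gauss's lemma to conclude it divides $f_1$ and $f_2$ over $R$. The extra care you take with the finitely-many-variables reduction and with checking that the primitivized factor remains a non-unit (since it is still not a Laurent monomial) is sound and only makes explicit what the paper leaves implicit.
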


\begin{proof}
	We show the contrapositive.
	Suppose that $f_1$ and $f_2$ share a nontrivial factor $g \in K \left[ X^{\pm 1}_1, X^{\pm 1}_2, \cdots \right]$.
	Multiplying an element of $K^{\times}$ if necessary, we may assume that $g$ belongs to and is primitive in $R \left[ X^{\pm 1}_1, X^{\pm 1}_2, \cdots \right]$.
	Then, it follows from Gauss's lemma that $g$ divides both $f_1$ and $f_2$ in $R \left[ X^{\pm 1}_1, X^{\pm 1}_2, \cdots \right]$.
	Hence, $f_1$ and $f_2$ are not coprime in $R \left[ X^{\pm 1}_1, X^{\pm 1}_2, \cdots \right]$.
\end{proof}

\begin{definition}
	Let $A$ be an integral domain and let $K$ be its field of fractions.
	For $f_1, \ldots, f_m \in A \setminus \{ 0 \}$, the localized ring (localization) $A \left[ f^{- 1}_1, \cdots, f^{- 1}_m \right]$ is defined as a sub-ring of $K$ as
	\[
		A \left[ f^{- 1}_1, \ldots, f^{- 1}_m \right]
		= \left\{ \frac{g}{f^{r_1}_1 \cdots f^{r_m}_m} \in K \mid g \in A; \ r_1, \ldots, r_m \in \mathbb{Z}_{\ge 0} \right\}.
	\]
	The localized ring $A \left[ f^{- 1}_1, f^{- 1}_2, \cdots \right]$ for $f_1, f_2, \cdots \in A \setminus \{ 0 \}$ is defined in the same way:
	\[
		A \left[ f^{- 1}_1, f^{- 1}_2, \cdots \right]
		= \left\{ \frac{g}{f^{r_1}_{i_1} \cdots f^{r_m}_{i_m}} \in K \mid g \in A; \ i_1, \ldots, i_m \in \mathbb{Z}_{\ge 1}; \ r_1, \ldots, r_m \in \mathbb{Z}_{\ge 0} \right\}.
	\]
\end{definition}

\begin{lemma}\label{lemma:localization_irreducible}
	Let $A$ be a UFD and let $f \in A \setminus \{ 0 \}$.
	\begin{enumerate}
		\item
		The localization $A \left[ f^{- 1} \right]$ is also a UFD.

		\item
		Let $\{ q_1, \ldots, q_m \}$ be the set of the prime elements (up to unit multiplication) that divide $f$.
		If $g \in A$ is irreducible in the localized ring $A \left[ f^{- 1} \right]$, then there exist $r_1, \ldots, r_m \in \mathbb{Z}_{\ge 0}$ and an irreducible element $p \in A$ such that
		\[
			g = p \prod^m_{i = 1} q^{r_i}_i.
		\]
		In particular, if $g \in A$ is coprime with $f$ and is irreducible in $A \left[ f^{- 1} \right]$, then $g$ is also irreducible in $A$.

	\end{enumerate}
\end{lemma}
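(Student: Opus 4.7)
The plan is to rely on the standard theory of localizations of unique factorization domains, combined with careful bookkeeping around the paper's convention (Notation~\ref{notation_lattice}) that units are considered irreducible.

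For part~(1), I would invoke the well-known fact that if $A$ is a UFD and $T \subset A \setminus \{0\}$ is multiplicatively closed, then $A[T^{-1}]$ is again a UFD; applying this to $T = \{f^n \mid n \ge 0\}$ gives the result. The key structural observation, which I would explicitly record for use in part~(2), is that the nonzero non-unit prime elements of $A[f^{-1}]$ (up to units) correspond bijectively to the prime elements of $A$ that are coprime with $f$, because each prime factor of $f$ becomes a unit in the localization.

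For part~(2), I would factor $g$ inside $A$ using unique factorization:
\[
g = u \prod_i p_i^{s_i} \prod_{j=1}^m q_j^{r_j},
\]
where $u \in A^{\times}$, the $p_i$ range over the prime factors of $g$ (up to units) that are coprime with $f$, the $q_j$ are the primes dividing $f$ listed in the statement, and $s_i, r_j \in \mathbb{Z}_{\ge 0}$. In $A[f^{-1}]$ each $q_j$ becomes a unit, so the factorization collapses to $g = u' \prod_i p_i^{s_i}$ with $u' \in A[f^{-1}]^{\times}$. By part~(1), each $p_i$ is still a non-unit prime in $A[f^{-1}]$, so the irreducibility hypothesis on $g$ in $A[f^{-1}]$ forces $\sum_i s_i \le 1$. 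Setting $p := u \prod_i p_i^{s_i} \in A$, the element $p$ is either a unit in $A$ (when $\sum_i s_i = 0$) or a prime in $A$ (when exactly one $s_i = 1$), and in either case qualifies as irreducible under the paper's convention. Then $g = p \prod_j q_j^{r_j}$ as claimed. The ``In particular'' statement is then immediate: if $g$ is coprime with $f$, then all $r_j = 0$, leaving $g = p$ irreducible in $A$.

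I do not expect a serious obstacle. The only subtlety is the non-standard convention that units count as irreducible: one must avoid reflexively excluding the case in which $g$ becomes a unit in $A[f^{-1}]$, since under this convention such a $g$ is still ``irreducible'' and the factorization $g = p \prod q_j^{r_j}$ with $p \in A^{\times}$ is legitimate. Beyond this bookkeeping, the argument is a direct translation of unique factorization in $A$ through the localization map.
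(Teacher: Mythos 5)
Your proof is correct. The paper itself gives no proof of this lemma --- it is listed among the appendix facts that are ``well-known \ldots{} or can be shown in the usual way'' --- and your argument is exactly the standard one: localize the prime factorization of $g$, observe that the primes dividing $f$ become units while the remaining primes stay prime in $A\left[f^{-1}\right]$, and conclude that at most one non-unit prime factor survives. You also correctly flag the one point where care is needed, namely that under Notation~\ref{notation_lattice} units count as irreducible, so the case $\sum_i s_i = 0$ must be admitted rather than excluded; your handling of it is right.
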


\begin{lemma}
	Let $A$ be a UFD (for example, $A = R \left[ X^{\pm 1}_1, X^{\pm 1}_2, \cdots \right]$) and let $f \in A$.
	Then, the following conditions are equivalent:
	\begin{enumerate}
		\item
		$f$ is irreducible in $A$.

		\item
		$f$ is irreducible in the Laurent polynomial ring $A \left[ Y^{\pm 1}_1 \right]$, where $Y_1$ is an independent variable (i.e., $Y_1$ is transcendental over $A$).

	\end{enumerate}
\end{lemma}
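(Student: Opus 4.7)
The plan is to reduce everything to a valuation/degree argument in the variable $Y_1$. The key structural fact I would establish first is a description of the units of $A[Y_1^{\pm 1}]$: since $A$ is an integral domain (indeed a UFD), the units of $A[Y_1]$ are exactly $A^\times$, and inverting $Y_1$ adds precisely the monomials $Y_1^k$ ($k \in \mathbb{Z}$). Hence
\[
  A[Y_1^{\pm 1}]^\times = \{\, u Y_1^k \mid u \in A^\times,\ k \in \mathbb{Z}\,\}.
\]
In particular, an element of $A[Y_1^{\pm 1}]$ that lies in $A$ is a unit there if and only if it lies in $A^\times$.

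Next I would prove the following structural claim: if $f \in A \setminus \{0\}$ factors as $f = g h$ in $A[Y_1^{\pm 1}]$, then both $g$ and $h$ are Laurent monomials in $Y_1$ over $A$. For this, let $a = v_{Y_1}(g)$ and $b = \deg_{Y_1}(g)$ denote the lowest and highest $Y_1$-exponents appearing in $g$, and similarly $c, d$ for $h$. Since $A$ is an integral domain, both the lowest and highest exponents are additive under multiplication, so
\[
  a + c = v_{Y_1}(f) = 0, \qquad b + d = \deg_{Y_1}(f) = 0.
\]
Combined with $a \le b$ and $c \le d$, these force $a = b$ and $c = d$, so $g = g_a Y_1^a$ and $h = h_{-a} Y_1^{-a}$ with $g_a, h_{-a} \in A \setminus \{0\}$, and therefore $f = g_a h_{-a}$ is a factorization back inside $A$.

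With this in hand both directions of the equivalence are short. For $(1) \Rightarrow (2)$: any factorization $f = gh$ in $A[Y_1^{\pm 1}]$ yields a factorization $f = g_a h_{-a}$ in $A$ by the structural claim; irreducibility of $f$ in $A$ (recalling the paper's convention that units are also called irreducible) makes one of $g_a, h_{-a}$ a unit in $A$, and then the corresponding $g$ or $h$ is a unit in $A[Y_1^{\pm 1}]$. For $(2) \Rightarrow (1)$: any factorization $f = gh$ in $A$ is \emph{a fortiori} a factorization in $A[Y_1^{\pm 1}]$, so one of $g, h$ is a unit in $A[Y_1^{\pm 1}]$; intersecting the monomial description of the unit group with $A$ forces that factor to lie in $A^\times$.

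There is no substantial obstacle; the only subtlety is to keep the paper's nonstandard convention (units are irreducible) consistently in mind, so the edge case where $f$ itself is a unit does not require separate treatment. The proof goes through identically whether $A$ is a UFD or merely an integral domain, since the UFD hypothesis is not actually used — only the integrality needed for additivity of the $Y_1$-valuation and $Y_1$-degree.
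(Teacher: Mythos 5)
Your proof is correct and complete: the determination of the unit group of $A[Y_1^{\pm 1}]$, the additivity of the lowest and highest $Y_1$-exponents over an integral domain, and the resulting reduction of any factorization in $A[Y_1^{\pm 1}]$ to one in $A$ together give both implications, consistently with the paper's convention that units count as irreducible. The paper itself states this lemma without proof as one of the ``well-known facts,'' and your argument is exactly the standard one it implicitly relies on; your side remark that only integrality of $A$ (not the UFD hypothesis) is used is also accurate.
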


\begin{lemma}\label{lemma:localization_intersection}
	Let $A$ be a UFD and let $f, f_1, \ldots, f_m \in A \setminus \{ 0 \}$.
	If $f$ is coprime with $f_i$ in $A$ for $i = 1, \ldots, m$, then $A$ is reproduced by localized rings as
	\[
		A = A \left[ f^{- 1} \right] \cap A \left[ f^{- 1}_1, \ldots, f^{- 1}_m \right].
	\]
\end{lemma}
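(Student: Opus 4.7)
The plan is to prove the nontrivial inclusion $A\left[f^{-1}\right] \cap A\left[f_1^{-1}, \ldots, f_m^{-1}\right] \subset A$ by analyzing the prime factorization of the denominator of an element of the intersection. The reverse inclusion is immediate since $A$ is contained in both localizations.

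First, I would fix $g$ in the intersection and view it inside the common field of fractions $K$ of the two localizations (which coincides with the field of fractions of $A$). Since $A$ is a UFD, I can write $g = a/b$ with $a, b \in A$ coprime (this is the ``lowest terms'' representation, unique up to units in $A$). The goal is to show that $b$ is a unit.

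Next, from $g \in A\left[f^{-1}\right]$, there exist $c \in A$ and $n \in \mathbb{Z}_{\ge 0}$ with $g = c/f^n$, hence $a f^n = b c$ in $A$. Since $\gcd(a, b) = 1$ in the UFD $A$, it follows that $b$ divides $f^n$, so every prime factor of $b$ (up to units) divides $f$. Symmetrically, from $g \in A\left[f_1^{-1}, \ldots, f_m^{-1}\right]$, I get that $b$ divides some product $f_1^{r_1} \cdots f_m^{r_m}$, so every prime factor of $b$ divides at least one $f_i$.

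Finally, any prime factor of $b$ would then divide both $f$ and some $f_i$, contradicting the hypothesis that $f$ and $f_i$ are coprime in $A$. Hence $b$ has no prime factors, so $b \in A^{\times}$ and $g = a/b \in A$. I do not anticipate a serious obstacle; the only point requiring mild care is ensuring that the ``coprime'' representation $g = a/b$ genuinely exists and that divisibility arguments in $A$ (via UFD factorization) transfer cleanly between the two localizations, which amounts to the standard fact that coprimality in a UFD is preserved under the cancellation step $af^n = bc$.
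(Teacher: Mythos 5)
Your argument is correct: writing an element of the intersection in lowest terms $a/b$ and using the UFD divisibility facts $b \mid f^n$ and $b \mid f_1^{r_1} \cdots f_m^{r_m}$ to force every prime factor of $b$ to divide both $f$ and some $f_i$ is exactly the standard proof. The paper states this lemma without proof as a well-known fact, and your argument is precisely the one it leaves to the reader.
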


\begin{lemma}\label{lemma:substitution}
	Let $K$ be a field and let $\overline{K}$ be its algebraic closure.
	Let us use $X = (X_1, \ldots, X_m)$ to denote variables and consider the Laurent polynomial ring $A = K \left[ X^{\pm 1} \right] = K \left[ X^{\pm 1}_1, \ldots, X^{\pm 1}_m \right]$.
	\begin{enumerate}
		\item\label{enumerate:substitution1}
		For $f \in A$, the following three conditions are equivalent.
		\begin{itemize}
			\item
			$f$ is a nonzero Laurent monomial in $X$.

			\item
			$f$ is a unit element of $A$.

			\item
			There does not exist a family $\alpha \in \left( \overline{K}^{\times} \right)^m$ such that
			\[
				f \Big|_{X \leftarrow \alpha} = 0,
			\]
			where we used ``$\Big|_{X \leftarrow \alpha}$'' to denote the substitution of $\alpha$ for $X$.

		\end{itemize}

		\item
		Let $f_1, f_2 \in A$.
		If there exists a family $\alpha \in \left( \overline{K}^{\times} \right)^m$ such that
		\[
			f_1 \Big|_{X \leftarrow \alpha} = 0, \quad
			f_2 \Big|_{X \leftarrow \alpha} \ne 0,
		\]
		then $f_1$ does not divide $f_2$ in $A$.

		\item\label{enumerate:substitution2}
		Let $f, f_1, \ldots, f_n \in A$ and suppose that $f$ is not a unit, i.e., $f$ is not a Laurent monomial.
		If $f$ is coprime with any of $f_1, \ldots, f_n$ in $A$, then there exists a family $\alpha \in \left( \overline{K}^{\times} \right)^m$ such that
		\[
			f \Big|_{X \leftarrow \alpha} = 0, \quad
			f_i \Big|_{X \leftarrow \alpha} \ne 0 \quad (i = 1, \ldots, n).
		\]

		\item
		All the above statements hold even when $A$ has infinitely many variables (i.e., $m = \infty$) since finitely many Laurent polynomials can contain only finitely many variables.

	\end{enumerate}
\end{lemma}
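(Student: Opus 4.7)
The plan is to treat statements (1), (2), and (4) as formal consequences of statement (3), and to concentrate the work on (3). Statement (2) is immediate: if $f_1 \mid f_2$ in $A$, write $f_2 = f_1 h$ so that $f_2(\alpha) = f_1(\alpha)\,h(\alpha)$ vanishes whenever $f_1(\alpha)$ does, contradicting the hypothesis. Statement (4) reduces at once to the finite-variable case because a finite collection of Laurent polynomials involves only finitely many variables, and one may set the remaining variables to, say, $1$. The equivalences in (1) between ``nonzero Laurent monomial'' and ``unit'' hold in any Laurent polynomial ring over an integral domain, and ``unit'' clearly implies ``never vanishes on $(\overline{K}^\times)^m$'' via the explicit inverse; the only nontrivial implication is (no torus zero) $\Rightarrow$ (monomial), which is the $n = 0$ special case of (3).

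The substance is therefore (3). Since $f$ is not a unit, the UFD $A$ factors it as a product of irreducibles, at least one of which, call it $p$, is a non-monomial; by hypothesis $p$ is coprime with each $f_i$. After multiplying by a Laurent monomial (a unit of $A$) I may replace $p$ by a polynomial representative $\tilde{p} \in K[X_1, \ldots, X_m]$ having no factor $X_k$. I claim $\tilde{p}$ is then irreducible in $K[X_1, \ldots, X_m]$: any factorization of $\tilde{p}$ in the polynomial ring is also a factorization in $A$, so one factor is a unit of $A$, i.e., a Laurent monomial, and since $\tilde{p}$ has no $X_k$ factor this monomial must be a scalar. Clearing denominators in the $f_i$ similarly produces $\tilde{f}_i \in K[X_1, \ldots, X_m]$, and coprimality of $p$ with $f_i$ in $A$ passes to coprimality of $\tilde{p}$ with $\tilde{f}_i$ in $K[X_1, \ldots, X_m]$ (a common polynomial factor would be common in $A$ and hence a Laurent monomial, which the absence of $X_k$ factors in $\tilde{p}$ again forces to be a scalar). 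In the integral domain $B = K[X_1, \ldots, X_m]/(\tilde{p})$, the images $\overline{X}_k$ and $\overline{\tilde{f}}_i$ are therefore all nonzero, so the localization
\[
C \;=\; B\bigl[\,\overline{X}_1^{-1}, \ldots, \overline{X}_m^{-1},\ \overline{\tilde{f}}_1^{-1}, \ldots, \overline{\tilde{f}}_n^{-1}\,\bigr]
\]
is a nonzero finitely generated $K$-algebra. A maximal ideal $\mathfrak{m}$ of $C$ has, by Zariski's lemma, residue field finite over $K$, and hence embedding into $\overline{K}$; the composition $K[X_1, \ldots, X_m] \to C \twoheadrightarrow C/\mathfrak{m} \hookrightarrow \overline{K}$ then supplies a point $\alpha \in \overline{K}^m$ at which $\tilde{p}$ vanishes while each $X_k$ and each $\tilde{f}_i$ is nonzero. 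Since $\tilde{p}$ divides $f$, and $\tilde{f}_i$ differs from $f_i$, each up to a unit of $A$, this $\alpha$ satisfies $\alpha \in (\overline{K}^\times)^m$, $f(\alpha) = 0$, and $f_i(\alpha) \ne 0$ for all $i$, as required.

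The main obstacle I anticipate is the bookkeeping in choosing the polynomial representative $\tilde{p}$ and in verifying the descent of irreducibility and coprimality from $A$ to $K[X_1, \ldots, X_m]$; modulo this, the argument is a single clean invocation of Hilbert's Nullstellensatz (via Zariski's lemma) applied to the principal open of the affine hypersurface $V(\tilde{p}) \subset \mathbb{A}^m_K$ defined by the nonvanishing of $X_1 \cdots X_m\,\tilde{f}_1 \cdots \tilde{f}_n$.
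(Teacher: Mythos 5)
Your proof is correct. The paper itself gives no proof of this lemma --- it is listed among the appendix facts that are ``well-known \dots{} or can be shown in the usual way'' --- so there is no argument of the author's to compare against; your reduction of everything to part~(3) and the Nullstellensatz argument for~(3) is precisely the standard route the paper is implicitly relying on. All the key verifications go through: every irreducible (non-unit) factor of $f$ in $A$ is automatically a non-monomial since the units of $A$ are exactly the nonzero Laurent monomials; the descent of irreducibility and coprimality from $A$ to $K[X_1,\ldots,X_m]$ for the normalized representative $\tilde p$ (no $X_k$ factor) is exactly as you argue; and the localization $C$ of the domain $K[X]/(\tilde p)$ at the nonzero images of $X_1,\ldots,X_m,\tilde f_1,\ldots,\tilde f_n$ is a nonzero finitely generated $K$-algebra, so Zariski's lemma produces the desired point of the torus. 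Two trivial edge cases are worth a sentence each if you write this up: if $f=0$ the factorization step does not apply, but then the coprimality hypothesis forces each $f_i$ to be a unit and any $\alpha\in(\overline{K}^\times)^m$ works; and in part~(1) the implication from ``no torus zero'' to ``nonzero monomial'' needs $f\ne 0$ separated out before invoking the $n=0$ case of~(3).
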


\begin{remark}
	The statements \ref{enumerate:substitution1} and \ref{enumerate:substitution2} do not hold if we only consider a family of elements of $K^{\times}$, instead of $\overline{K}^{\times}$.
	For example, $f = X^2_1 + 1 \in \mathbb{R}[X^{\pm 1}_1]$ is nonzero for the substitution of any real value for $X_1$.
\end{remark}

\begin{lemma}\label{lemma:algebraic_independence_monomial}
	Let $K \subset E$ be a field extension and let $f_1, \ldots, f_m \in E \setminus \{ 0 \}$ be distinct elements.
	Then, the following three conditions are equivalent:
	\begin{enumerate}
		\item
		$f_1, \ldots, f_m$ are algebraically independent over $K$.

		\item
		The set of monomials
		\[
			\{ f^{i_1}_1 \cdots f^{i_m}_m \mid i_1, \ldots, i_m \in \mathbb{Z}_{\ge 0} \}
		\]
		is linearly independent over $K$.

		\item
		The set of Laurent monomials
		\[
			\{ f^{i_1}_1 \cdots f^{i_m}_m \mid i_1, \ldots, i_m \in \mathbb{Z} \}
		\]
		is linearly independent over $K$.

	\end{enumerate}
\end{lemma}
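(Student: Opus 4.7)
The plan is to prove the equivalences by reducing everything to the evaluation homomorphism $\varphi \colon K[X_1, \ldots, X_m] \to E$ given by $X_k \mapsto f_k$, and then handling (3) by clearing denominators.

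For (1) $\Leftrightarrow$ (2), I would observe that algebraic independence of $f_1, \ldots, f_m$ over $K$ is by definition the injectivity of $\varphi$. Since the monomials $\{X_1^{i_1} \cdots X_m^{i_m} \mid (i_1, \ldots, i_m) \in \mathbb{Z}_{\ge 0}^m\}$ constitute a $K$-basis of $K[X_1, \ldots, X_m]$, the map $\varphi$ is injective precisely when the images $\{f_1^{i_1} \cdots f_m^{i_m}\}$ form a $K$-linearly independent family (which in particular forces them to be pairwise distinct as elements of $E$). This directly yields (1) $\Leftrightarrow$ (2).

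For (2) $\Leftrightarrow$ (3), the implication (3) $\Rightarrow$ (2) is immediate from $\mathbb{Z}_{\ge 0}^m \subset \mathbb{Z}^m$. For the converse, suppose we have a nontrivial relation
\[
    \sum_{(i_1, \ldots, i_m) \in S} c_{i_1, \ldots, i_m} \, f_1^{i_1} \cdots f_m^{i_m} = 0
\]
with $S \subset \mathbb{Z}^m$ finite and the coefficients $c_{i_1, \ldots, i_m} \in K$ not all zero. Choose $N_k \in \mathbb{Z}_{\ge 0}$ large enough that $i_k + N_k \ge 0$ for every $(i_1, \ldots, i_m) \in S$ and every $k$. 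Multiplying both sides by the nonzero element $f_1^{N_1} \cdots f_m^{N_m}$ (valid since $E$ is a field and all $f_k \ne 0$) produces a nontrivial linear dependence among polynomial monomials, contradicting (2). The same shift $(N_1, \ldots, N_m)$ keeps the multi-indices in $S$ distinct, so the relation remains nontrivial in the polynomial sense.

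There is no serious obstacle here; the only point requiring care is the interpretation of ``linear independence of the set of monomials'' in (2) and (3), which should be read as linear independence of the family indexed by multi-indices (so distinct multi-indices necessarily give distinct elements of $E$). The evaluation-map framework handles this transparently, since injectivity of $\varphi$ simultaneously yields distinctness and $K$-linear independence of the images of the monomial basis.
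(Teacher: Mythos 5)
The paper states this lemma without proof, listing it among the ``well-known facts,'' so there is no argument of the author's to compare against; your proof is the standard one and is correct. Both steps are sound: the identification of algebraic independence with injectivity of the evaluation map $\varphi \colon K[X_1, \ldots, X_m] \to E$ gives (1) $\Leftrightarrow$ (2) immediately, and clearing denominators by the nonzero element $f_1^{N_1} \cdots f_m^{N_m}$ gives (2) $\Rightarrow$ (3). Your remark about reading ``linear independence of the set of monomials'' as independence of the family indexed by multi-indices is not pedantry but genuinely necessary (e.g.\ for $f_1 = t$, $f_2 = t^2$ the \emph{set} of monomials is linearly independent while the family is not), and it matches how the lemma is used in the proof of Lemma~\ref{lemma:laurent_monomial_independent}.
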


\begin{lemma}\label{lemna:linear_independence_monomial}
	Let $K$ be a field and let $f_1, \ldots, f_N \in K[X^{\pm 1}_1, \ldots, X^{\pm 1}_m]$ be nonzero Laurent monomials.
	Then, $f_1, \ldots, f_N$ are linearly independent over $K$ if and only if their total degrees are all different.
	Here, the total degree of a nonzero Laurent monomial $c X^{j_1}_1 \cdots X^{j_m}_m$
	($c \in K^{\times}$, $j_1, \ldots, j_m \in \mathbb{Z}$)
	is defined as $(j_1, \ldots, j_m) \in \mathbb{Z}^m$.
\end{lemma}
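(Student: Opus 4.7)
The plan is to invoke the (well-known) fact that the Laurent polynomial ring $K[X_1^{\pm 1}, \ldots, X_m^{\pm 1}]$ is a free $K$-module with basis given by the set of pure Laurent monomials $\{X_1^{j_1} \cdots X_m^{j_m} \mid (j_1, \ldots, j_m) \in \mathbb{Z}^m\}$. This is immediate from the construction of $K[X_1^{\pm 1}, \ldots, X_m^{\pm 1}]$ as the localization of $K[X_1, \ldots, X_m]$ at the multiplicative set generated by the $X_i$, combined with the standard basis of $K[X_1, \ldots, X_m]$. Once this basis property is in hand, the lemma reduces to a bookkeeping exercise about which basis elements appear.

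For the direction ``distinct total degrees $\Rightarrow$ linear independence,'' I would write each $f_i = a_i X_1^{j_{i,1}} \cdots X_m^{j_{i,m}}$ with $a_i \in K^{\times}$ and denote its total degree by $\mathbf{j}_i \in \mathbb{Z}^m$. Assume the $\mathbf{j}_i$ are pairwise distinct and consider a $K$-linear relation $\sum_{i=1}^N c_i f_i = 0$. Expanding, this reads $\sum_{i=1}^N (c_i a_i) X^{\mathbf{j}_i} = 0$. Since $X^{\mathbf{j}_1}, \ldots, X^{\mathbf{j}_N}$ are distinct basis elements of the free $K$-module $K[X_1^{\pm 1}, \ldots, X_m^{\pm 1}]$, each coefficient $c_i a_i$ must vanish; as $a_i \ne 0$, we conclude $c_i = 0$ for every $i$.

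For the converse ``linear independence $\Rightarrow$ distinct total degrees,'' I would argue by contrapositive. Suppose two of the monomials, say $f_i$ and $f_{i'}$ with $i \ne i'$, share the same total degree $\mathbf{j}$. Then $f_i = a_i X^{\mathbf{j}}$ and $f_{i'} = a_{i'} X^{\mathbf{j}}$ with $a_i, a_{i'} \in K^{\times}$, so the nontrivial relation $a_{i'} f_i - a_i f_{i'} = 0$ exhibits $f_1, \ldots, f_N$ as linearly dependent over $K$.

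There is essentially no obstacle here; the only substantive content is the free-module structure of the Laurent polynomial ring, which is standard. The two directions are each a few lines, and no appeal to the other appendix lemmas (such as Lemma~\ref{lemma:algebraic_independence_monomial}) is required — in fact this lemma is naturally used to \emph{deduce} Lemma~\ref{lemma:algebraic_independence_monomial}, not the other way around.
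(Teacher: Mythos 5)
Your proof is correct. The paper itself states this lemma without proof, counting it among the ``well-known facts'' of the appendix, and your argument --- reducing both directions to the fact that the pure Laurent monomials $X^{\mathbf{j}}$, $\mathbf{j} \in \mathbb{Z}^m$, form a $K$-basis of $K[X^{\pm 1}_1, \ldots, X^{\pm 1}_m]$ --- is exactly the standard argument the author is implicitly relying on, so there is nothing to compare against and nothing to fix.
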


Note that the ``total degree'' used in this paper differs from that introduced in \cite{domain}.

\begin{lemma}\label{lemma:laurent_monomial_independent}
	Let $K$ be a field and let $f_1, \ldots, f_N \in K[X^{\pm 1}_1, \ldots, X^{\pm 1}_m]$ be nonzero Laurent monomials.
	Let $\Phi \in K[Y^{\pm 1}_1, \ldots, Y^{\pm 1}_N]$ and consider the substitution
	\[
		\Phi\Big|_{Y_i \leftarrow f_i (\forall i)} = \Phi(f_1, \ldots, f_N) \in K[X^{\pm 1}_1, \ldots, X^{\pm 1}_m].
	\]
	If $f_1, \ldots, f_N$ are algebraically independent over $K$ and $\Phi$ is not a Laurent monomial in $Y_1, \ldots, Y_N$, then $\Phi\Big|_{Y_i \leftarrow f_i (\forall i)}$ is not a Laurent monomial in $X_1, \ldots, X_m$, either.
\end{lemma}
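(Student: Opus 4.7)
The plan is to expand $\Phi$ into its Laurent monomial terms, substitute the $f_i$, and show that no cancellation can occur among the resulting Laurent monomials in $X_1,\ldots,X_m$. The non-cancellation will follow from combining Lemma~\ref{lemma:algebraic_independence_monomial} with Lemma~\ref{lemna:linear_independence_monomial}.

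First I would write
\[
    \Phi = \sum_{j=1}^{s} c_j\, Y_1^{a_{j,1}} \cdots Y_N^{a_{j,N}},
\]
where $c_j \in K^{\times}$ and the exponent vectors $\mathbf{a}_j = (a_{j,1},\ldots,a_{j,N}) \in \mathbb{Z}^N$ are pairwise distinct. The assumption that $\Phi$ is not a Laurent monomial is exactly the statement $s \ge 2$. Substituting produces
\[
    \Phi(f_1,\ldots,f_N) = \sum_{j=1}^{s} c_j\, f_1^{a_{j,1}} \cdots f_N^{a_{j,N}},
\]
and each summand $g_j := c_j f_1^{a_{j,1}} \cdots f_N^{a_{j,N}}$ is itself a nonzero Laurent monomial in $X_1,\ldots,X_m$, since a product of Laurent monomials is a Laurent monomial and $c_j \ne 0$.

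The heart of the argument is to show that the $g_j$ are $K$-linearly independent, so that summing them cannot produce a single Laurent monomial. By Lemma~\ref{lemma:algebraic_independence_monomial}, the algebraic independence of $f_1,\ldots,f_N$ over $K$ is equivalent to the linear independence of the full family of Laurent monomials $\{f_1^{i_1}\cdots f_N^{i_N} \mid (i_1,\ldots,i_N) \in \mathbb{Z}^N\}$ over $K$; in particular, the $s$ monomials $f_1^{a_{j,1}}\cdots f_N^{a_{j,N}}$ corresponding to the distinct exponent vectors $\mathbf{a}_j$ are linearly independent over $K$. Hence the $g_j$, which differ from them only by nonzero scalars, are also linearly independent over $K$. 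Applying Lemma~\ref{lemna:linear_independence_monomial} to the $g_j$ (viewed as nonzero Laurent monomials in $X_1,\ldots,X_m$), their total degrees in $\mathbb{Z}^m$ must all be distinct.

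Therefore $\Phi(f_1,\ldots,f_N) = \sum_j g_j$ is a sum of $s \ge 2$ nonzero Laurent monomials in $X_1,\ldots,X_m$ with pairwise distinct total degrees, so no cancellation can occur and the result has at least two distinct terms. In particular, $\Phi(f_1,\ldots,f_N)$ is not a Laurent monomial. I do not foresee any real obstacle here — the whole content is organizing the two cited lemmas in the right order; the one place to stay careful is keeping the correspondence ``distinct exponent vectors $\mathbf{a}_j \leftrightarrow$ distinct total degrees in $X$'' clean, which is immediate once linear independence of the $g_j$ is in hand.
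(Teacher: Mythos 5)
Your proof is correct and follows essentially the same route as the paper's own argument: expand $\Phi$ into its distinct monomial terms, use Lemma~\ref{lemma:algebraic_independence_monomial} to get $K$-linear independence of the substituted monomials, and then Lemma~\ref{lemna:linear_independence_monomial} to conclude that their total degrees in $X$ are pairwise distinct, so no cancellation can reduce the $s \ge 2$ terms to a single monomial. No gaps.
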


\begin{proof}
	We use the standard multi-index notation, such as
	\begin{itemize}
		\item 
		$X = (X_1, \ldots, X_m)$,

		\item 
		$f = (f_1, \ldots, f_N)$,

		\item
		$\Phi(f) = \Phi\Big|_{Y_i \leftarrow f_i (\forall i)}$,

		\item 
		$Y^j = Y^{j_1}_1 \cdots Y^{j_N}_N$
		for
		$j = (j_1, \ldots, j_N) \in \mathbb{Z}^N$.

	\end{itemize}
	Let us express $\Phi$ as
	\[
		\Phi = \sum_{j \in J} a_j Y^j,
	\]
	where $J \subset \mathbb{Z}^N$ is a finite set and $a_j \in K \setminus \{ 0 \}$ for $j \in J$.
	Since $f_1, \ldots, f_N$ are algebraically independent over $K$, Lemma~\ref{lemma:algebraic_independence_monomial} implies that the set $\{ f^j \mid j \in J \}$ is linearly independent over $K$.
	Therefore, by Lemma~\ref{lemna:linear_independence_monomial}, the total degrees of $f^j$ ($j \in J$) as Laurent monomials in $X$ are all different.
	Thus, the expression
	\[
		\Phi(f) = \sum_{j \in J} a_j f^j
	\]
	as an element of $K[X]$ is unique, i.e., it is impossible to decrease the number of the terms in the RHS.
	Since $\Phi$ is not a Laurent monomial in $Y$, $\# J$ must be greater than $1$.
	Hence, $\Phi(f)$ has at least $2$ terms and we conclude that it is not a Laurent monomial in $X$.
\end{proof}


\label{lastpage}
\end{document}